\renewcommand{\arraystretch}{1.2}
\DeclareMathOperator{\E}{\mathsf{E}}   
\DeclareMathOperator{\Var}{\mathsf{Var}}   
\DeclareMathOperator{\gP}{\mathsf{P}}   
\newcolumntype{L}[1]{>{\raggedright\let\newline\\\arraybackslash\hspace{0pt}}m{#1}}
\newcolumntype{C}[1]{>{\centering\let\newline\\\arraybackslash\hspace{0pt}}m{#1}}
\newcolumntype{R}[1]{>{\raggedleft\let\newline\\\arraybackslash\hspace{0pt}}m{#1}}
\newcounter{problem}
\newenvironment{problem}[1][]{\refstepcounter{problem}\par\begin{framed}\textbf{Problem~\theproblem: #1}}{\end{framed}}
\newtheorem{theorem}{Theorem}[section]
\newtheorem{proposition}[theorem]{Proposition}
\newtheorem{corollary}[theorem]{Corollary}
\newtheorem{lemma}[theorem]{Lemma}
\begin{document}

\pagenumbering{arabic}

\title{Strategic mean-variance investing under mean-reverting stock returns}
\date{March, 2021}
\author{S{\o}ren Fiig Jarner \\[3mm] Danish Labour Market Supplementary Pension Fund (ATP)}
\maketitle

\begin{abstract}
In this report we derive the strategic (deterministic) allocation to bonds and stocks resulting in the optimal mean-variance trade-off on a given investment horizon. The underlying capital market features a mean-reverting process for equity returns, and the primary question of interest is how mean-reversion effects the optimal strategy and the resulting portfolio value at the horizon. In particular, we are interested in knowing under which assumptions and on which horizons, the risk-reward trade-off is so favourable that the value of the portfolio is effectively bounded from below on the horizon. In this case, we might think of the portfolio as providing a stochastic excess return on top of a ``guarantee'' (the lower bound).

Deriving optimal strategies is a well-known discipline in mathematical finance. The modern approach is to derive and solve the Hamilton-Jacobi-Bellman (HJB) differential equation characterizing the strategy leading to highest expected utility, for given utility function. However, for two reasons we approach the problem differently in this work. First, we wish to find the optimal strategy depending on time only, i.e., we do not allow for dependencies on capital market state variables, nor the value of the portfolio itself. This constraint characterizes the strategic allocation of long-term investors. Second, to gain insights on the role of mean-reversion, we wish to identify the entire family of extremal strategies, not only the optimal strategies. To derive the strategies we employ methods from calculus of variations, rather than the usual HJB approach.
\end{abstract}

\bigskip

\noindent\emph{Keywords:} Deterministic strategies, mean-variance optimization, mean-reverting stock returns, calculus of variations.

\bigskip

\thispagestyle{empty}
\newpage

\tableofcontents
\newpage

\section{Introduction}
This is a technical document providing the theoretical foundation for a study of the risk-reward trade-off on long horizons in a capital market under which equity returns are mean-reverting. The problem is stated and solved in continuous-time based on the model analyzed in \cite{jarpre17}, see also \cite{munetal04}.

We are particularly interested in studying how the risk-reward trade-off depends on the degree of mean-reversion and the horizon, and under which assumptions the trade-off is so favourable that the (optimal) portfolio is effectively bounded from below. In this case, one might argue that the portfolio can be used as a ``hedge'' for a guaranteed payment, equal to the lower bound of the portfolio.

The optimal strategy itself is of course also of interest. In a model without mean-reversion it is optimal to hold a constant share of equities over time. This is intuitively clear, since under this assumption a given equity return has the same probability and the same effect on the (final) portfolio value regardless of when it occurs. In other words, it is equally likely and equally bad to lose, say, 10\% of your investment on the first day as it is to lose it on the last day.

Under mean-reverting equity returns the situation is more complicated. Intuitively we would expect early losses to have a smaller effect than late losses, since an early loss has a higher chance of being ``reverted'' than a late loss. Qualitatively, this ought to imply a higher (optimal) equity exposure in the beginning of the period and a lower equity exposure at the end of the period. Quantitatively, however, we need to take account of the fact that we only benefit from mean-reversion if we maintain an equity exposure throughout. Loosely speaking, to benefit from the assumed mean-reversion we need to hold equities when they (hopefully) rebound later on. Essentially, ``early'' risk is better rewarded than ``late'' risk, but only if we take ``late'' risk. Mathematically, the optimal strategy needs to balance early and late risk taking account of the benefit late risk has on early risk.

We are looking for optimal strategies depending on time only. More precisely, the strategy is allowed to depend on the {\em initial value} of the state variables only. In particular, the strategy is not allowed to depend on the unobservable, stochastic excess return on equities as it changes over the period, nor on the value of the portfolio. This assumption makes the strategy less tailored to the specific capital market model used to derive it, and it is therefore reasonable to expect that the strategy is also close to optimal under other, less stylized, types of mean-reversion. The assumption also ensures that the strategies can be implemented and tested in practice, without the need to identify latent variables. \cite{chrste15, chrste18} find optimal, deterministic strategies motivated by the fact that marketed life-cycle products are deterministic, but apart from their work only few results exist in the literature.

Focusing on deterministic strategies implies that the portfolio is log-normally distributed on all horizons, in the model employed. The problem can thereby be stated as optimizing the mean for given variance of the (log) portfolio value on the horizon of interest. We focus primarily on optimal equity strategies, but we also consider optimal rate strategies and jointly optimal strategies. The optimal equity strategies are derived by methods from calculus of variations in three steps. First, we derive an integral equation characterizing the optimal strategy. Second, we use the integral equation to derive a second order differential equation, which gives us the general solution. Finally, we derive and solve a system of equations for determining the constants of the general solution. The solution is explicit up to the presence of a Lagrange multiplier, which when varied gives the entire family of extremal, including optimal, strategies for varying levels of variance. We also provide illustrations of the results, discussing in particular the role of mean-reversion.

\subsection{Outline}
The rest of the paper is organized as follows. In Section~\ref{sec:Prelim} we set up the model and provide the distributional results on the portfolio value on future horizons needed for the optimization, in Section~\ref{sec:OptimalMV} we formulate and solve three optimization problems, finding optimal rate and equities strategies, both separately and jointly. The results are illustrated and discussed in Section~\ref{sec:numeric}, and Section~\ref{sec:Conclusion} summarizes and concludes on the findings. All proofs and technical details are in the appendices.

\pagebreak
\section{Preliminaries} \label{sec:Prelim}
In this section we state the underlying capital market model. We refer to the model as a factor model due to its intended use as a sparse representation of generic ``rates'' and ''equities''. Accordingly, we interpret the resulting strategies as profiles for generic interest and equity risk, respectively. How to obtain this risk in practise, i.e., the selection of specific bonds and stocks, is outside the scope of the model.

After introducing the capital market model, we state a distributional result for portfolios resulting from time-dependent strategies of the kind we will be considering. The formulation and solution of the optimization problem rely on this result and the accompanying integral representation.

\subsection{Capital market model}
The capital market model is described in \cite{jarpre17}, see also \cite{munetal04}, but for ease of reference we restate it here. The model of \cite{jarpre17} also contains realized inflation and break-even inflation (BEI) curves for pricing inflation-indexed bonds, and inflation swaps. Here, however, we disregard inflation and include only nominal interest rates and equities in the model.

Hence, the model features
\begin{itemize}
  \item A stochastic short rate ($r_t$)
  \item A bond market of all maturities with stochastic risk premium ($\lambda_t^r$)
  \item A stochastic equity index ($S_t$)
  \item A mean-reverting equity risk premium ($x_t$)
\end{itemize}

Since we are interested in portfolio optimization we need the so-called 'real world' dynamics of the state variables of the model (sometimes referred to as the $P$-dynamics as opposed to the $Q$-dynamics used for pricing). In addition to these, we also need the evolution of the term structure of interest rates (the yield curve) for pricing bonds. 

We assume that the short (nominal) interest rate follows an Ornstein-Uhlenbeck process,
\begin{align}  \label{eq:shortrate}
   dr_t = \kappa(\bar{r} - r_t)dt + \sigma_r dW^r_t,
\end{align}
where $\bar{r}$ is the long-run mean of the short interest rate, $\kappa$ describes the degree of mean reversion, $\sigma_r$ is the interest rate volatility, and $W^r$ is a standard Brownian motion.

The stock index (total return index) is assumed to evolve according to the dynamics
\begin{align}  \label{eq:stockindex}
   \frac{dS_t}{S_t} = (r_t + x_t)dt + \sigma_S dW^S_t,
\end{align}
where $r_t$ is the short rate from (\ref{eq:shortrate}), $x_t$ is the time-varying risk premium (expected excess return) from investing in stocks, $\sigma_S$ is the stock index volatility, and $W^S$ is a standard Brownian motion. We further assume that the risk premium follows an Ornstein-Uhlenbeck process,
\begin{align}  \label{eq:excessreturn}
   dx_t = \alpha(\bar{x} - x_t)dt - \sigma_x dW^S_t,
\end{align}
where $\bar{x}$ denotes the long-run equity risk premium, $\alpha$ describes the degree of mean reversion towards this level, and $\sigma_x$ is the risk premium volatility. We assume joint normality of the two Brownian motions $W^r$ and $W^S$ with correlation coefficient $\rho$.

Note that the stock index and the risk premium processes are locally perfectly negatively correlated, i.e., a stock return above or below its expected value will ``cause'' a change in the (future) risk premium in the opposite direction. This interaction induces a mean-reversion in the stock returns over time.

Finally, we assume that the term structure of interest rates is of the form considered by \cite{vas77}. Specifically, we assume that the (arbitrage free) price at time $t$ of a zero-coupon bond maturing at time $T\geq t$ is given by
\begin{align}
    p_t(T) = \exp\left\{G(\Delta) - H(\Delta)r_t\right\},  \label{eq:VasZCBPrice}
\end{align}
with $\Delta = T-t$,
\begin{align}
  H(\Delta) & = \frac{1}{a}\left(1-\exp\{-a \Delta\}\right), \\
  G(\Delta) & = \left(b - \frac{\sigma_r^2}{2a^2}  \right)\left(H(\Delta)-\Delta \right) - \frac{\sigma_r^2}{4a}H^2(\Delta), \label{eq:VasGdef}
\end{align}
and where $a$ and $b$ are parameters controlling the slope and level of the yield curves.\footnote{The price of a zero-coupon bond can be obtained by the usual risk-neutral valuation formula, $p_t(T)=\E_t^Q\left[-\int_t^T r_sds\right]$, where  $dr_t = a(b-r_t)dt + \sigma_r d\bar{W}^r_t$ with $\bar{W}^r$ being a $Q$-Brownian motion.}
The specification corresponds to the market price of interest rate risk being equal to
\begin{equation}\label{eq:lambdar}
  \lambda^r_t = \{(a-\kappa)r_t + \kappa \bar{r} - a b\}/\sigma_r.
\end{equation}

As shown in Section~5.1 of \cite{jarpre17}, the price dynamics of $p_t(T)$ for fixed $T$ has the form
\begin{equation}  \label{eq:pTdyn}
  \frac{dp_t(T)}{p_t(T)} = (r_t - \lambda^r_t\Psi(a,T-t)\sigma_r) dt - \Psi(a,T-t)\sigma_r dW^r_t,
\end{equation}
where
\begin{align} \label{eq:psidef}
  \Psi(a,t) \equiv \int_0^t e^{-a u}du =
  \begin{cases}
  t & \mbox{for } a = 0, \\
  \frac{1}{a}\left(1-e^{-a t} \right) & \mbox{for } a \neq 0. \\
  \end{cases}
\end{align}

From (\ref{eq:pTdyn}) we see that the (negative) market price of interest rate risk, $-\lambda^r_t$, can be interpreted as the risk-reward trade-off at a given point in time, i.e., as the excess return per unit of volatility. It is the continuous-time analogue to the Sharpe ratio used in portfolio construction and benchmarking, cf.\ \cite{sha66, sha94}. In general, the market price of interest rate risk depends on the current short rate, in particular, the market price of interest rate risk is stochastic. For subsequent use, we define the market price of equity risk as the excess return per unit of volatility when investing in equities,
\begin{equation}\label{eq:lambdaS}
  \lambda^S_t = x_t/\sigma_S.
\end{equation}

\subsection{Portfolio dynamics}
We now consider the dynamics of a portfolio exposed to (interest) rate and equity risk. In general, the dynamics of the portfolio value, $V_t$, is given by
\begin{equation}\label{eq:Vdyn}
   \frac{dV_t}{V_t} = \left(r_t + f_t' \lambda_t \right)dt + f_t' dW_t,
\end{equation}
where $f_t=(f^r_t,f^S_t)'$ is the vector of (factor) exposures to rate and equity risk, respectively, $\lambda_t=(\lambda_t^r,\lambda_t^S)'$ is the vector of market prices of risk given by (\ref{eq:lambdar}) and (\ref{eq:lambdaS}), and $W_t = (W_t^r,W_t^S)'$ is the vector of driving Brownian motions.

The exposures are measured in volatility and formula (\ref{eq:Vdyn}) succinctly states that the excess return, i.e., the return in excess of the short rate, is the market prices of risk weighted by the exposure. As an example, if $f^r_t \equiv 0$ and $f^S_t \equiv \sigma_S$ we get
\begin{equation}
   \frac{dV_t}{V_t} = \left(r_t + \sigma_S \lambda^S_t \right)dt + \sigma_S dW^S_t = \left(r_t + x_t \right)dt + \sigma_S dW^S_t,
\end{equation}
which is equal to the dynamics of the equity index, as expected.

In general, assuming only that the strategies are adapted and sufficiently regular, we have the following integral representation of the portfolio process
\begin{equation}\label{eq:Vintrep}
   V_t = V_0\exp\left\{\int_0^t r_s ds + \int_0^t f_s'\lambda_s ds - \frac{1}{2}\int_0^t f_s' C f_sds + \int_0^t f_s'dW_s   \right\},
\end{equation}
where $C$ is the (instantaneous) correlation matrix given by
\begin{equation}\label{eq:Cdef}
   C = \begin{pmatrix}
         1 & \rho \\
         \rho & 1
       \end{pmatrix}.
\end{equation}
This follows by an application of the multi-dimensional version of It\^{o}'s lemma, see e.g.\ Proposition 4.18 of \cite{bjo09}. Let $X_t = \int_0^t r_s ds + \int_0^t f_s'\lambda_s ds - \frac{1}{2}\int_0^t f_s' C f_sds + \int_0^t f_s'dW_s$, and use It\^{o}'s lemma to calculate the differential of $V_t = G(X_t)$, where $G(x)= V_0 \exp(x)$,
\begin{eqnarray*}
  d V_t &=&  \frac{\partial G}{\partial x}dX_t + \frac{1}{2}\frac{\partial^2 G}{\partial x^2} dX_t dX_t \\[2mm]
        &=& V_t\left[\left(r_t + f_t'\lambda_t - \frac{1}{2}f_t' C f_t\right) dt + f_t'dW_t \right] + \frac{1}{2}V_t\left[ \left(f^r_t\right)^2 + \left(f^S_t\right)^2 +2\rho f^r_t f^S_t \right]dt \\[2mm]
        &=& V_t \left[\left(r_t + f_t'\lambda_t \right)dt + f_t'dW_t \right],
\end{eqnarray*}
which is the same as (\ref{eq:Vdyn}). Since the right-hand side of (\ref{eq:Vintrep}) equals $V_0$ for $t=0$  we conclude (under standard regularity conditions) that (\ref{eq:Vintrep}) is the solution to (\ref{eq:Vdyn}) as claimed. In the second equality of the above calculations we have used the formal multiplication rules: $dW^r_t dW^r_t = dW^S_t dW^S_t = dt$ and $dW^r_t dW^S_t = \rho dt$.

\subsection{Portfolio distribution}
As mentioned in the introduction, we are aiming at finding the optimal risk-reward trade-off for a deterministic strategy, i.e., a strategy that depends on time only. We are aided in this task by the fact that under this restriction, the portfolio is log-normally distributed at every horizon. In this section we derive the necessary distributional results.

We need the following integral representations for the risk premium, the short rate and the integrated short rate from Section~3 of \cite{jarpre17}:
\begin{align}
  x_t & = \bar{x} + e^{-\alpha t}(x_0 - \bar{x}) - \sigma_x \int_0^t e^{-\alpha(t-s)}dW^S_s, \label{eq:xsol} \\[2mm]
  r_t & = \bar{r} + e^{-\kappa t}(r_0 - \bar{r}) + \sigma_r \int_0^t e^{-\kappa(t-s)}dW^r_s, \label{eq:rsol} \\[2mm]
 \int_0^t r_s ds    & =  t \bar{r} + (r_0-\bar{r})\Psi(\kappa,t) + \sigma_r \int_0^t \Psi(\kappa,t-s) dW^r_s, \label{eq:intrsol}
\end{align}
where $\Psi$ is given by (\ref{eq:psidef}). Since $\lambda^r_t$ and $\lambda^S_t$ are linear in $r_t$ and $x_t$, respectively, it follows from the expressions above that the integrals at the right-hand side of (\ref{eq:Vintrep}) are either deterministic, or stochastic integrals with deterministic integrands and Brownian integrators. From this observation, the claimed log-normality follows; the mean and variance are given by the following theorem.

\begin{theorem} \label{thm:VTrep}
Assume that $f_t = (f^r_t,f^S_t)'$ is deterministic (depends on time only). For given $T\geq 0$, the portfolio value at time $T$ can be expressed as
\begin{equation}\label{eq:Vrep}
   V_T = V_0\exp\left\{m^0_T + m^r_T + m^S_T - \rho\int_0^T f^r_u f^S_u du + \int_0^T h_u^r dW^r_u + \int_0^T h^S_u dW^S_u \right\},
\end{equation}
where
\begin{align}
  m^0_T & = T \bar{r} + (r_0-\bar{r})\Psi(\kappa,T),  \label{eq:m0Tdef} \\[2mm]
  m^r_T & = \frac{a(\bar{r}-b)}{\sigma_r}\int_0^T f^r_s ds + \frac{a-\kappa}{\sigma_r}\int_0^T e^{-\kappa s}(r_0 - \bar{r})f^r_s ds - \frac{1}{2}\int_0^T \left(f^r_s\right)^2ds,  \\[2mm]
  m^S_T & = \frac{1}{\sigma_S}\int_0^T f^S_s\left(\bar{x} + e^{-\alpha s}(x_0 - \bar{x})\right)ds - \frac{1}{2}\int_0^T \left(f^S_s\right)^2ds, \label{eq:mSTdef} \\[2mm]
  h^r_u & = \sigma_r\Psi(\kappa,T-u) + f^r_u + (a-\kappa)\int_u^T f^r_s e^{-\kappa(s-u)}ds,  \\[2mm]
  h^S_u & = f^S_u - \frac{\sigma_x}{\sigma_S}\int_u^T f^S_s e^{-\alpha(s-u)}ds.  \label{eq:hSudef}
\end{align}
In particular, $V_T/V_0$ is log-normally distributed with mean and variance given by
\begin{align}
   \mu_T      & = m^0_T + m^r_T + m^S_T - \rho\int_0^T f^r_u f^S_u du,  \label{eq:VTmean} \\[2mm]
   \sigma_T^2 & = \int_0^T \left(h_u^r\right)^2 + \left(h^S_u\right)^2 + 2\rho h^r_u h^S_u du. \label{eq:VTvar}
\end{align}
\end{theorem}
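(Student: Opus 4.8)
The plan is to start from the integral representation (\ref{eq:Vintrep}) of the portfolio process, evaluated at $t=T$, and rewrite each of the five integrals in the exponent using the explicit solutions (\ref{eq:xsol})--(\ref{eq:intrsol}) for $x_s$, $r_s$ and $\int_0^s r_v\,dv$, together with the linear expressions (\ref{eq:lambdar}) and (\ref{eq:lambdaS}) for $\lambda^r$ and $\lambda^S$. Since $f$ is deterministic, every term splits into a deterministic contribution and an It\^o integral with a deterministic integrand against $W^r$ or $W^S$; the task is to collect the deterministic pieces into $m^0_T+m^r_T+m^S_T-\rho\int_0^T f^r_u f^S_u\,du$ and the $dW^r$, $dW^S$ integrands into $h^r_u$, $h^S_u$.

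I would treat the terms one by one. The term $\int_0^T r_s\,ds$ is replaced directly by (\ref{eq:intrsol}), producing $m^0_T$ and the leading summand $\sigma_r\Psi(\kappa,T-u)$ of $h^r_u$. In $\int_0^T f^r_s\lambda^r_s\,ds$ I insert (\ref{eq:lambdar}) and then (\ref{eq:rsol}): the deterministic parts combine, using the identity $(a-\kappa)\bar r+\kappa\bar r-ab=a(\bar r-b)$, into the first two summands of $m^r_T$, while the double integral $\int_0^T f^r_s\int_0^s e^{-\kappa(s-v)}\,dW^r_v\,ds$ is converted by the stochastic Fubini theorem into $\int_0^T\bigl((a-\kappa)\int_v^T f^r_s e^{-\kappa(s-v)}\,ds\bigr)dW^r_v$, which supplies the remaining part of $h^r_u$. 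Symmetrically, $\int_0^T f^S_s\lambda^S_s\,ds$ uses (\ref{eq:lambdaS}) and (\ref{eq:xsol}); its deterministic part is the first summand of $m^S_T$, and stochastic Fubini sends the double integral to the $-\tfrac{\sigma_x}{\sigma_S}\int_u^T f^S_s e^{-\alpha(s-u)}\,ds$ term in $h^S_u$. The quadratic term $-\tfrac12\int_0^T f_s'Cf_s\,ds$ expands via (\ref{eq:Cdef}) into $-\tfrac12\int (f^r)^2$ and $-\tfrac12\int (f^S)^2$ (absorbed into $m^r_T$ and $m^S_T$) plus the cross term $-\rho\int_0^T f^r_u f^S_u\,du$ displayed in (\ref{eq:Vrep}), and $\int_0^T f_s'\,dW_s$ contributes $f^r_u$ to $h^r_u$ and $f^S_u$ to $h^S_u$. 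Summing the $dW^r$ and $dW^S$ integrands yields exactly the asserted $h^r_u$, $h^S_u$, which establishes (\ref{eq:Vrep}).

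For the distributional statement, note that $f$, and hence $h^r$ and $h^S$, are deterministic, so $\int_0^T h^r_u\,dW^r_u+\int_0^T h^S_u\,dW^S_u$ is a centered Gaussian random variable; therefore $V_T/V_0$ is log-normal with log-mean equal to the deterministic part of the exponent, i.e.\ $\mu_T$ as in (\ref{eq:VTmean}). Its log-variance follows from the It\^o isometry applied to each component, together with the rule $dW^r_u\,dW^S_u=\rho\,du$ for the covariance of the two stochastic integrals, giving $\sigma_T^2=\int_0^T (h^r_u)^2+(h^S_u)^2+2\rho h^r_u h^S_u\,du$ as in (\ref{eq:VTvar}).

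The only substantive technical point is the interchange of the $ds$-integral with the It\^o integrals in $dW^r_v$ and $dW^S_v$ via the stochastic Fubini theorem; this is legitimate because the integrands are deterministic and, under the standing regularity assumptions on $f$, square-integrable on $[0,T]^2$, so the classical version of the theorem applies. Everything else is bookkeeping: expanding the market prices of risk, substituting the Ornstein--Uhlenbeck solutions, and matching deterministic against stochastic pieces term by term.
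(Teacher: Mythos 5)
Your proposal is correct and follows essentially the same route as the paper's proof: substitute the Ornstein--Uhlenbeck solutions (\ref{eq:xsol})--(\ref{eq:intrsol}) and the affine expressions for $\lambda^r_s$, $\lambda^S_s$ into (\ref{eq:Vintrep}), use the identity $(\kappa\bar r - ab)+(a-\kappa)\bar r = a(\bar r - b)$ for the deterministic part, interchange the order of integration (stochastic Fubini) to fold the double integrals into the $dW^r$ and $dW^S$ integrands, and read off the log-normal mean and variance from the resulting Gaussian exponent. The paper merely organizes the bookkeeping slightly differently, first splitting the exponent into $I^r_T + I^S_T - \rho\int_0^T f^r_s f^S_s\,ds$ before performing the same term-by-term substitutions.
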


Note that the weight functions $h^r$ and $h^S$, which measure the impact of the risk sources $W^r$ and $W^S$ on the horizon value of the portfolio, contain both a direct and an indirect effect. Due to mean-reversion, the risk (innovation) at time $u$ measured by $dW^r_u$ and $dW^S_u$, respectively, has both a direct effect which depends on the exposure at time $u$ (the terms $f^r_u$ and $f^S_u$ in $h^r_u$ and $h^S_u$, respectively) and an indirect effect which depends on the {\em subsequent} exposure to the risk source (the integrals from $u$ to $T$). The direct effect depends only on the current exposure, while the indirect effect depends over the (future) exposure on the remaining horizon.

We also note that the indirect effect for rates disappears if $a=\kappa$, which corresponds to the case where the market price of interest rate risk is constant, $\lambda^r_t \equiv \kappa(\bar{r}-b)/\sigma_r$. Thus, it is not the mean-reverting nature of the short rate process itself, but rather the state dependent market price of risk, that causes the indirect effect for rates. For equities, the indirect effect is caused by the feedback mechanism from equity risk to (future) equity risk premia. The size of this effect depends on the ``loading'' ratio $\sigma_x/\sigma_S$ and the degree of persistence, $\alpha$, in the equity risk premium process.

In principle, Theorem~\ref{thm:VTrep} can be used for joint optimization of the mean-variance trade-off at a given horizon. However, to make the computations easier to follow and to highlight the structure of the solution in special cases, we solve the problem in stages under various simplifying assumptions. We are primarily interested in the effect of mean-reverting equity returns and consequently we keep this feature, while we make the simplifying assumptions that the market price of interest rate risk is constant ($a=\kappa$) and that rate and equity factors are independent ($\rho=0$).

We state without proof a simplification of Theorem~\ref{thm:VTrep} to be used in the following.

\begin{corollary} \label{cor:VTratesonly}
Assume that $f^r_t$ is deterministic, $f^S_t \equiv 0$ and $a=\kappa$. For given $T > 0$, $V_T/V_0$ is log-normally distributed with mean and variance given by
\begin{align}
   \mu_T      & = m^0_T + \lambda^r \int_0^T f^r_s ds - \frac{1}{2}\int_0^T \left(f^r_s\right)^2ds,  \label{eq:VTmeanratesonly} \\[2mm]
   \sigma_T^2 & = \int_0^T \left[\sigma_r\Psi(\kappa,T-s) + f^r_s  \right]^2 ds,   \label{eq:VTvarratesonly}
\end{align}
where $m^0_T$ is given by (\ref{eq:m0Tdef}) of Theorem~\ref{thm:VTrep}, and $\lambda^r = \kappa(\bar{r}-b)/\sigma_r$.
\end{corollary}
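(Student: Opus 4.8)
The plan is to obtain Corollary~\ref{cor:VTratesonly} as a direct specialization of Theorem~\ref{thm:VTrep} under the two stated simplifications $f^S_t \equiv 0$ and $a = \kappa$. Note first that with $f^S \equiv 0$ the vector $f_t = (f^r_t,0)'$ is still deterministic, so the hypotheses of Theorem~\ref{thm:VTrep} remain in force and $V_T/V_0$ is still log-normally distributed with parameters $\mu_T$ and $\sigma_T^2$ given by \eqref{eq:VTmean}--\eqref{eq:VTvar}; it only remains to simplify those two expressions. (The correlation $\rho$ is irrelevant once $f^S \equiv 0$, which is why the corollary need not restrict it.)

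First I would substitute $f^S \equiv 0$. From \eqref{eq:mSTdef} this gives $m^S_T = 0$, from \eqref{eq:hSudef} it gives $h^S_u \equiv 0$, and the cross term satisfies $\rho\int_0^T f^r_u f^S_u\,du = 0$. Hence \eqref{eq:VTmean} collapses to $\mu_T = m^0_T + m^r_T$ and \eqref{eq:VTvar} to $\sigma_T^2 = \int_0^T \bigl(h^r_u\bigr)^2\,du$.

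Next I would substitute $a = \kappa$ in the surviving terms. In the expression for $m^r_T$ in Theorem~\ref{thm:VTrep} the middle summand carries the factor $a-\kappa$ and therefore vanishes, leaving $m^r_T = \frac{a(\bar{r}-b)}{\sigma_r}\int_0^T f^r_s\,ds - \frac{1}{2}\int_0^T \bigl(f^r_s\bigr)^2\,ds$; likewise the integral term in $h^r_u$ carries the factor $a-\kappa$ and vanishes, leaving $h^r_u = \sigma_r\Psi(\kappa,T-u) + f^r_u$. I would then identify the coefficient $\frac{a(\bar{r}-b)}{\sigma_r} = \frac{\kappa(\bar{r}-b)}{\sigma_r} = \lambda^r$, which is precisely the (now constant) value of $\lambda^r_t$ from \eqref{eq:lambdar} when $a=\kappa$, as already noted in the remarks following Theorem~\ref{thm:VTrep}. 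Substituting back and relabelling the integration variable $u\mapsto s$ in the variance yields \eqref{eq:VTmeanratesonly} and \eqref{eq:VTvarratesonly}.

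There is no genuine obstacle here: the substance is entirely contained in Theorem~\ref{thm:VTrep}, and the corollary is pure bookkeeping. The only things to be careful about are (i) checking that every term involving the equity factor and every ``indirect'' rate term is annihilated by one of the two assumptions, and (ii) confirming that the surviving linear coefficient coincides with $\lambda^r$ as defined in the text; both are immediate.
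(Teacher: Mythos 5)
Your specialization is correct and is exactly what the paper intends: the corollary is stated ``without proof'' as a simplification of Theorem~\ref{thm:VTrep}, and your bookkeeping (killing $m^S_T$, $h^S_u$ and the cross term via $f^S\equiv 0$, killing the $a-\kappa$ terms in $m^r_T$ and $h^r_u$ via $a=\kappa$, and identifying $a(\bar{r}-b)/\sigma_r=\lambda^r$) is precisely the omitted derivation. No gaps.
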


\pagebreak
\section{Optimal mean-variance strategies} \label{sec:OptimalMV}
We are now ready to state the optimization problem(s) and derive the solution. As previously announced we proceed in stages: rates only, equities only, and finally combined rates and equities assuming independence of the two factors.

\subsection{Optimal rate strategies} \label{sec:OptimalBond}
We first want to find the deterministic, rates only investment strategies that lead to the optimal (log) mean-variance trade-off on a given horizon. We consider only the case of a constant market price of interest rate risk, corresponding to $a=\kappa$. Clearly, a constant market price of interest rate risk is a mathematical simplification. However, at this stage we are primarily interested in getting an intuition for the structure of the optimal strategies in a simple setup.\footnote{The solution technique developed in the next section for handling mean-reverting stock returns can be used to solve the rates only problem for general $a$, but the resulting strategies are less intuitive and harder to interpret.}

Corollary~\ref{cor:VTratesonly} gives the mean and variance of the log portfolio value and the problem can therefore be phrased as

\begin{problem} \label{problem:RatesOnly}
  Find $f^r$ such that (\ref{eq:VTmeanratesonly}) is maximized for given value of (\ref{eq:VTvarratesonly}).
\end{problem}

For given horizon, $T$, the solution to Problem~\ref{problem:RatesOnly} takes the form of a family of strategies, one for each value of the variance $\sigma_T^2$. Members of this family are referred to as {\em optimal bond strategies}. More generally, we refer to bond strategies maximizing, or minimizing, (\ref{eq:VTmeanratesonly}) for given value of (\ref{eq:VTvarratesonly}) as {\em extremal bond strategies}.

\begin{theorem} \label{thm:OptimalBond}
Assume $a=\kappa$ such that the market price of interest rate risk is constant, $\lambda^r = \kappa(\bar{r}-b)/\sigma_r$. For given horizon $T > 0$, let  $g^r_s = \sigma_r\Psi(\kappa,T-s)$ for $0\leq s\leq T$.
The extremal bond strategies on horizon $T$ are of the form
\begin{equation}
    f^r_s = \frac{\lambda^r+2\nu g^r_s}{1-2\nu} \quad (0\leq s\leq T), \label{eq:OptimalBond}
\end{equation}
where $\nu$ depends on the prescribed value of the variance $\sigma_T^2$.

The optimal bond strategies have $\nu < 1/2$, and the bond strategies minimizing the mean for given value of the variance have $\nu > 1/2$.
\end{theorem}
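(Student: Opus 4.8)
The plan is to recognise Problem~\ref{problem:RatesOnly} as the constrained optimisation of two integral functionals, (\ref{eq:VTmeanratesonly}) and (\ref{eq:VTvarratesonly}), whose integrands depend on $f^r_s$ \emph{pointwise}, with no derivative of $f^r$ appearing. Writing $\mu_T(f)$ and $\sigma_T^2(f)$ for the values of (\ref{eq:VTmeanratesonly}) and (\ref{eq:VTvarratesonly}) under a deterministic strategy $f$, I would attach a Lagrange multiplier $\nu$ to the variance constraint and consider
\begin{equation*}
  \mathcal{L}_\nu(f^r) \;=\; \mu_T(f^r) + \nu\,\sigma_T^2(f^r) \;=\; m^0_T + \int_0^T \Big[\lambda^r f^r_s - \tfrac12 (f^r_s)^2 + \nu\big(g^r_s + f^r_s\big)^2\Big]\,ds .
\end{equation*}
Since the integrand $L(s,y) = \lambda^r y - \tfrac12 y^2 + \nu(g^r_s+y)^2$ contains no $\dot y$, the Euler--Lagrange equation degenerates to the pointwise stationarity condition $\partial L/\partial y = 0$, namely $\lambda^r - f^r_s + 2\nu(g^r_s + f^r_s) = 0$; solving for $f^r_s$ yields exactly (\ref{eq:OptimalBond}) whenever $\nu \neq 1/2$.

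Next I would pin down the dependence of $\nu$ on the prescribed variance. Substituting (\ref{eq:OptimalBond}) back gives the simplification $g^r_s + f^r_s = (g^r_s + \lambda^r)/(1-2\nu)$, so that by (\ref{eq:VTvarratesonly})
\begin{equation*}
  \sigma_T^2 \;=\; \frac{K_T}{(1-2\nu)^2}, \qquad K_T \;:=\; \int_0^T \big(\sigma_r\Psi(\kappa,T-s)+\lambda^r\big)^2\,ds \;>\;0 .
\end{equation*}
Hence $(1-2\nu)^2 = K_T/\sigma_T^2$, so for each target $\sigma_T^2>0$ there are exactly the two roots $\nu = \tfrac12\big(1\pm\sqrt{K_T/\sigma_T^2}\big)$; the ``$-$'' root lies in $(-\infty,1/2)$ and the ``$+$'' root in $(1/2,\infty)$, and as $\sigma_T^2$ runs over $(0,\infty)$ each branch sweeps out its entire interval. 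This is the asserted correspondence between $\nu$ and $\sigma_T^2$.

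It remains to decide which branch maximises and which minimises. Here the second-order information is immediate: $\partial^2 L/\partial y^2 = 2\nu - 1$, so $\mathcal{L}_\nu$ is a strictly concave functional of $f^r$ when $\nu < 1/2$ and strictly convex when $\nu > 1/2$. In the concave case the stationary strategy $f^{r\ast}$ from (\ref{eq:OptimalBond}) is the \emph{global} maximiser of $\mathcal{L}_\nu$, so for any competing $f$ with $\sigma_T^2(f) = \sigma_T^2(f^{r\ast})$ one gets $\mu_T(f) = \mathcal{L}_\nu(f) - \nu\sigma_T^2(f) \le \mathcal{L}_\nu(f^{r\ast}) - \nu\sigma_T^2(f^{r\ast}) = \mu_T(f^{r\ast})$; thus (\ref{eq:OptimalBond}) with $\nu < 1/2$ is the mean-maximising (optimal) bond strategy at that variance level. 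The convex case $\nu > 1/2$ reverses the inequality and gives the mean-minimising strategy, completing the proof.

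Because the Euler--Lagrange equation degenerates, the first-order analysis is essentially algebraic, so I expect the only genuinely delicate step to be the passage from ``Lagrangian stationarity'' to ``global extremum on the constraint surface'': one must invoke the global concavity/convexity of $\mathcal{L}_\nu$ (equivalently, that $y \mapsto L(s,y)$ is a concave, resp.\ convex, parabola whose vertex is hit pointwise) to guarantee that (\ref{eq:OptimalBond}) gives true maxima/minima rather than mere critical points, and to exclude the other branch. A minor accompanying point is to confirm the range of attainable variances and, if one wishes, to treat the degenerate limit $\sigma_T^2 \to 0$ (the perfectly hedging strategy $f^r_s = -\sigma_r\Psi(\kappa,T-s)$, corresponding to $|\nu|\to\infty$) as a separate boundary case.
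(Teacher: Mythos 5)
Your proposal is correct and follows essentially the same route as the paper: a Lagrange-multiplier functional whose integrand contains no $\dot f$, so the Euler--Lagrange equation degenerates to the pointwise condition $\lambda^r - f^r_s + 2\nu(g^r_s+f^r_s)=0$, with the sign of the second variation $2\nu-1$ separating maximizers from minimizers. Your additional steps (the explicit relation $\sigma_T^2=K_T/(1-2\nu)^2$ and the global concavity argument upgrading stationarity to a global constrained extremum) are welcome refinements but appear in, or strengthen, what the paper already does in Section~\ref{sec:closedFormMeanVar} and the proof in Appendix~\ref{app:proofs}.
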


For $\nu=0$, Theorem~\ref{thm:OptimalBond} yields $f_s^r=\lambda^r$, which is the strategy (globally) maximising (\ref{eq:VTmeanratesonly}). Recall from (\ref{eq:pTdyn}) that long positions in bonds correspond to negative exposures to the underlying Brownian motion, $W^r$, and that positive excess returns for bonds correspond to $\lambda^r<0$. Thus, the maximizing strategy, $f_s^r=\lambda^r$, is long in bonds when bonds have positive excess returns, and short in bonds when bonds have negative excess returns. We typically expect positive excess returns for bonds, but we cannot rule out the possibility of negative excess returns, $\lambda^r>0$, neither mathematically, nor in practise. Also note, that the maximising strategy amounts to a constant exposure to interest rate risk over time; this is the only constant, optimal strategy.

The proof of Theorem~\ref{thm:OptimalBond} is a variational argument where $\nu$ plays the role of a Lagrange multiplier introduced to handle the variance constraint. At an extremal point (strategy) the gradients of the mean functional (the objective) and the variance functional (the constraint) are parallel with proportionality coefficient $-\nu$. If $\nu<0$ we are at a point where relaxing the constraint, i.e., allowing a larger variance, leads to a larger mean. Conversely, if $\nu > 0$ we are at a point where a larger variance leads to a smaller mean. With this in mind, we can categorize the optimal strategies of Theorem~\ref{thm:OptimalBond} as follows.

In the limit $\nu$ tending to minus (or plus) infinity, we have the strategy $f_s^r=-g_s^r$, which corresponds to the buy-and-hold strategy of a bond that matures at time $T$, cf.~(\ref{eq:pTdyn}). This strategy has $\sigma_T^2=0$, and it is the optimal (and only) strategy with no variance on the horizon. For $\nu<0$ we obtain optimal strategies where added risk is rewarded, i.e., as $\nu$ is increased from minus infinity to zero we get optimal strategies with higher variance and higher mean. For $\nu=0$ we obtain the maximal mean possible. For $0 < \nu < 1/2$, we have optimal strategies where added risk is penalized, i.e., strategies where the prescribed value of the variance can only be achieved by exposures so high that the quadratic term in (\ref{eq:VTmeanratesonly}) impairs the mean. Mathematically, these strategies are optimal in the sense that they achieve the highest possible mean given the variance, but in practice they are sub-optimal: for each of these strategies there exist strategies with lower variance and identical or higher mean. Hence, for practical purposes we are interested only in strategies of form (\ref{eq:OptimalBond}) with $\nu\leq 0$.

\subsubsection{Interpretation of the optimal bond strategy} \label{sec:OptimalBondInterpretation}
The optimal strategy of Theorem~\ref{thm:OptimalBond} can be written $f^r_s = a(\nu)(-g_s^r) + (1-a(\nu))\lambda^r$, where $a(\nu)=-2\nu/(1-2\nu)$. For $\nu \leq 0$, we have $0\leq a(\nu) <1$, and hence we can interpret the strategy as a convex combination of two optimal strategies: $T$-bonds and constant risk exposure of size $\lambda^r$. It is natural to interpret the $T$-bonds as a risk-free hedging component, and the constant risk exposure as a return-seeking component. This interpretation is supported by the fact that we always hold a long, unlevered position in $T$-bonds, while the constant risk exposure is obtained by either a long or short bond position depending on the sign of the market price of interest rate risk.

Strictly speaking, this is only one possible way to interpret the optimal strategy. Since we are using a single factor rate model, all bond returns are fully correlated and only the net interest rate exposure matters. Furthermore, the exposure can be taken using bonds of any maturity (with varying degrees of leverage). Still, interpreting (\ref{eq:OptimalBond}) as a combination of variance reducing $T$-bonds ("hedging") and additional long, or short, return-seeking bond positions ("investments") provides a useful intuition. The structure and the interpretation closely resemble the classic two-fund separation theorem of modern portfolio theory, see \cite{mar52, tob58, mer72}.

For $0<\nu <1/2$ we still have a linear, but no longer a convex, combination of "hedging" and "investment" strategies. More precisely, for $0<\nu <1/2$ we have $a(\nu)<0$ and $1-a(\nu)>1$. Hence, in this situation we have a short position in $T$-bonds and a long return-seeking position, larger than the optimal level of $\lambda^r$. Thus we are borrowing risk-free funds (as seen from time $T$) and investing these in the optimal return-seeking strategy. Finally, the extremal strategies minimizing the mean have $\nu>1/2$ which correspond to $a(\nu)>1$ and $1-a(\nu)<0$. Essentially, we are spending the variance budget on borrowing funds as expensively as possible and investing them in risk-free $T$-bonds.

\subsubsection{Closed form expressions for the extremal mean and variance} \label{sec:closedFormMeanVar}
For given $\nu$, we can find the portfolio mean and variance by computing the integrals of (\ref{eq:VTmeanratesonly}) and (\ref{eq:VTvarratesonly}), respectively, after substituting $f^r_s$  with the extremal strategy of (\ref{eq:OptimalBond}). Due to the simple form of the extremal strategy it is possible to evaluate these integrals analytically, and we will do so here. However, apart from special cases, e.g., $\nu=\pm\infty$, or $\nu=0$, the resulting expressions are hard to interpret. Instead, in Section~\ref{sec:NumOptimalRate} we will investigate numerically the risk-reward profile obtained by varying $\nu$.

To express the results we introduce the function $\Upsilon$ given by
\begin{align} \label{eq:upsilondef}
  \Upsilon(a,t) \equiv \int_0^t \Psi^2(a,s) ds =
  \begin{cases}
  \frac{t^3}{3} & \mbox{for } a = 0, \\
  \frac{1}{2a^3}\left(-3+2 at + 4e^{-at}- e^{-2at} \right) & \mbox{for } a \neq 0. \\
  \end{cases}
\end{align}
Using this notation we can express the zero-coupon bond price as
\begin{equation} \label{eq:ZCBpriceAltExp}
   p_t(T) = \exp\left\{-\Delta b - \Psi(a,\Delta)(r_t-b) + \frac{\sigma_r^2}{2}\Upsilon(a,\Delta) \right\},
\end{equation}
where $\Delta=T-t$, cf.\ Section~5.1 of \cite{jarpre17}. This expression is valid for all values of $a$. However, recall that the extremal strategies of Theorem~\ref{thm:OptimalBond} are derived under the simplifying assumption $a=\kappa$, such that $\lambda^r$ is constant.

We can now evaluate (\ref{eq:VTmeanratesonly}) with $f_s^r$ given by (\ref{eq:OptimalBond})
\begin{align}
  \mu_T & = \underbrace{T\bar{r} + \Psi(\kappa,T)(r_0-\bar{r})}_{m_T^0} + \underbrace{\frac{T(\lambda^r)^2+2\nu(T-\Psi(\kappa,T))(\bar{r}-b)}{1-2\nu}}_{\lambda^r\int_0^T f_s^rds} -
  \frac{1}{2}\int_0^T (f_s^r)^2ds \nonumber \\[2mm]
  & = T\left[\frac{\bar{r}}{1-2\nu} - \frac{2\nu b}{1-2\nu} + \frac{(\lambda^r)^2}{1-2\nu}- \frac{(\lambda^r)^2}{2(1-2\nu)^2} - \frac{2\nu(\bar{r}-b)}{(1-2\nu)^2}\right] +  \label{eq:muToptimalRateI} \\[2mm]
  & \quad \Psi(\kappa,T)\left[r_0 - \frac{\bar{r}}{1-2\nu} + \frac{2\nu b}{1-2\nu} + \frac{2\nu(\bar{r}-b)}{(1-2\nu)^2} \right] - \frac{\sigma_r^2}{2}\left(\frac{2\nu}{1-2\nu}\right)^2\Upsilon(\kappa,T), \label{eq:muToptimalRateII}
  \end{align}
where we have used that $\lambda^r \sigma_r/\kappa=\bar{r}-b$. Evaluation of the integral in (\ref{eq:VTvarratesonly}) with $f_s^r$ given by (\ref{eq:OptimalBond}) gives us the extremal variance
\begin{align}
  \sigma_T^2 & = \int_0^T [\sigma_r\Psi(\kappa,T-s) + f_s^r]^2ds = \frac{\int_0^T [\lambda^r+ \sigma_r\Psi(\kappa,T-s)]^2ds}{(1-2\nu)^2} \nonumber \\[2mm]
     & = \frac{T[(\lambda^r)^2 + 2(\bar{r}-b)] -\Psi(\kappa,T)2(\bar{r}-b) + \sigma_r^2\Upsilon(\kappa,T)}{(1-2\nu)^2}.  \label{eq:sigmaToptimalRate}
\end{align}
It follows from (\ref{eq:sigmaToptimalRate}) that there are two extremal strategies for any positive specification of the variance, with the pair of Lagrange multipliers characterising the strategies being of the form $\nu = 1/2 \pm \eta$ for $\eta>0$. The strategy with $\nu<1/2$ is mean optimizing, and the strategy with $\nu>1/2$ is mean minimizing. In the limit for $\nu$ tending to plus or minus infinity, we have $\sigma_T^2=0$, i.e., $V_T$ is constant, and
\begin{equation}
   \frac{V_T}{V_0} = \exp(\mu_T) = \exp\left\{Tb + \Psi(\kappa,T)(r_0-b)-\sigma_r^2\Upsilon(\kappa,T)/2 \right\} = \frac{1}{p_0(T)},
\end{equation}
where the first equality follows from Corollary~\ref{cor:VTratesonly}, the second equality follows from (\ref{eq:muToptimalRateI})---(\ref{eq:muToptimalRateII}), and the last equality follows from (\ref{eq:ZCBpriceAltExp}) with $a=\kappa$. Thus, as expected, the return on the risk free strategy equals the return on a zero-coupon bond maturing at the horizon.

\subsection{Optimal equity strategies} \label{sec:OptimalEquity}
In this section we set out to find the equity exposure strategy that leads to the optimal (log) mean-variance trade-off for {\em excess returns} on a given horizon. One might think of this as an overlay strategy to an underlying bond strategy, e.g., one of the strategies derived in Section~\ref{sec:OptimalBond}. In particular, the equity strategy can act as an overlay to the hedging $T$-bond strategy.

For reasons stated earlier, we are interested in deterministic, or strategic, equity exposure profiles only, i.e., we allow the strategy to depend on time and the initial equity risk premium only. In contrast to Problem~\ref{problem:RatesOnly}, we solve the equity problem in full generality taking account of the mean-reverting risk premia. Mathematically, this makes the present problem much harder. The solution is presented in three steps: an integral equation characterizing the strategy, a general solution, and explicit formulas for the coefficients in the general solution.

Assuming independence between the two risk sources ($\rho=0$), or alternatively no interest rate exposure ($f_s^r=0$), the excess return due to equity exposure is well-defined. The (excess) mean and variance due to equity exposure is given by the terms $m^S_T$ and $\int_0^T h^S_u dW^S_u$, respectively, of Theorem~\ref{thm:VTrep}.

For ease of notation and in this section only, we drop superscript $S$ on all quantities relating to equities. We also introduce a short-hand notation for the expected market price of equity risk at future times
\begin{equation}\label{eq:meanEquityRiskPremium}
   \xi_s = \E\left[\lambda_s \right] = \frac{1}{\sigma_S}\E\left[x_s \right] = \frac{1}{\sigma_S}\left(\bar{x} + e^{-\alpha s}(x_0-\bar{x})\right).
\end{equation}
Using this notation we obtain from Theorem~\ref{thm:VTrep} the following corollary, which serves to formulate the problem.

\begin{corollary} \label{cor:VTequityonly}
Assume $\rho=0$. For given $T > 0$, the {\em excess} return of $V_T/V_0$ due to equity exposure, $f$, is log-normally distributed with mean and variance given by
\begin{align}
   \mu_T      & = \int_0^T \xi_s f_s - \frac{1}{2}f_s^2 ds,  \label{eq:VTmeanequityonly} \\[2mm]
   \sigma_T^2 & = \int_0^T \left[f_u - \frac{\sigma_x}{\sigma_S}\int_u^T f_s e^{-\alpha(s-u)}ds \right]^2 du.  \label{eq:VTvarequityonly}
\end{align}
\end{corollary}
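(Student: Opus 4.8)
The plan is to read Corollary~\ref{cor:VTequityonly} off Theorem~\ref{thm:VTrep} by specialization, so the argument is mostly bookkeeping together with one application of the It\^{o} isometry. First I would set $f^r_s \equiv 0$ and $\rho = 0$ in the representation (\ref{eq:Vrep}): the cross term $-\rho\int_0^T f^r_u f^S_u\,du$ vanishes, $m^r_T$ vanishes, and $h^r_u$ collapses to $\sigma_r\Psi(\kappa,T-u)$. Writing $f = f^S$ and keeping the notation of Theorem~\ref{thm:VTrep}, this leaves
\begin{equation*}
  \frac{V_T}{V_0} = \exp\left\{m^0_T + m^S_T + \int_0^T \sigma_r\Psi(\kappa,T-u)\,dW^r_u + \int_0^T h^S_u\,dW^S_u\right\},
\end{equation*}
with $m^S_T$ and $h^S_u$ given by (\ref{eq:mSTdef}) and (\ref{eq:hSudef}).

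Next I would isolate the equity-attributable factor. Since $\rho = 0$, the Brownian motions $W^r$ and $W^S$ are independent, so the exponential above factors as a rate part $\exp\{m^0_T + \int_0^T \sigma_r\Psi(\kappa,T-u)\,dW^r_u\}$, which does not involve $f$, times an equity part $R_T := \exp\{m^S_T + \int_0^T h^S_u\,dW^S_u\}$, which is independent of the rate part; $R_T$ is precisely the excess return due to the equity exposure $f$ referred to in the statement. This is the one place where the hypothesis $\rho = 0$ (or, equivalently for the purpose of the statement, $f^r_s = 0$) is used: without it the two stochastic integrals are correlated and no such clean multiplicative split exists.

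The distributional claim is then routine. The exponent $\log R_T = m^S_T + \int_0^T h^S_u\,dW^S_u$ is a deterministic constant plus a Wiener integral with a deterministic integrand (recall $f$, hence $h^S$, is deterministic), so it is Gaussian with mean $m^S_T$ and, by the It\^{o} isometry, variance $\int_0^T (h^S_u)^2\,du$; hence $R_T$ is log-normal. To put the parameters in the stated form I would substitute the definition (\ref{eq:meanEquityRiskPremium}) of $\xi_s$ into (\ref{eq:mSTdef}), using $\frac{1}{\sigma_S}\big(\bar{x} + e^{-\alpha s}(x_0-\bar{x})\big) = \xi_s$, to get $m^S_T = \int_0^T \big(\xi_s f_s - \tfrac{1}{2}f_s^2\big)\,ds$, which is (\ref{eq:VTmeanequityonly}); and I would drop the superscript $S$ in (\ref{eq:hSudef}) to recognize $\int_0^T (h^S_u)^2\,du$ as the right-hand side of (\ref{eq:VTvarequityonly}).

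I do not expect a genuine obstacle: the statement is a corollary, and everything reduces to Theorem~\ref{thm:VTrep} plus the standard fact that the exponential of a Gaussian Wiener integral is log-normal. The only point calling for any care is the conceptual one just noted — making precise what ``excess return due to equity exposure'' means — which is exactly the reason the independence assumption $\rho = 0$ appears in the hypotheses.
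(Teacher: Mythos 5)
Your proposal is correct and follows the same route as the paper, which presents Corollary~\ref{cor:VTequityonly} as an immediate specialization of Theorem~\ref{thm:VTrep}: under $\rho=0$ the rate and equity contributions factor into independent log-normal multipliers, the equity factor has log-mean $m^S_T=\int_0^T(\xi_s f_s-\tfrac12 f_s^2)\,ds$ after substituting (\ref{eq:meanEquityRiskPremium}), and its log-variance is $\int_0^T (h^S_u)^2\,du$ by the It\^{o} isometry. Your identification of $\rho=0$ (or $f^r\equiv 0$) as the hypothesis that makes the ``excess return due to equity exposure'' well-defined matches the paper's own remark preceding the corollary.
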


\noindent
The problem of this section can now be phrased as

\begin{problem}  \label{problem:EquitiesOnly}
 Find $f$ ($=f^S$) such that (\ref{eq:VTmeanequityonly}) is maximized for given value of (\ref{eq:VTvarequityonly}).
\end{problem}

Similarly to Problem~\ref{problem:RatesOnly}, for given horizon, $T$, the solution to Problem~\ref{problem:EquitiesOnly} takes the form of a family of strategies, one for each value of the variance $\sigma_T^2$. Members of this family are referred to as {\em optimal equity strategies}. More generally, we refer to equity strategies (locally) maximising, or minimizing, (\ref{eq:VTmeanequityonly}) for given value of (\ref{eq:VTvarequityonly}) as extremal equity strategies.

Before discussing the solution to Problem~\ref{problem:EquitiesOnly} in detail we start by noting that the global (unconstrained) maximum of $\mu_T$ is achieved for $f_s=\xi_s$ yielding $\hat{\mu}_T=\frac{1}{2}\int_0^T\xi_s^2 ds$; the maximum is obtained by optimizing the integrand in (\ref{eq:VTmeanequityonly}) at each $s$ individually. In particular, the maximal excess return is finite and generally also strictly positive (except in the degenerate cases $x_0=\bar{x}=0$, or $x_0=\alpha=0$). Let us denote by $\hat{\sigma}_T^2$ the variance associated with the optimal mean, i.e., (\ref{eq:VTvarequityonly}) evaluated with $f_s=\xi_s$. Since $\hat{\mu}_T$ is the global maximum it is also the constrained maximum under the variance constraint $\sigma_T^2=\hat{\sigma}_T^2$. In particular, $f=\xi$ is an optimal equity strategy.

We typically expect a positive risk-reward trade-off, meaning that the larger the risk (variance) the larger the reward (mean). Hence, we typically expect that as we increase the prescribed value of the variance the associated optimal mean also increases. This however is true only up to a certain point, namely $\hat{\sigma}_T^2$. As we vary the variance from $0$ to $\hat{\sigma}_T^2$ the optimal mean increases from $0$ to $\hat{\mu}_T$, but if we increase the prescribed value of the variance further the optimal mean starts to {\em decrease}. In other words, demanding a variance beyond $\hat{\sigma}_T^2$ is counter-productive in the sense that it can only be achieved by investing so much in equities that it impairs the mean log-return. Mathematically we can characterize the optimal strategies for any prescribed non-negative variance, but in practise we are interested only in optimal strategies with variance at most $\hat{\sigma}_T^2$.

\subsubsection{Characterization of extremal equity strategies}
The double integral in (\ref{eq:VTvarequityonly}) is a manifestation of the global nature of the problem by which later equity exposure reduces the variance arising from earlier exposure. The double integral also implies that Problem~\ref{problem:EquitiesOnly} does not immediately conform with problems solvable by the Euler-Lagrange equation of calculus of variations. However, we can still use a variational argument to arrive at the following integral characterization of the extremal equity strategies.

Extremal strategies are defined as stationary points (strategies) for the constrained optimization problem. These strategies are candidates for the optimal strategies, since optimal strategies are also extremal. However, extremal strategies can also be minimizing strategies, or local extremal strategies. 

\begin{lemma} \label{lemma:OptEquityIntegral}
For given $T > 0$, the extremal equity strategies satisfy
\begin{equation}\label{eq:OptEquityIntegral}
   \xi_s - f_s + 2\nu h_s - 2\nu \frac{\sigma_x}{\sigma_S}\int_0^s h_u e^{-\alpha(s-u)}du=0 \quad (0 \leq s \leq T),
\end{equation}
where
\begin{equation}\label{eq:hEquityDef}
   h_u=f_u - \frac{\sigma_x}{\sigma_S}\int_u^T f_s e^{-\alpha(s-u)}ds  \quad (0 \leq u \leq T),
\end{equation}
and $\nu$ is a Lagrange multiplier.
\end{lemma}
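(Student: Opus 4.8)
The plan is to set up the constrained optimization as a Lagrangian and compute its first variation (Gâteaux derivative) with respect to the equity exposure function $f$. Concretely, I would introduce the Lagrangian functional $\mathcal{L}(f) = \mu_T(f) - \nu\bigl(\sigma_T^2(f) - c\bigr)$, where $\mu_T$ is given by \eqref{eq:VTmeanequityonly}, $\sigma_T^2$ by \eqref{eq:VTvarequityonly}, $c$ is the prescribed variance level, and $\nu$ is the Lagrange multiplier. An extremal strategy is a stationary point, so for every admissible perturbation $\varepsilon\mapsto f + \varepsilon\phi$ (with $\phi$ a smooth test function on $[0,T]$) we must have $\frac{d}{d\varepsilon}\mathcal{L}(f+\varepsilon\phi)\big|_{\varepsilon=0} = 0$.

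\textbf{Key steps.} First I would differentiate the mean term: since \eqref{eq:VTmeanequityonly} is a simple (local) integral functional, $\frac{d}{d\varepsilon}\int_0^T \xi_s(f_s+\varepsilon\phi_s) - \tfrac12(f_s+\varepsilon\phi_s)^2\,ds\big|_{\varepsilon=0} = \int_0^T (\xi_s - f_s)\phi_s\,ds$. Second, and this is the substantive computation, I would differentiate the variance term. Writing $h_u = h_u(f)$ as in \eqref{eq:hEquityDef}, note $h_u$ is linear in $f$, so its directional derivative in direction $\phi$ is $\tilde h_u := \phi_u - \frac{\sigma_x}{\sigma_S}\int_u^T \phi_s e^{-\alpha(s-u)}\,ds$, and hence $\frac{d}{d\varepsilon}\sigma_T^2\big|_{\varepsilon=0} = 2\int_0^T h_u \tilde h_u\,du$. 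Third, I would substitute the expression for $\tilde h_u$ and interchange the order of integration in the double integral $\int_0^T h_u \int_u^T \phi_s e^{-\alpha(s-u)}\,ds\,du$, rewriting it as $\int_0^T \phi_s \left(\int_0^s h_u e^{-\alpha(s-u)}\,du\right)ds$ so that everything is expressed as $\int_0^T (\cdots)\phi_s\,ds$. Collecting terms, stationarity of $\mathcal{L}$ gives $\int_0^T \left[\xi_s - f_s + 2\nu h_s - 2\nu\frac{\sigma_x}{\sigma_S}\int_0^s h_u e^{-\alpha(s-u)}\,du\right]\phi_s\,ds = 0$ for all test functions $\phi$. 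Finally, by the fundamental lemma of calculus of variations (the bracketed expression being continuous), the bracket must vanish identically on $[0,T]$, which is exactly \eqref{eq:OptEquityIntegral}.

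\textbf{Main obstacle.} The delicate point is the Fubini/interchange-of-integration step for the variance's first variation: one must carefully track the two roles played by the relationship between $f$ and $h$ — the ``forward'' convolution defining $h_u$ from $\{f_s : s\ge u\}$, and the ``backward'' convolution $\int_0^s h_u e^{-\alpha(s-u)}\,du$ that appears after swapping integration order. Getting the limits of integration and the sign of the exponent correct in both convolutions is where errors creep in, and it is what produces the slightly asymmetric-looking final equation \eqref{eq:OptEquityIntegral}. A secondary technical point is justifying that the admissible class of strategies is an (affine) linear space so that the perturbations $f + \varepsilon\phi$ are themselves admissible and the Gâteaux derivative computation is legitimate; since the constraint is handled by the multiplier $\nu$ rather than by restricting the function space, this is routine, and the regularity needed for the fundamental lemma follows from $\xi$, $f$, and $h$ all being continuous on the compact interval $[0,T]$.
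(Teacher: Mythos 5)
Your proposal is correct and takes essentially the same route as the paper: the paper likewise computes a first variation with arbitrary continuous perturbations (using a two-parameter comparison family $f_s+\epsilon_1\eta_1(s)+\epsilon_2\eta_2(s)$ to justify the multiplier for the isoperimetric constraint), exploits the linearity of $h$ in $f$, interchanges the order of integration in the variance term, and concludes with the fundamental lemma of calculus of variations. The only cosmetic discrepancy is your sign convention, $\mathcal{L}=\mu_T-\nu(\sigma_T^2-c)$, which would produce the stated equation with $\nu$ replaced by $-\nu$ (the paper works with $I+\nu J$); since $\nu$ is an undetermined multiplier at this stage, this is immaterial.
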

\noindent Note that $h$ defined in Lemma~\ref{lemma:OptEquityIntegral} is the same as $h^S$ of Theorem~\ref{thm:VTrep}.

The extremal strategies are characterized by having the same mean-variance trade-off for all (infinitesimal) perturbations of the strategy, i.e., if we alter an extremal strategy slightly such that the variance changes by $\epsilon$, say, then the mean changes by $\delta$, say, regardless of how we alter the strategy. The Lagrange multiplier, $\nu$, is equal to minus the (common) value of the mean-variance trade-off for the corresponding extremal strategy; loosely speaking, $\nu=-\delta/\epsilon$. In principle, there could be more than one solution to (\ref{eq:OptEquityIntegral}) for given $\nu$, but we will show later that the solution (if it exists) is unique. Hence, the extremal strategies are uniquely identified by their mean-variance trade-off.

Assuming sufficient regularity, differentiating (\ref{eq:OptEquityIntegral}) twice leads to a second-order differential equation in $f$ from which the general form of the solution can be inferred.

\begin{lemma} \label{lemma:OptEquityODE}
For given $T > 0$, the extremal equity strategies satisfy
\begin{equation}\label{eq:OptEquityODE}
    A f''_s + C f_s + D = 0 \quad (0 \leq s \leq T),
\end{equation}
where
\begin{equation}
  A  = 1-2\nu, \quad
  C  = 2\nu\left(\alpha-\frac{\sigma_x}{\sigma_S}\right)^2 -\alpha^2,  \quad
  D  = \alpha^2 \frac{\bar{x}}{\sigma_S}.  \label{eq:ODCcoef}
\end{equation}
\end{lemma}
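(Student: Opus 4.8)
The plan is to start from the integral characterization in Lemma~\ref{lemma:OptEquityIntegral} and differentiate it twice with respect to $s$, tracking carefully the boundary contributions that arise from differentiating integrals whose limits depend on $s$. Concretely, write the integral equation \eqref{eq:OptEquityIntegral} as
\[
  \xi_s - f_s + 2\nu h_s - 2\nu\frac{\sigma_x}{\sigma_S}\,I_s = 0,
  \qquad
  I_s = \int_0^s h_u e^{-\alpha(s-u)}\,du,
\]
and note the convolution structure of $I_s$: it satisfies the simple ODE $I_s' = h_s - \alpha I_s$ with $I_0 = 0$. Similarly, $h_s$ from \eqref{eq:hEquityDef} can be written as $h_s = f_s - \tfrac{\sigma_x}{\sigma_S}J_s$ with $J_s = \int_s^T f_u e^{-\alpha(u-s)}\,du$, and $J_s$ satisfies $J_s' = \alpha J_s - f_s$ with $J_T = 0$. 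These two auxiliary first-order ODEs are the mechanism that lets repeated differentiation collapse the integral terms into algebraic combinations of $f_s$, $f_s'$, $f_s''$.

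The key steps, in order: (i) differentiate \eqref{eq:OptEquityIntegral} once, using $\xi_s' = -\alpha(\xi_s - \bar{x}/\sigma_S)$ from \eqref{eq:meanEquityRiskPremium}, $I_s' = h_s - \alpha I_s$, and $h_s' = f_s' - \tfrac{\sigma_x}{\sigma_S}(\alpha J_s - f_s)$; this yields a first-order relation still involving $h_s$, $I_s$, $J_s$ but in a more constrained way. (ii) Differentiate a second time, again substituting the auxiliary ODEs for $I_s'$, $h_s'$, $J_s'$; at this point every occurrence of the ``memory'' quantities $I_s$, $J_s$ should be eliminable by back-substituting the once-differentiated equation and the original equation, leaving only $f_s$, $f_s'$, $f_s''$ and constants. (iii) Collect coefficients. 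The constant term will carry the factor $\alpha^2\bar{x}/\sigma_S$ coming from the inhomogeneous part of $\xi_s'' $ (i.e. from $\xi_s'' = \alpha^2(\xi_s - \bar{x}/\sigma_S)$), matching $D$. The coefficient of $f_s$ should assemble into $2\nu(\alpha - \sigma_x/\sigma_S)^2 - \alpha^2$: the $-\alpha^2$ is the "autonomous" mean-reversion contribution and the $2\nu(\alpha - \sigma_x/\sigma_S)^2$ term reflects the combined exponential rates $\alpha$ (from the $\xi$ and convolution kernels) and the loading $\sigma_x/\sigma_S$ appearing squared because it enters both $h$ and the extra integral term in \eqref{eq:OptEquityIntegral}. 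The coefficient of $f_s''$ is simply $1-2\nu = A$, inherited directly from the $-f_s + 2\nu h_s$ part since the highest derivative only hits $f_s$ through $-f_s$ and $h_s$ through its $f_s$ piece.

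The main obstacle I expect is the bookkeeping in step (ii): after two differentiations one has a linear combination of $f_s, f_s', f_s''$ together with $I_s$, $h_s$, $J_s$ (and possibly $\int_0^s$-type remnants), and it is not a priori obvious that all the nonlocal pieces cancel — one must use both the original equation \eqref{eq:OptEquityIntegral} and its first derivative as linear constraints to eliminate $I_s$ and the combination of $J_s$ that survives. Getting the algebra to close exactly into \eqref{eq:OptEquityODE} with precisely the coefficients \eqref{eq:ODCcoef}, rather than with extra spurious terms, is the delicate part; a useful sanity check along the way is the degenerate case $\sigma_x = 0$, where there is no feedback, $h_s = f_s$, the integral equation reduces to $\xi_s - f_s + 2\nu f_s - 2\nu\tfrac{\sigma_x}{\sigma_S}(\cdots) = \xi_s - (1-2\nu)f_s$ (the last term vanishing), so $f_s = \xi_s/(1-2\nu)$, and \eqref{eq:OptEquityODE} should then reduce consistently to $(1-2\nu)f_s'' + (-\alpha^2) f_s + \alpha^2\bar{x}/\sigma_S = 0$, i.e. exactly the ODE satisfied by $\xi_s/(1-2\nu)$ since $\xi_s'' = \alpha^2(\xi_s - \bar x/\sigma_S)$. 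I would also remark that the differentiation is justified under the stated regularity assumption ("assuming sufficient regularity"), so I would not dwell on smoothness of $f$ beyond noting that \eqref{eq:OptEquityIntegral} itself forces $f$ to be as smooth as $\xi$, hence real-analytic, once a solution exists.
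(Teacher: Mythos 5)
Your plan is correct and follows essentially the same route as the paper's proof: the paper likewise differentiates the integral characterization twice, using the original equation ($G(s)=0$) after the first differentiation to eliminate the $\int_0^s$ convolution term and the first-derivative identity ($G'(s)=0$) after the second to eliminate the remaining $\int_s^T$ term, which collapses everything to exactly the coefficients in (\ref{eq:ODCcoef}). One minor slip in your side remark: for $\sigma_x=0$ the coefficient $C$ reduces to $(2\nu-1)\alpha^2$, not $-\alpha^2$, and it is with that value that $f_s=\xi_s/(1-2\nu)$ satisfies (\ref{eq:OptEquityODE}).
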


Lemma~\ref{lemma:OptEquityODE} gives a surprisingly simple characterization of the extremal strategies as solutions to a differential equation with constant coefficients which do not depend on $T$, nor $x_0$. In general, the solution space to (\ref{eq:OptEquityODE}) is two-dimensional, and it remains to identify the specific solution also satisfying (\ref{eq:OptEquityIntegral}).

\subsubsection{Optimal equity strategies with positive mean-variance trade-off}  \label{sec:OptEqStrategyPosTradeOff}
As noted in the introductory remarks to this section, the unconstrained maximum to (\ref{eq:VTmeanequityonly}) is achieved by $f=\xi$ and this strategy is an optimal strategy. We note that $f=\xi$ is also the (unique) solution to (\ref{eq:OptEquityIntegral}) when $\nu=0$; this is to be expected, since a Lagrange multiplier vanishes at a global maximum.

In the limit $\nu$ tending to minus infinity, we get the strategy $f=0$ with $\mu_T=\sigma_T^2=0$. It is not obvious that this is the limiting solution to (\ref{eq:OptEquityIntegral}), but it follows from the explicit solution given in Theorem~\ref{thm:OptimalEquity} below. It also conforms with the intuition that the size of $\nu$ reflects the investor's risk aversion. Although, as we shall see in Section~\ref{sec:numWedge}, this intuition is only partly true.

Strategies satisfying (\ref{eq:OptEquityIntegral}) with $-\infty < \nu < 0$ correspond to extremal strategies with a positive mean-variance trade-off, i.e., strategies where added risk is rewarded in terms of higher mean. Varying $\nu$ in this range gives us all the strategies of practical relevance, ranging from the risk free strategy to the strategy achieving the global maximum. In this section we give an explicit expression for these strategies.\footnote{Strictly speaking, we only know that the strategies are extremal and with rewarded risk. However, we cannot rule out the possibility that for certain values of the variance constraint, there could exist both an extremal strategy with rewarded risk and a better (optimal) extremal strategy with unrewarded risk. This is conceivable because whether or not risk is rewarded is a local property applying only to infinitesimal changes in risk. A formal proof for the non-existence of this possibility would require an analysis of the global structure of the solution space, which is outside the scope of the present paper. In practise, however, it is easily verified that extremal strategies with positive mean-variance trade-off are indeed optimal; this can be observed from a mean-variance plot of the family of all extremal strategies for given model parameters, cf.\ Section~\ref{sec:numeric}}

Hence, assume $\nu < 0$. Under this assumption, $A > 1$ and $C < 0$. Note, that the strict negativity of $C$ holds even if $\alpha=0$ (assuming $\sigma_x>0$ and $\sigma_S>0$). From standard theory we know that the general solution to (\ref{eq:OptEquityODE}) takes the form of a specific solution plus the general solution to the associated, homogeneous differential equation, $A f''_s + C f_s = 0$. The characteristic polynomial of the latter is $Ar^2 + C$, which, under the current assumption, has two distinct, real roots $c_1 = \sqrt{-C/A}$ and $c_2=-\sqrt{-C/A}$, say, and it follows that the general solution to the homogeneous differential equation is of the form $f_s=b_1 e^{c_1 s} + b_2 e^{c_2 s}$. Regarding the specific solution, we note that $f_s=-D/C$ solves (\ref{eq:OptEquityODE}), since $C$ is non-zero. Thus, the general solution to (\ref{eq:OptEquityODE}) is given by
\begin{equation}\label{eq:OptEquityGenSol}
   f_s = b_0 + b_1 e^{c_1 s} + b_2 e^{c_2 s} \quad (0 \leq s \leq T),
\end{equation}
where $b_0=-D/C$. This leaves the determination of the two coefficients $b_1$ and $b_2$. To determine those we insert the general form of $f$ given by (\ref{eq:OptEquityGenSol}) into the left-hand side of (\ref{eq:OptEquityIntegral}). In general, this results in a linear combination of four exponential functions with different exponents, and a constant term. The requirement that all terms vanish gives us five equations for determining the five constants in (\ref{eq:OptEquityGenSol}), in particular $b_1$ and $b_2$. This programme is carried out in Appendix~\ref{app:coefmatch}. For ease of reference, the following theorem summarizes the results (with superscript $S$ reintroduced).

\begin{theorem} \label{thm:OptimalEquity}
Assume independent risk factors ($\rho=0$). For given horizon $T > 0$, the optimal equity strategies with positive mean-variance trade-off are of the form
\begin{equation} \label{eq:OptimalEquity}
     f^S_s = b_0 + b_1 e^{c_1 s} + b_2 e^{c_2 s} \quad (0 \leq s \leq T),
\end{equation}
where $c_1 = \sqrt{-C/A} > 0$, $c_2=-\sqrt{-C/A}$, $b_0=-D/C$ with $A$, $C$, and $D$ given by (\ref{eq:ODCcoef}), and $\nu<0$ depends on the prescribed value of the variance $\sigma_T^2$.

For $\alpha\neq \frac{1}{2}\frac{\sigma_x}{\sigma_S}$, the exponents satisfy $c_1 = |c_2| \neq |\alpha|$, and
\begin{equation}
   \begin{pmatrix}
    b_1 \\
    b_2
  \end{pmatrix}
  =
  \begin{pmatrix}
    \frac{e^{c_1 T}}{c_1-\alpha} & \frac{e^{c_2 T}}{c_2-\alpha} \\[2mm]
    \frac{\sigma_x}{\sigma_S(c_1+\alpha)-\sigma_x} & \frac{\sigma_x}{\sigma_S(c_2+\alpha)-\sigma_x}
  \end{pmatrix}^{-1}
  \begin{pmatrix}
    \frac{\alpha\bar{x}}{\sigma_S\left(\alpha^2 - 2\nu(\alpha-\sigma_x/\sigma_S)^2 \right)} \\[2mm]
    -\frac{x_0}{\sigma_S} + \frac{\alpha\bar{x}}{\sigma_S}\frac{\alpha-2\nu(\alpha-\sigma_x/\sigma_S)}{\alpha^2-2\nu(\alpha-\sigma_x/\sigma_S)^2}
  \end{pmatrix}. \label{eq:OptimalEquityGenB}
\end{equation}

For $\alpha = \frac{1}{2}\frac{\sigma_x}{\sigma_S}$, the exponents are $c_1=\alpha$ and $c_2=-\alpha$, with
\begin{equation}
  b_1=0, \quad b_2= \frac{-\frac{x_0}{\sigma_S}+ \frac{\bar{x}}{\sigma_S}\left(1-\frac{4\nu}{1-2\nu}\left(e^{-\alpha T}-1\right)\right)}{2\nu e^{-2\alpha T}-1}.
  \label{eq:OptimalEquitySpecialB}
\end{equation}
\end{theorem}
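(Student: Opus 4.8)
The plan is to substitute the general solution of the differential equation into the integral characterization~(\ref{eq:OptEquityIntegral}) and to force that identity by matching the coefficients of the exponentials it generates. By Lemma~\ref{lemma:OptEquityODE} every extremal strategy with $\nu<0$ solves the constant-coefficient equation~(\ref{eq:OptEquityODE}), and since $\nu<0$ makes $A>0$ and $C<0$, its general solution has the three-exponential form~(\ref{eq:OptEquityGenSol}); conversely this form solves~(\ref{eq:OptEquityODE}) for arbitrary $b_1,b_2$, so the task is to single out the $(b_0,b_1,b_2,c_1,c_2)$ for which~(\ref{eq:OptEquityIntegral}) also holds. I would treat all five constants as unknowns, insert $f_s=b_0+b_1e^{c_1 s}+b_2e^{c_2 s}$ into~(\ref{eq:OptEquityIntegral}), and read off five scalar equations from the requirement that each coefficient vanish.

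Concretely, I would first compute $h$ from~(\ref{eq:hEquityDef}) by evaluating $\int_u^T f_v e^{-\alpha(v-u)}dv$ term by term; provided $c_1,c_2\neq\alpha$, this gives $h_u$ as a linear combination of $1$, $e^{\alpha u}$, $e^{c_1 u}$, $e^{c_2 u}$, with the horizon $T$ entering only through the coefficient of $e^{\alpha u}$ (via factors $e^{(c_i-\alpha)T}/(c_i-\alpha)$). Feeding this $h$ into the double integral $\int_0^s h_u e^{-\alpha(s-u)}du$, again integrating term by term, produces the extra function $e^{-\alpha s}$, so the left-hand side of~(\ref{eq:OptEquityIntegral}) --- after adding $\xi_s-f_s$ --- is a linear combination of the five functions $1,e^{\alpha s},e^{-\alpha s},e^{c_1 s},e^{c_2 s}$, linearly independent when $\alpha\neq0$ and $c_1=|c_2|\neq|\alpha|$ (the case $\alpha=0$ being analogous). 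I then expect: the coefficients of $e^{c_1 s}$ and $e^{c_2 s}$ each factor as $b_i$ times an expression that vanishes precisely when $c_i^2=-C/A$, which for a genuinely two-term solution recovers $c_1=-c_2=\sqrt{-C/A}$; the constant coefficient gives $b_0=-D/C$ (using $D=\alpha^2\bar x/\sigma_S$ and $-C=\alpha^2-2\nu(\alpha-\sigma_x/\sigma_S)^2$); and the coefficients of $e^{\alpha s}$ and $e^{-\alpha s}$ are two linear equations in $(b_1,b_2)$, the $e^{\alpha s}$ one carrying the $T$-dependence and the $e^{-\alpha s}$ one the $x_0$-dependence inherited from $\xi_s$. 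The $e^{\alpha s}$ equation rearranges directly into the first row of~(\ref{eq:OptimalEquityGenB}); the $e^{-\alpha s}$ equation, once the $e^{\alpha s}$ equation is used to cancel one term and once $c_i^2=-C/A$ is used to collapse the coefficient of $e^{c_i u}$ in $h$ to $b_i\sigma_S/\big(2\nu(\sigma_S(c_i+\alpha)-\sigma_x)\big)$, rearranges into the second row; the resulting matrix is invertible since $c_1\neq c_2$ and the parameters are generic.

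The case $\alpha=\tfrac12\sigma_x/\sigma_S$ has to be treated separately, because it is exactly the case $c_1=\alpha$: indeed $c_1^2=-C/A=\alpha^2$ is equivalent to $\alpha^2=(\alpha-\sigma_x/\sigma_S)^2$, i.e.\ $\alpha=\tfrac12\sigma_x/\sigma_S$, and then $\int_u^T e^{c_1 v}e^{-\alpha(v-u)}dv$ degenerates to $e^{\alpha u}(T-u)$ while the divisions by $c_1-\alpha$ and by $\sigma_S(c_1+\alpha)-\sigma_x$ in~(\ref{eq:OptimalEquityGenB}) become meaningless. Here I would posit $f_s=b_0+b_1e^{\alpha s}+b_2e^{-\alpha s}$ (still the ODE general solution, now with $c_{1,2}=\pm\alpha$), redo the two integrations keeping the resonant $(T-u)e^{\alpha u}$ term, and match coefficients of the only three independent functions $1,e^{\alpha s},e^{-\alpha s}$ (the spurious $s\,e^{\alpha s}$ contributions cancel). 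The $e^{\alpha s}$ equation collapses to $(1-2\nu)b_1=0$, forcing $b_1=0$, after which the constant and $e^{-\alpha s}$ equations give $b_0$ and the expression for $b_2$ in~(\ref{eq:OptimalEquitySpecialB}).

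The only real obstacle is computational bookkeeping: carrying the nested integrals through without sign or factor slips, and recognising how the unwieldy coefficients collapse --- under $c_i^2=-C/A$ --- to the clean characteristic condition and to the compact matrix of~(\ref{eq:OptimalEquityGenB}). Conceptually nothing is needed beyond the standard ``integral equation $\Rightarrow$ ODE plus boundary data'' reduction already prepared in Lemmas~\ref{lemma:OptEquityIntegral}--\ref{lemma:OptEquityODE}; the one genuinely delicate point is the resonant case $\alpha=\tfrac12\sigma_x/\sigma_S$, where the exponents coincide and the shape of the solution changes.
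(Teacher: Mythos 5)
Your proposal is correct and follows essentially the same route as the paper: the paper likewise uses Lemma~\ref{lemma:OptEquityODE} to fix the exponential ansatz with $c_{1,2}=\pm\sqrt{-C/A}$ and $b_0=-D/C$, computes $h$ and the iterated integral term by term (Lemmas~\ref{lemma:GeneralFormOfh}--\ref{lemma:GeneralFormOfInth}), matches the coefficients of $1,e^{c_1s},e^{c_2s},e^{\alpha s},e^{-\alpha s}$ to obtain the $2\times 2$ system (\ref{eq:b1b2defining}) for $(b_1,b_2)$, and treats $\alpha=0$ and the resonant case $\alpha=\tfrac12\sigma_x/\sigma_S$ (where $c_1=\alpha$ and $b_1=0$ is forced) in separate propositions, with uniqueness argued exactly as you do via the ODE restricting the candidate form. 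Your identification of $\alpha=\tfrac12\sigma_x/\sigma_S$ as precisely the case $c_1^2=\alpha^2$ matches the paper's Lemma~\ref{lemma:DistinctRoots}.
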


Note that the equity risk premium process is transient for $\alpha=0$, and exploding for $\alpha<0$, so for practical purposes we would consider using the model only for $\alpha>0$. Nevertheless, it is interesting to learn that the solution has the same structure regardless of the value of $\alpha$. Recall that the global maximum is achieved by $f=\xi$ with $\xi$ given by (\ref{eq:meanEquityRiskPremium}). This strategy is also of form (\ref{eq:OptimalEquity}) with either $c_1$ or $c_2$ equal to $\alpha$, and $b_0=\bar{x}/\sigma_S$.

For $\nu$ tending to minus infinity, the exponents approach $\pm (\alpha - \sigma_x/\sigma_S)$. In particular, the exponents have a finite limit. Further, for $\alpha \neq \sigma_x/\sigma_S$ the $b$-coefficients all tend to $0$ for $\nu$ tending to minus infinity, while for $\alpha=\sigma_x/\sigma_S$ the sum of the $b$-coefficients tends to $0$ in the same limit.\footnote{We here give an outline of the argument. Write (\ref{eq:OptimalEquityGenB}) as $F(b_1\ b_2)'=B$. Using the established limits of $c_1$ and $c_2$, we have that for $\alpha \neq \sigma_x/\sigma_S$, one of the bottom entries of $F$ diverges while the other three entries have finite, non-zero limits. Under the same condition on $\alpha$, the top entry of $B$ tends to $0$, while the bottom entry has a finite limit. From this we can conclude that $b_1$ and $b_2$ both tend to $0$.
For the special case, $\alpha=\sigma_x/\sigma_S$, both $c_1$ and $c_2$ tend to $0$ for $\nu$ tending to minus infinity. It follows that the top row of $F$ converges to $(-1/\alpha\ -1/\alpha) = (-\sigma_S/\sigma_x\ -\sigma_S/\sigma_x)$, while the top entry of $B$ equals $\bar{x}/\sigma_x (= b_0 \sigma_S/\sigma_x)$. From this we can conclude that $b_0+b_1+b_2$ tends to $0$. We leave it to the reader to handle the last special case, $\alpha = \frac{1}{2}\frac{\sigma_x}{\sigma_S}$.} Thus, in either case the limiting strategy is the risk-free strategy, $f=0$, as previously claimed.

The optimal strategy has a simple structure, but apart from the observations made already, the expressions in Theorem~\ref{thm:OptimalEquity} appear too complex for intuitive interpretation. Instead we will explore the optimal mean-variance trade-off and the optimal strategies numerically in Section~\ref{sec:NumOptimalEquity}.

\subsubsection{Extremal equity strategies with negative mean-variance trade-off} \label{sec:OptEqStrategyNegTradeOff}
From a practical point of view, we are only interested in pursuing optimal strategies with a positive mean-variance trade-off, and these are therefore the main focus of the paper. However, it is very instructive to also study extremal strategies with a negative mean-variance trade-off, corresponding to $\nu>0$. These strategies include both optimal strategies with "excessive" variance, minimizing strategies with lowest possible mean  for given variance, but also "interior" strategies, where the mean is strictly between the minimal and maximal value for given variance. The latter class of strategies displays an intriguing variety of equity profiles, exemplified in Section~\ref{sec:numWedge}. In this section we briefly discuss the different types of solution that can occur for $\nu>0$. The actual strategies can be found in Appendix~\ref{app:overviewExtremal}, with Theorem~\ref{thm:ExtremalOverview} giving an overview of the full set of extremal strategies.

Depending on the type of roots to the characteristic polynomial, $Ar^2+C$, the extremal strategy takes one of three forms. For distinct, real roots we get the exponential solution already covered, for complex roots we get a trigonometric solution of form
\begin{equation}
   f_s = b_0 + b_1\sin(cs) + b_2\cos(cs),
\end{equation}
where $b_0=-D/C$ and $c=\sqrt{C/A}$, and in the special case where zero is a double root we get a quadratic solution
\begin{equation}
   f_s = b_0 + b_1 s + b_2 s^2,
\end{equation}
where $b_2=-D/(2A)$.

The solution is exponential if $A$ and $C$ are of opposite signs, it is trigonometric if $A$ and $C$ are of the same sign, and it is quadratic in the special case $C=0$ and $A\neq 0$. In general, there is no extremal equity strategy for $\nu=1/2$ ($A=0$). The domains for the respective solution types are illustrated in Figure~\ref{fig:domain} of Appendix~\ref{app:overviewExtremal}. We know that the exponential solution applies for $\nu<0$, but we see from the figure that it is in fact also the "typical" solution for $\nu \geq 0$.

The quadratic solution defines the border between the exponential and trigonometric solutions. For given model parameters, there is (at most) one value of $\nu$ for which the extremal strategy is quadratic. Of course, one is unlikely to encounter this solution in applications, but it is interesting to note that quadratic solutions bridge the two main domains. This suggests that extremal strategies in general might be well approximated by quadratic strategies.

Finally, we note that the solution type depends only on $\nu$ and the mean reversion ratio $\tilde{\alpha} = \alpha/[\sigma_x/\sigma_S]$, where $\sigma_x/\sigma_S$ can be interpreted as the volatility of the market price of equity risk, $\lambda^S_t=x_t/\sigma_S$. Thus, apart from $\nu$, the determining factor for the type of solution is the ratio of mean reversion strength to noise in the market price of equity risk process,
\begin{equation}
    d\lambda^S_t = \alpha\left(\bar{\lambda}^S - \lambda^S_t\right)dt - \frac{\sigma_x}{\sigma_S}dW^S_t,
\end{equation}
where $\bar{\lambda}^S=\bar{x}/\sigma_S$ denotes the long-run market price of equity risk.

\subsection{Optimal joint strategies}
Assuming independent risk factors ($\rho=0$), the effects of rate and equity exposures can be separated and joint maximization can be performed based on the results of the preceding two sections.

To phrase the problem, let us first state a simplified version of Theorem~\ref{thm:VTrep}.

\begin{corollary} \label{cor:VTindependent}
Assume $\rho=0$ and $a=\kappa$. For given $T > 0$, $V_T/V_0$ is log--normally distributed with mean and variance given by
\begin{align}
   \mu_T      & = m^0_T +  \int_0^T \lambda^r f^r_s - \frac{1}{2} \left(f^r_s\right)^2ds + \int_0^T \xi_s f^S_s - \frac{1}{2}\left(f^S_s\right)^2 ds,  \label{eq:VTmeanIndependent} \\[2mm]
   \sigma_T^2 & = \int_0^T \left[\sigma_r\Psi(\kappa,T-u) + f^r_u  \right]^2 du + \int_0^T \left[f^S_u - \frac{\sigma_x}{\sigma_S}\int_u^T f^S_s e^{-\alpha(s-u)}ds \right]^2 du.  \label{eq:VTvarIndependent}
\end{align}
where $m^0_T$ is given by (\ref{eq:m0Tdef}) of Theorem~\ref{thm:VTrep}, $\xi_s$ is given by (\ref{eq:meanEquityRiskPremium}), and $\lambda^r = \kappa(\bar{r}-b)/\sigma_r$.
\end{corollary}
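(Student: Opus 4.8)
The plan is to derive Corollary~\ref{cor:VTindependent} directly from Theorem~\ref{thm:VTrep} by substituting the two simplifying assumptions $\rho=0$ and $a=\kappa$ and simplifying each of the displayed quantities. Since the theorem already asserts the general log-normality of $V_T/V_0$ together with explicit formulas for $\mu_T$ in~(\ref{eq:VTmean}) and $\sigma_T^2$ in~(\ref{eq:VTvar}), all that is needed is an algebraic reduction, so there is no genuinely hard analytic step.

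First I would handle the mean. Setting $\rho=0$ kills the cross term $-\rho\int_0^T f^r_u f^S_u\,du$ in~(\ref{eq:VTmean}), leaving $\mu_T = m^0_T + m^r_T + m^S_T$. The term $m^S_T$ is already in the required form, since $\frac{1}{\sigma_S}(\bar{x}+e^{-\alpha s}(x_0-\bar{x})) = \xi_s$ by~(\ref{eq:meanEquityRiskPremium}), so $m^S_T = \int_0^T \xi_s f^S_s - \tfrac12 (f^S_s)^2\,ds$. For $m^r_T$, I would impose $a=\kappa$: the term $\frac{a-\kappa}{\sigma_r}\int_0^T e^{-\kappa s}(r_0-\bar{r}) f^r_s\,ds$ vanishes identically, and the first term becomes $\frac{\kappa(\bar{r}-b)}{\sigma_r}\int_0^T f^r_s\,ds = \lambda^r\int_0^T f^r_s\,ds$ with $\lambda^r = \kappa(\bar{r}-b)/\sigma_r$. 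Hence $m^r_T = \int_0^T \lambda^r f^r_s - \tfrac12 (f^r_s)^2\,ds$, and adding the three pieces gives~(\ref{eq:VTmeanIndependent}).

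Next the variance. With $\rho=0$, the integrand in~(\ref{eq:VTvar}) collapses to $(h^r_u)^2 + (h^S_u)^2$, so $\sigma_T^2 = \int_0^T (h^r_u)^2\,du + \int_0^T (h^S_u)^2\,du$. For the equity piece, $h^S_u$ is unchanged and equals $f^S_u - \frac{\sigma_x}{\sigma_S}\int_u^T f^S_s e^{-\alpha(s-u)}\,ds$, which is precisely the second integrand in~(\ref{eq:VTvarIndependent}). For the rate piece, $a=\kappa$ makes the indirect term $(a-\kappa)\int_u^T f^r_s e^{-\kappa(s-u)}\,ds$ vanish, so $h^r_u = \sigma_r\Psi(\kappa,T-u) + f^r_u$, giving the first integrand in~(\ref{eq:VTvarIndependent}). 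Combining yields~(\ref{eq:VTvarIndependent}), and $m^0_T$ is carried over verbatim from~(\ref{eq:m0Tdef}). This completes the reduction.

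There is no real obstacle here: the statement is explicitly flagged in the text as ``a simplified version of Theorem~\ref{thm:VTrep}'', and the only subtlety worth a sentence of justification is that the decomposition of $\mu_T$ into a rate part and an equity part (and likewise for $\sigma_T^2$) is legitimate precisely because $\rho=0$ makes the two Brownian motions independent, so the cross terms drop out and the quadratic variation of $\int_0^T h^r_u\,dW^r_u + \int_0^T h^S_u\,dW^S_u$ splits additively. Log-normality itself requires no new argument, being inherited directly from Theorem~\ref{thm:VTrep} under the same deterministic-$f$ hypothesis.
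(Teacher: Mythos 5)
Your reduction is correct and is exactly how the paper treats this statement: the corollary is presented as a direct specialization of Theorem~\ref{thm:VTrep} (no separate proof is given), obtained by setting $\rho=0$ to drop the cross terms in (\ref{eq:VTmean})--(\ref{eq:VTvar}) and $a=\kappa$ to simplify $m^r_T$ and $h^r_u$, with log-normality inherited from the theorem. Your substitutions and the identification $\lambda^r=\kappa(\bar r-b)/\sigma_r$, $\xi_s$ as in (\ref{eq:meanEquityRiskPremium}) all check out.
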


The problem of this section can now be phrased as

\begin{problem} \label{problem:IndependentRiskFactors}
 Find $(f^r, f^S)$ such that (\ref{eq:VTmeanIndependent}) is maximized for given value of (\ref{eq:VTvarIndependent}).
\end{problem}

For given horizon, $T$, the solution to Problem~\ref{problem:IndependentRiskFactors} takes the form of a family of pairs of rate and equity strategies, one for each value of the variance $\sigma_T^2$. Members of this family are referred to as {\em optimal pairs of strategies}. More generally, we refer to pairs of strategies (locally) maximising, or minimizing, (\ref{eq:VTmeanIndependent}) for given value of (\ref{eq:VTvarIndependent}) as extremal pairs of strategies.

We note that the mean and variance of Corollary~\ref{cor:VTindependent} are the sum of the means and variances, respectively, of Corollaries~\ref{cor:VTratesonly} and \ref{cor:VTequityonly}. Further, the interest rate strategy, $f^r$, and the equity strategy, $f^S$, affect separate terms. This implies that the solutions to Problem~\ref{problem:IndependentRiskFactors} consist of the previously derived, optimal strategies to Problems~\ref{problem:RatesOnly} and \ref{problem:EquitiesOnly}. A fortiori, the extremal pairs consist of extremal bond strategies and extremal equity strategies with the same Lagrange multiplier, $\nu$. Heuristically, the argument is as follows.

Assume $(f^r, f^S)$ is an extremal pair of strategies. This implies that the mean-variance trade-off is the same for all (infinitesimal) perturbations of the strategy. Since if this were not the case, it would be possible to combine two changes with different trade-offs to change the mean, while preserving the variance, contradicting the assumed extremity. In particular, the mean-variance trade-off is the same for all (infinitesimal) perturbations of only the rate strategy, or only the equity strategy. But since the mean-variance trade-off associated with $f^r$ is the same in Problem~\ref{problem:IndependentRiskFactors} as it is in Problem~\ref{problem:RatesOnly}, we conclude that $f^r$ is in fact an extremal bond strategy. Similarly, $f^S$ is in fact an extremal equity strategy. Moreover, the mean-variance trade-off of $f^r$ and $f^S$ must match, and since the trade-off equals (minus) the Lagrange multiplier, $\nu$, we conclude that the extremal strategies $f^r$ and $f^S$ must have the same value of $\nu$. The argument can be made formal, but it is useful to keep the heuristic argument in mind.

\begin{theorem} \label{thm:OptimalJoint}
Assume independent risk factors ($\rho=0$), and constant market price of interest rate risk ($a=\kappa$). For given horizon $T > 0$, the extremal pairs of strategies consist of pairs with $f^r$ given by Theorem~\ref{thm:OptimalBond}, and $f^S$ given by Theorem~\ref{thm:OptimalEquity} (Theorem~\ref{thm:ExtremalOverview}) for the same value of $\nu\neq 1/2$.

For $\nu<0$, the pairs are optimal strategies with positive mean-variance trade-off.
\end{theorem}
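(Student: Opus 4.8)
The plan is to exploit the separable structure furnished by Corollary~\ref{cor:VTindependent}. Write the objective and the constraint as
$\mu_T = m^0_T + \Phi^r(f^r) + \Phi^S(f^S)$ and $\sigma_T^2 = \Gamma^r(f^r) + \Gamma^S(f^S)$,
where $\Phi^r(f^r) = \int_0^T \lambda^r f^r_s - \tfrac{1}{2}(f^r_s)^2\,ds$ and $\Gamma^r(f^r) = \int_0^T[\sigma_r\Psi(\kappa,T-u)+f^r_u]^2\,du$ are exactly the rate functionals of Corollary~\ref{cor:VTratesonly}/Problem~\ref{problem:RatesOnly}, and $\Phi^S,\Gamma^S$ are the equity functionals of Corollary~\ref{cor:VTequityonly}/Problem~\ref{problem:EquitiesOnly}. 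The additive constant $m^0_T$ is irrelevant for the optimization. The decisive structural features are that $f^r$ enters only $(\Phi^r,\Gamma^r)$ and $f^S$ only $(\Phi^S,\Gamma^S)$, and that an admissible perturbation of a pair decomposes as $(g^r,0)+(0,g^S)$ with $g^r$ and $g^S$ ranging independently over their respective admissible classes.

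First I would introduce the Lagrangian $\mathcal{L}_\nu(f^r,f^S) = \mu_T + \nu\,\sigma_T^2 = m^0_T + \bigl[\Phi^r(f^r) + \nu\Gamma^r(f^r)\bigr] + \bigl[\Phi^S(f^S) + \nu\Gamma^S(f^S)\bigr]$, with the sign convention of Theorems~\ref{thm:OptimalBond} and~\ref{thm:OptimalEquity}, and use the defining property of an extremal pair: $(f^r,f^S)$ is extremal for Problem~\ref{problem:IndependentRiskFactors} iff there is a scalar $\nu$ for which the Gateaux derivative of $\mathcal{L}_\nu$ vanishes in every admissible direction. Since $\mathcal{L}_\nu$ splits into a pure-$f^r$ part and a pure-$f^S$ part, and since the perturbation direction may be taken as $(g^r,0)$ or $(0,g^S)$ separately, this condition is equivalent to the simultaneous requirements that (i) $\Phi^r + \nu\Gamma^r$ be stationary at $f^r$ in all directions $g^r$, and (ii) $\Phi^S + \nu\Gamma^S$ be stationary at $f^S$ in all directions $g^S$, for one and the same $\nu$. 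Requirement (i) is precisely the variational stationarity condition characterizing the extremal bond strategies of Theorem~\ref{thm:OptimalBond}, and requirement (ii) is precisely the integral equation of Lemma~\ref{lemma:OptEquityIntegral}, whose solutions are the extremal equity strategies of Theorem~\ref{thm:OptimalEquity}/Theorem~\ref{thm:ExtremalOverview}. This settles both directions at once: the extremal pairs are exactly the pairs (extremal bond strategy, extremal equity strategy) sharing a common $\nu$. The value $\nu=1/2$ is excluded since neither subproblem admits an extremal strategy there (the bond strategy~(\ref{eq:OptimalBond}) blows up, and the coefficient $A=1-2\nu$ in~(\ref{eq:OptEquityODE}) vanishes). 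This is the heuristic preceding the theorem made rigorous.

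For the second assertion I would show that when $\nu<0$ the common-$\nu$ pair is not merely stationary but mean-maximizing for its own variance. By Theorem~\ref{thm:OptimalBond} (case $\nu<0$) the component $f^r$ maximizes $\Phi^r$ subject to $\Gamma^r$ fixed at $v_1:=\Gamma^r(f^r)$, with positive trade-off; by Theorem~\ref{thm:OptimalEquity} the component $f^S$ maximizes $\Phi^S$ subject to $\Gamma^S$ fixed at $v_2:=\Gamma^S(f^S)$, again with positive trade-off; both with the same $\nu$. Set $V^r(v):=\max\{\Phi^r(g):\Gamma^r(g)=v\}$ and define $V^S$ analogously. On the range of variances reached for $\nu<0$ each value function is concave, with one-sided derivative equal to $-\nu$ at $v_1$, resp.\ $v_2$ (the envelope/duality fact; for the rate problem it is also transparent from the closed forms~(\ref{eq:muToptimalRateI})--(\ref{eq:sigmaToptimalRate}) by eliminating $\nu$). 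Then for any competing pair $(\tilde f^r,\tilde f^S)$ with $\Gamma^r(\tilde f^r)+\Gamma^S(\tilde f^S)=v_1+v_2$, the tangent-line bound from concavity gives
\begin{align*}
\Phi^r(\tilde f^r)+\Phi^S(\tilde f^S)
 &\le V^r\bigl(\Gamma^r(\tilde f^r)\bigr)+V^S\bigl(\Gamma^S(\tilde f^S)\bigr) \\
 &\le V^r(v_1)+V^S(v_2) - \nu\bigl[(\Gamma^r(\tilde f^r)-v_1)+(\Gamma^S(\tilde f^S)-v_2)\bigr]
   = \Phi^r(f^r)+\Phi^S(f^S),
\end{align*}
the final bracket vanishing by the shared variance constraint; hence $\mu_T(\tilde f^r,\tilde f^S)\le\mu_T(f^r,f^S)$, so the pair attains the largest possible $\mu_T$ for its variance, i.e.\ it is an optimal pair, clearly with positive mean-variance trade-off.

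The main obstacle is the concavity of the value functions $V^r$ and $V^S$ on the relevant variance range, i.e.\ that for $\nu<0$ the componentwise-efficient curves really are the upper boundaries of the attainable mean--variance regions of the two subproblems. For the rate problem this is checked directly from~(\ref{eq:sigmaToptimalRate}) and~(\ref{eq:muToptimalRateI})--(\ref{eq:muToptimalRateII}). For the equity problem it is less elementary, and, as the footnote in Section~\ref{sec:OptEqStrategyPosTradeOff} already concedes, a fully rigorous version would require analysing the global structure of the extremal family; in practice it is read off the mean--variance plot of all extremal strategies. If one is content with the weaker, self-contained statement that the common-$\nu$ pair is extremal, only the first two paragraphs are needed.
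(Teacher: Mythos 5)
Your first two paragraphs are essentially the paper's own proof: the paper also perturbs the pair with independent comparison directions for $f^r$ and $f^S$ (a four-parameter family $F^r=f^r+\epsilon^r_1\eta^r_1+\epsilon^r_2\eta^r_2$, $F^S=f^S+\epsilon^S_1\eta^S_1+\epsilon^S_2\eta^S_2$), notes that the mean and variance functionals split additively as $I=I^r+I^S$, $J=J^r+J^S$, introduces a single multiplier $\nu$ for the joint constraint, and observes that the four stationarity conditions decouple into exactly the conditions defining extremal bond strategies (proof of Theorem~\ref{thm:OptimalBond}) and extremal equity strategies (Lemma~\ref{lemma:OptEquityIntegral}) with the same $\nu$; your Gateaux-derivative phrasing is the same argument. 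Where you go beyond the paper is the second sentence of the theorem: the paper's appendix proof stops at the iff characterization and simply inherits the $\nu<0$ optimality label from the component theorems (subject to the caveat in the footnote of Section~\ref{sec:OptEqStrategyPosTradeOff}), whereas you attempt a genuine global argument via the value functions $V^r$, $V^S$ and a tangent-line bound. That argument is sound conditional on concavity of the value functions, which you correctly identify as the missing ingredient for the equity subproblem --- it is precisely the global structure the paper itself declines to analyse --- so your version is, if anything, more explicit about what the stronger claim would require; for the rate subproblem the concavity can indeed be verified from (\ref{eq:muToptimalRateI})--(\ref{eq:sigmaToptimalRate}). No gap relative to what the paper actually proves.
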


We note from Theorem~\ref{thm:OptimalJoint} that all extremal rate and equity strategies belong to one, and only one, extremal pair. For the rate strategies, we can compute a priori the variance contribution by formula (\ref{eq:sigmaToptimalRate}), but we have no similar (simple) formula for the variance contribution of the equity strategies. Thus, we cannot a priori say anything about the relative size of the variance contributions from rates and equities in the extremal pairs. In particular, we do not know in general whether the optimal pairs are balanced regarding rate and equity risk, or whether one of the risk sources dominates. Of course, we can answer this question numerically on a case by case basis by computing the family of extremal pairs and the variance contribution from each risk source.

In addition to the optimal pairs of strategies, it is also of interest to consider strategies with the same amount of rate and equity risk. Such risk parity, or balanced, strategies might be considered more robust as they do not rely on assumed differences in risk premia. In the present context, one could consider pairs of strategies with the same variance contribution on the horizon, i.e., pair up strategies solving Problems~\ref{problem:RatesOnly} and \ref{problem:EquitiesOnly} for the same value of $\sigma_T^2$ given by, respectively, (\ref{eq:VTvarratesonly}) and (\ref{eq:VTvarequityonly}). Again, this pairing has to be done numerically on a case by case basis. We do not pursue these ideas further in this paper.

\subsubsection{Dependent risk factors}
The case of dependent risk factors ($\rho\neq 0$) is mathematically rather more challenging. In full generality, the problem amounts to maximising (\ref{eq:VTmean}) for given value of (\ref{eq:VTvar}). We see from Theorem~\ref{thm:VTrep}, that for $\rho\neq 0$ the mean and variance of $\log(V_T/V_0)$ are affected by terms depending on both $f^r$ and $f^S$. Thus we cannot hope to solve the general problem by "stitching" together partial solutions. In fact, it is not even clear how to separate the mean and variance contributions arising from, respectively, the rate and equity exposures.

Apart from the mathematical difficulties, we also argue that the resulting strategies---if we were able to obtain them---are unlikely to be of much value in practice. An assumed correlation between rate and equity risk factors is very hard to verify in practice, and we therefore might be reluctant to pursue strategies which exploit this correlation. From that perspective, the independence assumption is the "neutral" assumption most often used in practice.

Rather than deriving the general solution, we can alternatively test the robustness of strategies derived under no correlation in an environment with correlation. We can, e.g., establish the correlation range for which the derived strategies are better than constant strategies. This in turn can be used as a guide for when the derived strategies are near-optimal without the need for an exact value of the correlation coefficient. We do not, however, pursue this idea further.

As an indication of the complexity of the optimal strategies under dependent risk factors we derive the optimal rate strategy for given equity exposure $f^S$. Assuming $a=\kappa$, this amounts to maximizing the mean
\begin{equation}
    \mu_T   = m^0_T + \int_0^T f^r_s \left[\lambda^r - \rho f_s^S \right] ds - \frac{1}{2}\int_0^T \left(f^r_s\right)^2ds ,
\end{equation}
for given value of the variance
\begin{equation}
     \sigma_T^2 = \int_0^T \left[\sigma_r\Psi(\kappa,T-s) + f^r_s\right]\left[\sigma_r\Psi(\kappa,T-s) + f^r_s +2\rho h_s^S\right] ds.
\end{equation}
These expressions follow from Theorem~\ref{thm:VTrep}, see also Corollary~\ref{cor:VTratesonly} and Problem~\ref{problem:RatesOnly} for the original optimization problem. By a straightforward extension of the proof of Theorem~\ref{thm:OptimalBond} we find that the optimal rate strategies on horizon $T$ are of the form
\begin{equation}
     f_s^r = \frac{\lambda^r + 2\nu g_s^r + \rho(2\nu h_s^S- f_s^S)}{1-2\nu} \quad(0\leq s\leq T),
\end{equation}
with $g_s^r=\sigma_r\Psi(\kappa,T-s)$ and $\nu<1/2$. We see that, although computable, the strategy is complicated and with explicit reference to the given equity exposure via both $f^S$ and $h^S$. This indicates that the optimal pais of strategies are presumably very complicated indeed, and we will not try to find them.

\pagebreak
\section{Numerical illustrations} \label{sec:numeric}
In this section we provide numerical illustrations of the optimal risk-reward profiles and the underlying optimal strategies. We cover optimal rate and equity strategies separately, with an emphasis on the latter. Assuming independent risk factors, the jointly optimal pais of strategies consist of optimal rate and equity strategies with the same value of $\nu$, cf.\ Theorem~\ref{thm:OptimalJoint}. Thus, joint optimization consists of pairing the illustrated optimal rate and equity strategies. We do not, however, explicitly consider joint optimization in this section.

\subsection{Optimal rate strategies} \label{sec:NumOptimalRate}
In the following we consider the optimal strategies of Section~\ref{sec:OptimalBond}, i.e., optimal strategies when we are allowed to invest only in the interest rate market. For illustrative purposes we consider two parameter sets, corresponding to moderate and low market prices of interest rate risk, cf.\ Table~\ref{tab:ratePar}. For these parameter sets we show the risk-reward profile for the entire family of extremal strategies and we give examples of optimal rate strategies and the resulting portfolio distributions. We also consider alternative risk and reward statistics connected to the portfolio distributions.

\subsubsection{Yield curves}
First, we visualize the interest rate assumptions. The continuously compounded zero-coupon yield for the period $[t,T]$, $r_t(T)$, is defined by the relation $p_t(T)=\exp\{-(T-t)r_t(T)\}$. From (\ref{eq:ZCBpriceAltExp}) it follows
\begin{equation} \label{eq:yieldDef}
   r_t(T) = b + \frac{\Psi(a,\Delta)}{\Delta}(r_t-b) - \frac{\sigma_r^2}{2}\frac{\Upsilon(a,\Delta)}{\Delta},
\end{equation}
where $\Delta=T-t$, and $\Psi(a,0)/0 = 1$ and $\Upsilon(a,0)/0=0$ are defined by continuity. The yield as a function of $T$ is referred to as the yield curve. The yield curve and the corresponding curve of zero-coupon bond prices are equivalent ways of representing the bond market, but we typically prefer the former due to its more intuitive interpretation.

\begin{table}[ht]
  \centering
\rowcolors{2}{gray!25}{white}
\bgroup
\def\arraystretch{1.3}
\begin{tabular}{C{2.6cm}C{1.2cm}C{1.2cm}C{1.2cm}C{1.2cm}C{1.2cm}C{1.2cm}} \hline
  \rowcolor{gray!50}
    Parameter set  &   $\kappa$   & $\bar{r}$  &  $\sigma_r$  & $a$  &  $b$  & $-\lambda^r$ \\ \hline
    Moderate       &    0.08      &   0.02     &   0.007      & 0.08 & 0.04  & 0.2286      \\
    Low            &    0.08      &   0.02     &   0.007      & 0.08 & 0.03  & 0.1143      \\  \hline
\end{tabular}
\egroup
  \caption{Parameters governing the short-rate $P$-dynamics (\ref{eq:shortrate}), $\kappa$, $\bar{r}$, and $\sigma_r$, and pricing parameters, $a$ and $b$.
  Since $a=\kappa$, the market price of interest rate risk is constant and equal to $\lambda^r = \kappa(\bar{r}-b)/\sigma_r$. The last column shows $-\lambda^r$ which is the continuous-time
  analogue to the Sharpe ratio, i.e., excess return per unit of volatility for holding bonds.}
  \label{tab:ratePar}
\end{table}

Figure~\ref{fig:yieldCurves} shows yield curves corresponding to different values of the short rate, $r_0$, for the two parameter sets in Table~\ref{tab:ratePar}. The yield curves represent the yield that can be locked in today (time $0$) by purchasing a zero-coupon bond. In the left plot interest rate risk is rewarded higher than in the right plot, and consequently the yields are higher. For example, if $r_0=0\%$ (green curves) we can lock in a return of $r_0(20)=1.89\%$ per year on a 20-year horizon when the market price of risk is moderate (left plot), and a return of $r_0(20)=1.39\%$ when the market price of risk is low (right plot). Note that since the short-rate $P$-dynamics are the same in the two cases, a money-market account will give rise to the {\em exact same} return distribution on any horizon in the two cases, while bond strategies will yield higher returns in the "moderate" parameter set than in the "low" parameter set.

\begin{figure}[ht]
\begin{center}
\includegraphics[height=7.5cm]{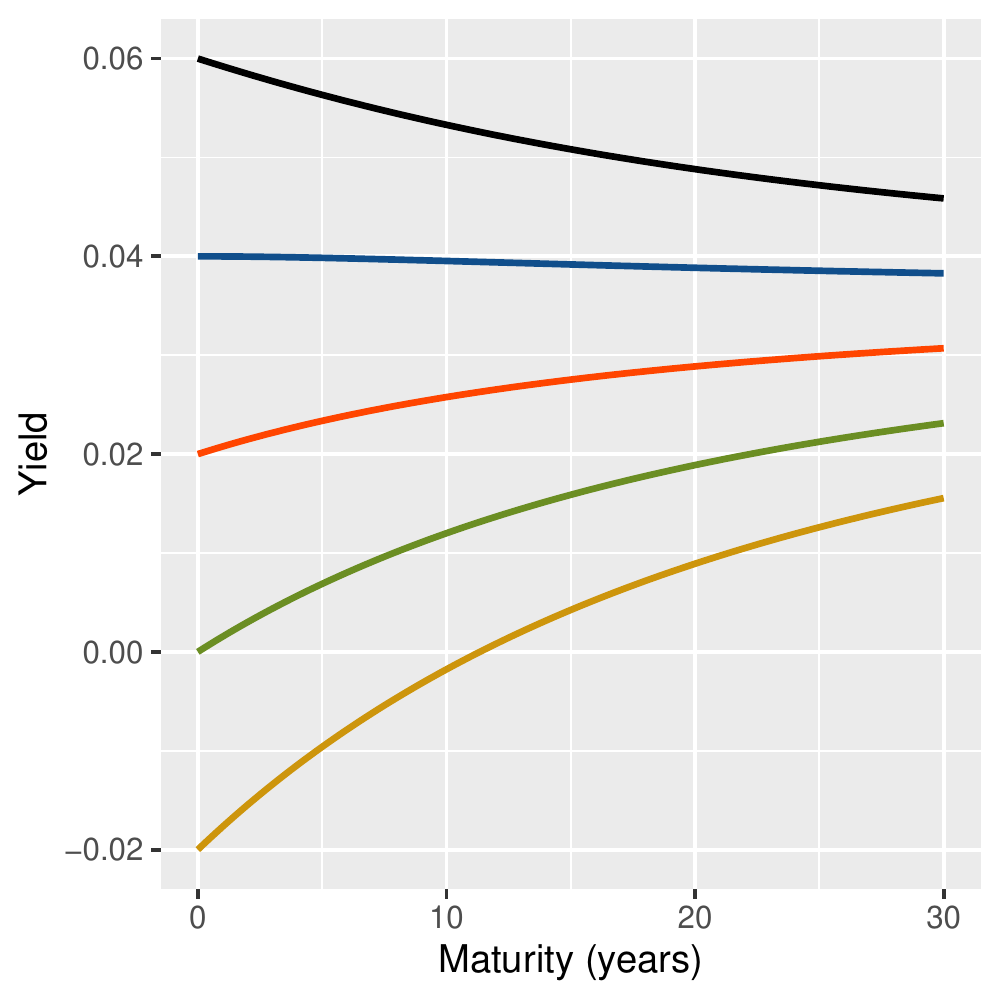}   
\hfill
\includegraphics[height=7.5cm]{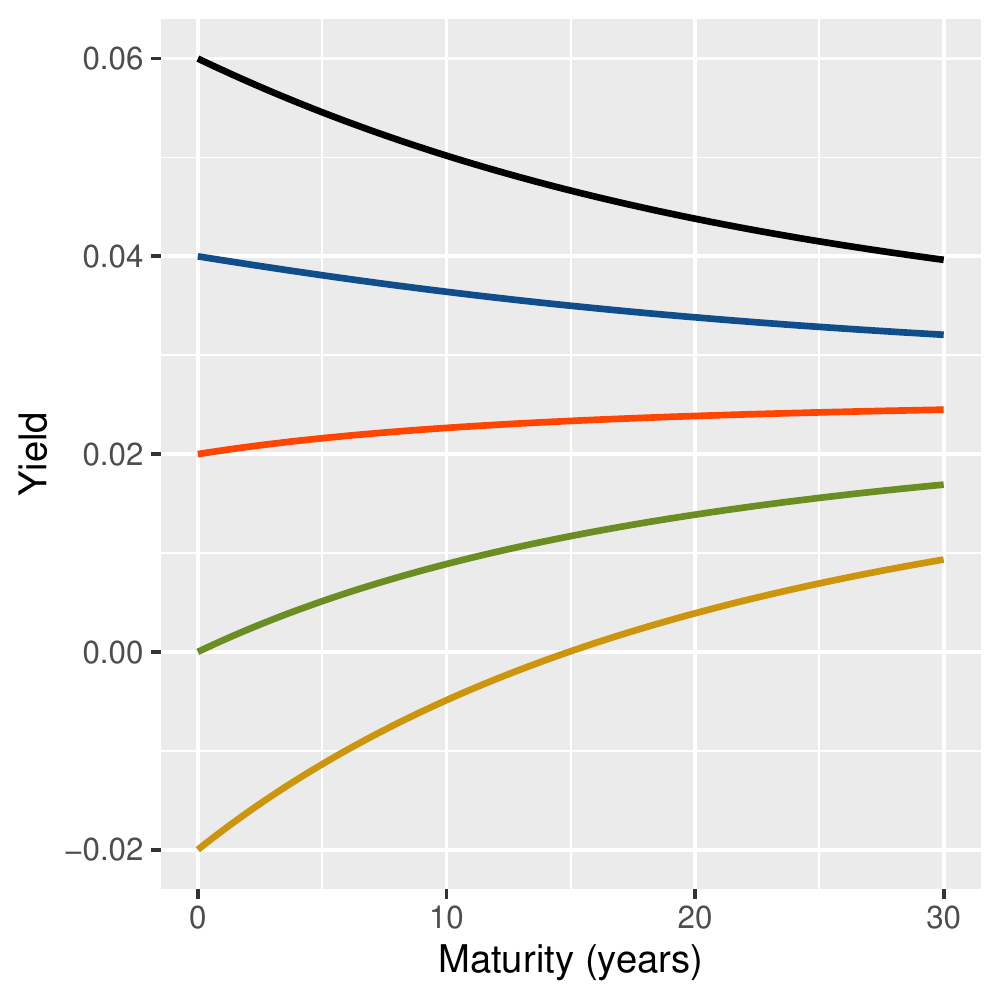}   
\end{center}
\vspace*{-5mm}
\caption{Illustration of yield curves with moderate (left) and low (right) market price of interest rate risk, cf.\ Table~\ref{tab:ratePar}. For each parameter set, the yield curve for five different values
of the short-rate is shown, $r_0=-2\%,\ 0\%,\ 2\%,\ 4\%,\ 6\%$.}
\label{fig:yieldCurves}
\end{figure}

\subsubsection{Risk-reward profiles} \label{sec:MVprofileOptimalRate}
Section~\ref{sec:closedFormMeanVar} provides formulas for the extremal mean and variance, i.e., the log-portfolio mean, $\mu_T$, and the log-portfolio variance, $\sigma_T^2$, arising from following the extremal rate strategies of Theorem~\ref{thm:OptimalBond}. Figure~\ref{fig:rateMVprofile} plots $\mu_T$ against $\sigma_T$ for $T=20$ for four different values of the initial short-rate, $r_0$, using the same colour scheme as in Figure~\ref{tab:ratePar}. We refer to the curves as risk-reward profiles. The upper part of the profiles represent maximizing strategies ($\nu<1/2$), and the lower part of the profiles represent minimizing strategies ($\nu>1/2$), where $\nu$ is the Lagrange multiplier of Theorem~\ref{thm:OptimalBond}. The upper part of the profiles consists of an initial upward-sloping part ($\nu<0$) where adding risk increases the mean until the global maximum is achieved ($\nu=0$), and a downward-sloping part ($0<\nu<1/2$) where risk is so high that adding further risk decreases the mean. Of course, for practical purposes only strategies with $\nu\leq 0$ are of interest.

It follows from Corollary~\ref{cor:VTratesonly} and expressions (\ref{eq:muToptimalRateI})---(\ref{eq:sigmaToptimalRate}) that,
\begin{equation}  \label{eq:VTrateFactorStructure}
   V_T \stackrel{\cal{D}}{=} V_0 \exp\left(\mu_T+\sigma_T U \right) = \frac{V_0}{p_0(T)}\underbrace{\exp\left(\mu_T + \log(p_0(T)) +\sigma_T U\right)}_{\mbox{independent of $r_0$}} = \frac{V_0}{p_0(T)}Y_T,
\end{equation}
where $U$ is a standard normal variate. Thus, the portfolio value on the horizon can be interpreted as the amount that can be locked in with certainty at time $0$ times a stochastic factor, $Y_T$. The stochastic factor depends on the $P$-dynamics of the short-rate process, the market price of interest rate risk,  and $\nu$, but not on the initial value of the short-rate, $r_0$. From this observation and (\ref{eq:VTrateFactorStructure}) it follows that the profiles in Figure~\ref{fig:rateMVprofile} are (vertical) translations of each other, with the (vertical) distance between two profiles being equal to the difference in log bond prices.

\begin{figure}[ht]
\begin{center}
\includegraphics[height=7.5cm]{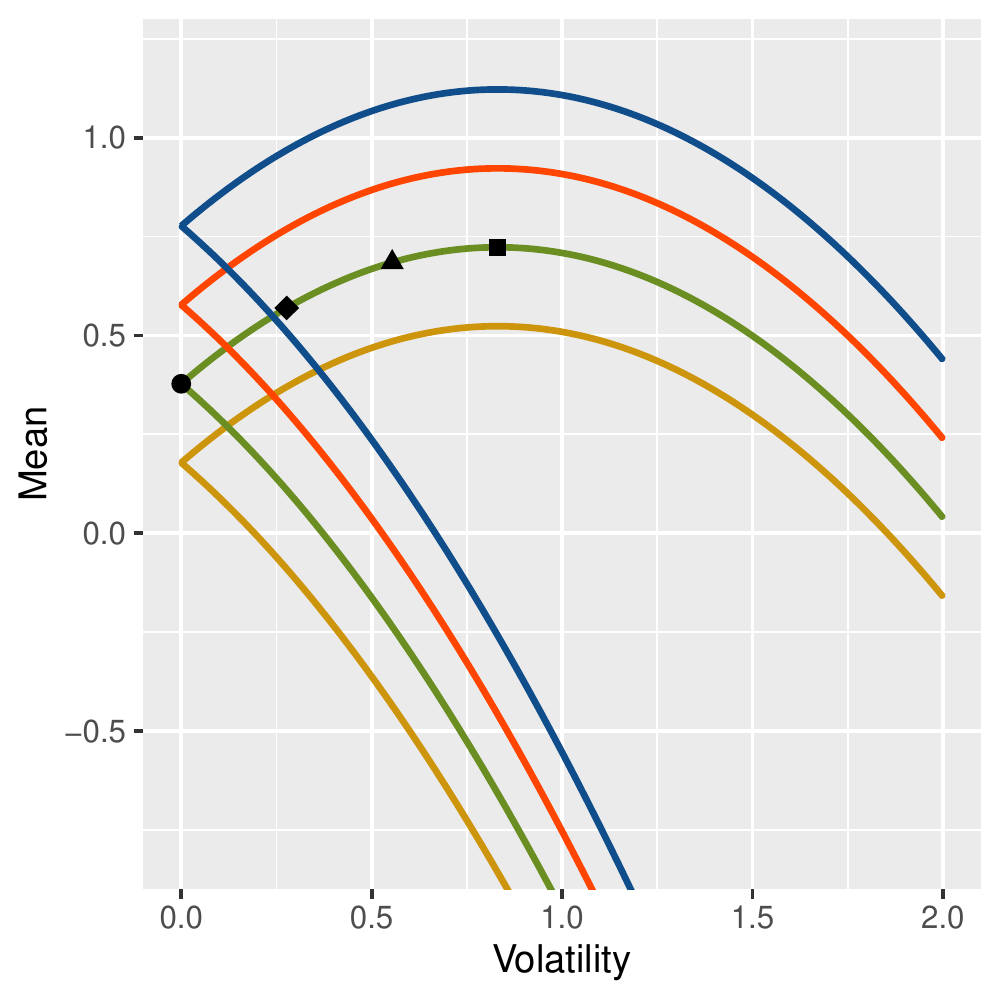}   
\hfill
\includegraphics[height=7.5cm]{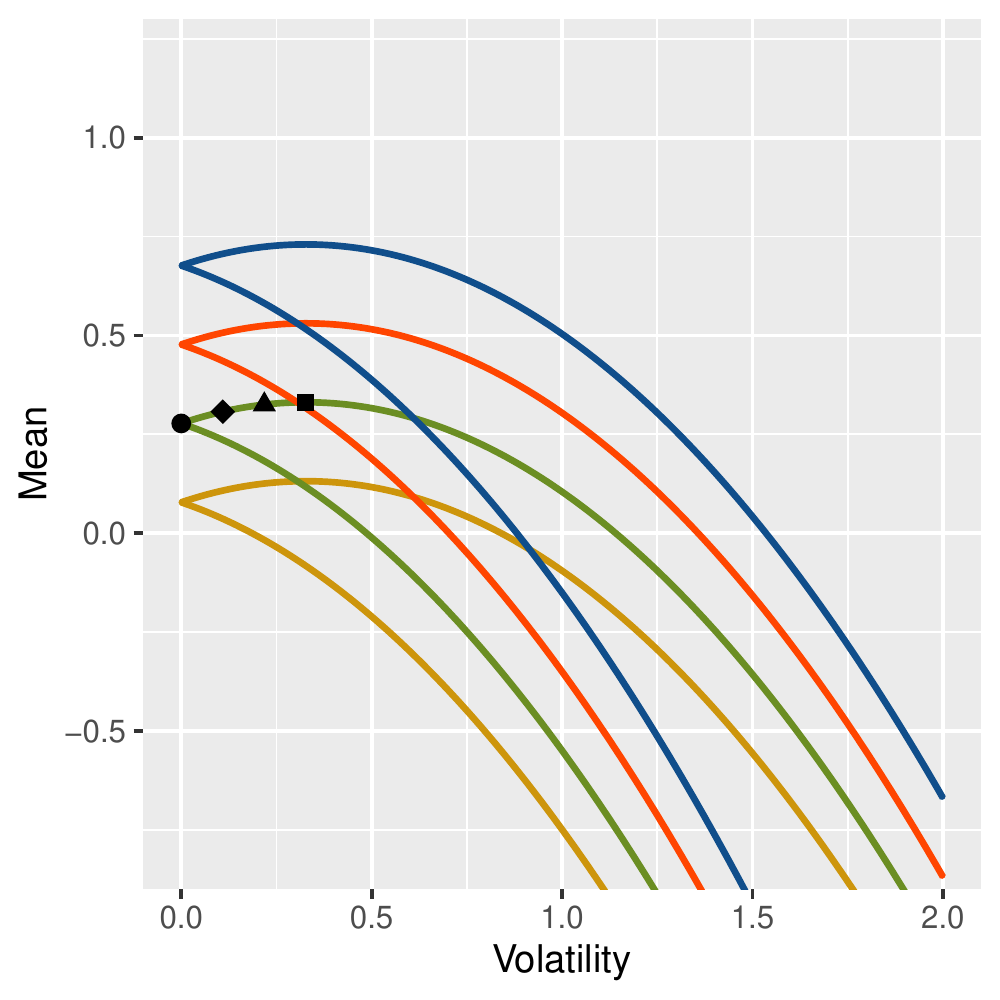}   
\end{center}
\vspace*{-5mm}
\caption{Risk-reward profiles, $(\sigma_T,\mu_T)$, for extremal rate strategies on a horizon of $T=20$ years for moderate (left) and low (right) market price of interest rate risk, cf.\ Table~\ref{tab:ratePar}. The different profiles correspond to $r_0$ of $-2\%$ (yellow), $0\%$ (green), $2\%$ (orange), and $4\%$ (blue). For $r_0=0\%$, four risk-reward pairs are marked, corresponding to $\nu=-\infty$ ({\large $\bullet$}), $\nu=-1$ ($\blacklozenge$), $\nu=-1/4$ ($\blacktriangle$), and $\nu=0$ ({\scriptsize $\blacksquare$}); the underlying portfolio distributions and strategies are shown in Figures~\ref{fig:rateDistrib} and \ref{fig:rateStrategy}, respectively.}
\label{fig:rateMVprofile}
\end{figure}

\subsubsection{Portfolio distributions}
We have marked four points on each of the profiles in Figure~\ref{fig:rateMVprofile} for $r_0=0\%$. Figure~\ref{fig:rateDistrib} illustrates the underlying distribution of $V_T$ (for $V_0=1$); alternatively, we can interpret the plots as showing $Y_T/p_0(T)$, i.e., the stochastic multiplier scaled by the 20-year bond return that can be obtained when $r_0=0\%$. The thin black lines show the median, $\exp(\mu_T)$, of each distribution.
As $\nu$ is increased from $-\infty$ to $0$ the median increases, but the volatility also increases markedly. In particular, when the market price of risk is low (right plot) the increase in median is very modest compared to the additional risk. It follows from (\ref{eq:VTrateFactorStructure}) that, apart from scaling, these plots will look the same for other values of $r_0$.

In Appendix~\ref{app:statistics}, Tables~\ref{tab:rateStatisticsMod} and \ref{tab:rateStatisticsLow} quantify various risk and reward statistics for $Y_T$ for a range of optimal rate strategies. The tables show the median as a measure of the reward, along with three different risk measures:
\begin{itemize}
  \item The probability that we earn less than the risk-free investment, $\gP(Y_T<1)$;
  \item The expected size of the loss given that we experience a loss, $\E[1-Y_T|Y_T<1]$;
  \item The expected size of the loss, $\E[1-Y_T]^+ = \E[1-Y_T|Y_T<1]\gP(Y_T<1)$.
\end{itemize}

Since $Y_T=1$ can be obtained with no risk by holding only $T$-bonds, we interpret $Y_T<1$ as a loss relative to this risk-free strategy. The risk measures therefore focus on the event $Y_Y<1$. We see that for given $\nu$, the probability of earning less than the risk-free investment is substantially higher when the market price of interest rate risk is low (Table~\ref{tab:rateStatisticsLow}) than when it is moderate (Table~\ref{tab:rateStatisticsMod}), but both the conditional and unconditional losses are smaller. This is due to the fact that the exposure depends on the market price of risk, see Figure~\ref{fig:rateStrategy}. Thus, when risk is less rewarded the exposure is decreased and therefore as the distribution of $Y_T$ moves "to the right" it also becomes more narrow.

\begin{figure}[ht]
\begin{center}
\includegraphics[height=7.5cm]{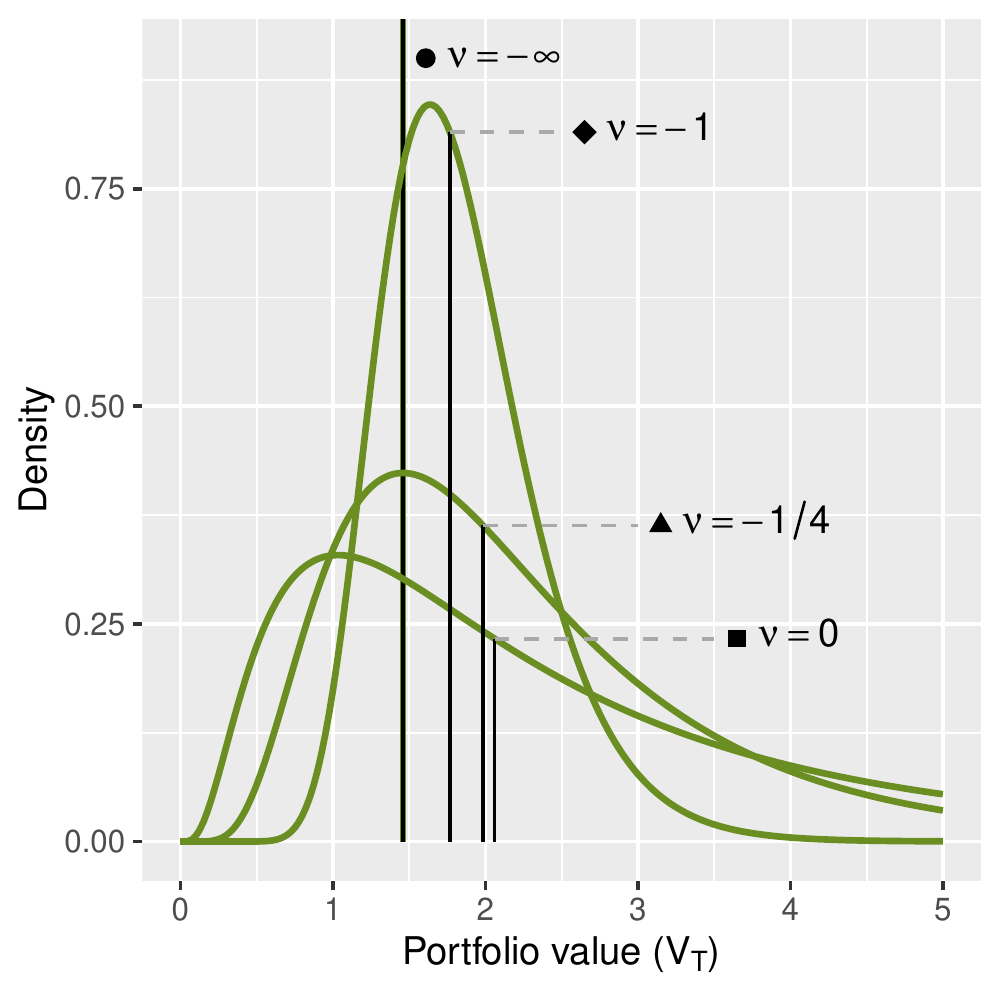}   
\hfill
\includegraphics[height=7.5cm]{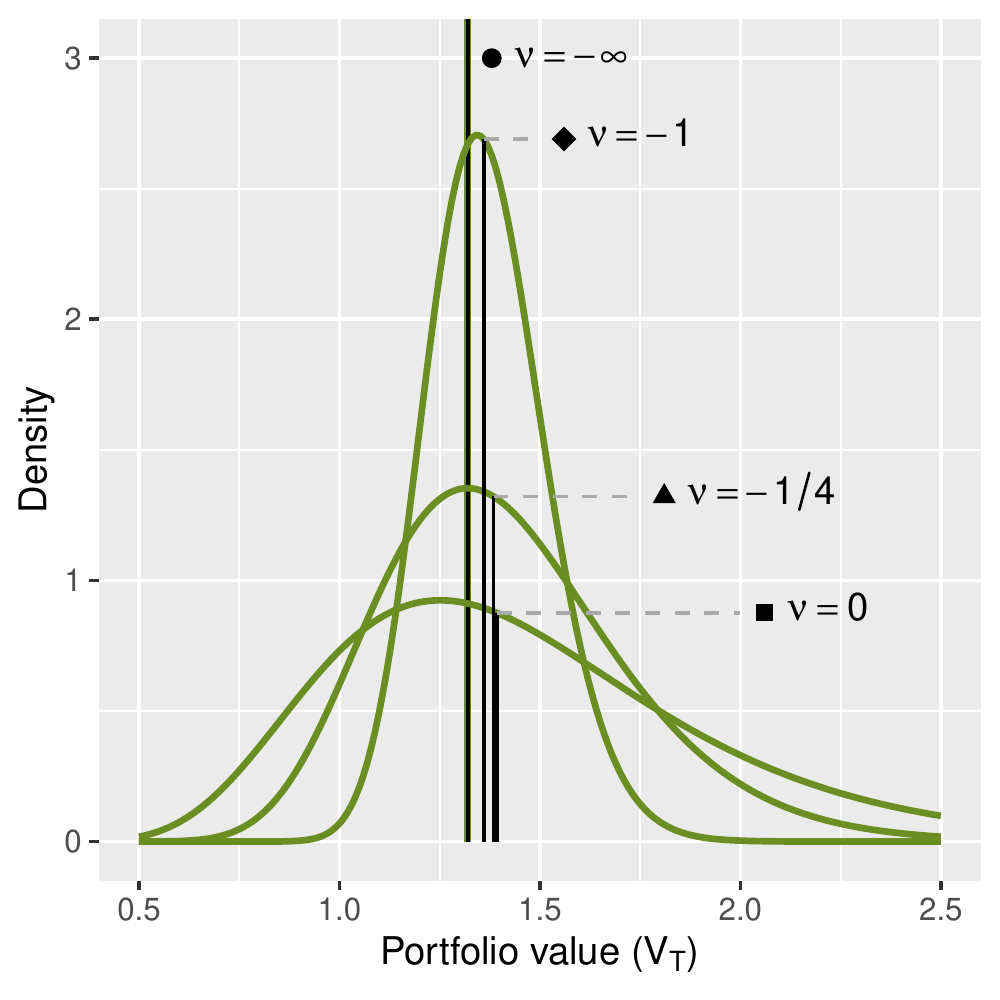}   
\end{center}
\vspace*{-5mm}
\caption{The distribution of $V_T$ for $T=20$ years for four optimal rate strategies when $V_0=1$, $r_0=0\%$ and the market price of interest rate risk is moderate (left) and low (right), cf.\ Table~\ref{tab:ratePar}. For $\nu=-\infty$, $V_T=1/p_0(T)$ is constant which is marked by a vertical green line. The thin black lines mark the median of each distribution. The underlying strategies are shown in Figure~\ref{fig:rateStrategy}.}
\label{fig:rateDistrib}
\end{figure}

\subsubsection{Illustration of optimal strategies}
Figure~\ref{fig:rateStrategy} shows the underlying strategies for the portfolio distributions of Figure~\ref{fig:rateDistrib}. The strategies are given by (\ref{eq:OptimalBond}) of Theorem~\ref{thm:OptimalBond}. The strategies give rise to the stochastic multiplier, $Y_T$, which acts on the risk-free bond return that can be locked in initially. The bond return depends on the initial yield curve which in turn depends on the initial short-rate, $r_0$. The strategies themselves, however, do not depend on $r_0$. In particular, Figure~\ref{fig:rateStrategy} is valid for all interest rate levels. In other words, an investor with a {\em relative} risk tolerance will choose the same strategy regardless of the interest rate level and thereby obtain the same (stochastic) multiplier on the risk-free investment. Conversely, an investor with an {\em absolute} risk tolerance will choose a strategy that depends on the interest rate level, but still within the family of optimal strategies.

In Figure~\ref{fig:rateStrategy} we plot the negated interest rate exposure ($-f_s^r$) as a function time, rather than the interest rate exposure itself. The optimal exposure is negative (corresponding to holding bonds) and the negated exposure therefore gives a more intuitive plot with higher exposures implying larger bond holdings. Also, for the optimal strategies the negated exposure is equal to the (instantaneous) volatility of the portfolio.\footnote{It follows from (\ref{eq:Vdyn}) that, in general, the (instantaneous) volatility of the portfolio at time $s$ is given by $\sqrt{[f_s^r]^2+[f_s^S]^2+2\rho f_s^r f_s^S}$. With no equity risk, $f_s^S=0$, this reduces to $|f_s^r|$, which equals $-f_s^r$ when $f_s^r$ is negative.}

Recall from Section~\ref{sec:OptimalBond} that $\nu=-\infty$ corresponds to the risk-free, buy-and-hold strategy where $V$ is fully invested in $T$-bonds, while $\nu=0$ corresponds to a constant risk exposure of $\lambda^r$ which is the strategy achieving the (unconstrained) maximal mean. For $-\infty < \nu <0$ we get a convex combination of these two strategies. Holding only $T$-bonds corresponds to a (negated) exposure of $-f_s^r=\sigma_r\Psi(\kappa,T-s)$ with an initial volatility of $0.007\Psi(0.08, 20)=7\%$, decreasing to zero over time. This (hedging) strategy does not depend on the market price of interest rate risk, and it is therefore the same for the two parameter sets. All other optimal strategies, however, depend on the market price of interest rate risk: The higher the market price of risk, the higher the optimal exposure (for given $\nu$).

Finally, Figure~\ref{fig:rateStrategy} also shows an example of an optimal strategy on the "wrong" side of the risk-reward profile ($\nu=1/8$) where risk is so excessive that the quadratic term in (\ref{eq:VTmeanratesonly}) impairs the mean. The exposure is still negative, however, i.e., we are long in bonds. In contrast, for mean minimizing strategies the exposure is fully or partly positive, corresponding to shorting bonds; if drawn these strategies would lie below the risk-free strategy.

\begin{figure}[ht]
\begin{center}
\includegraphics[height=7.5cm]{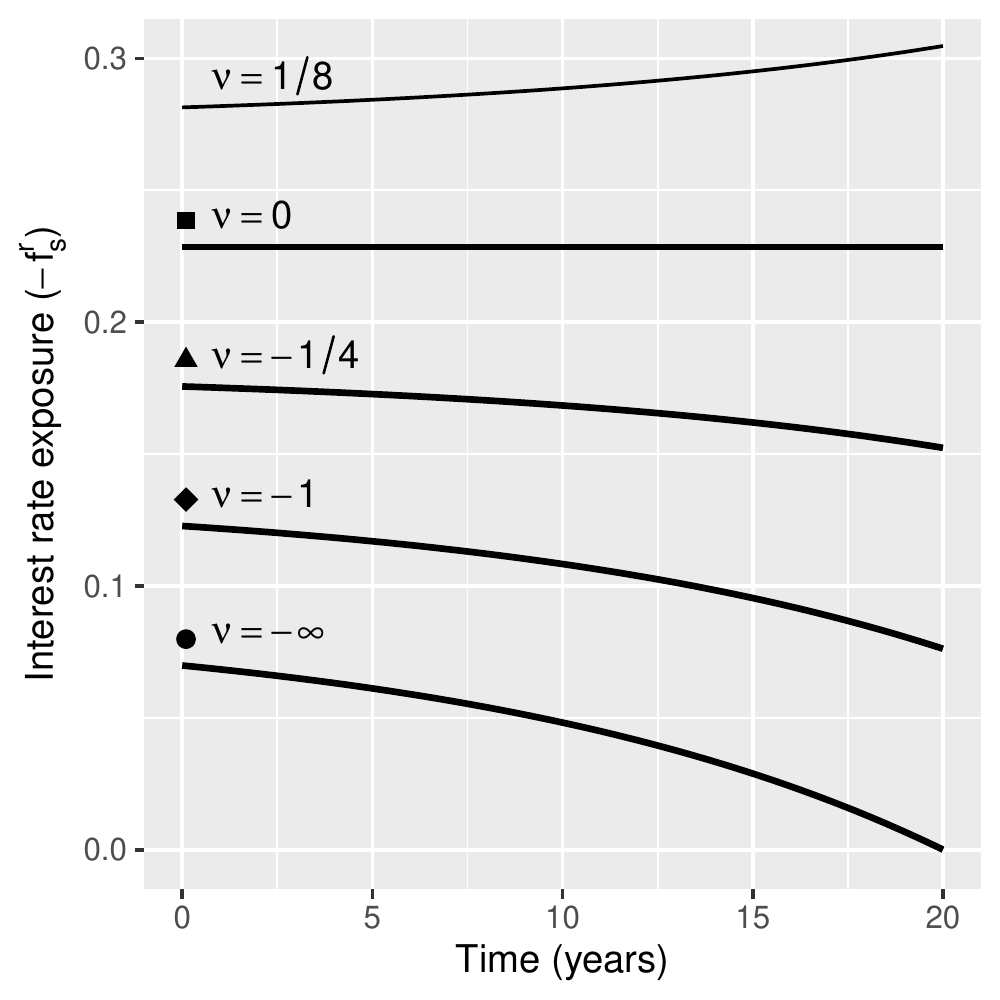}   
\hfill
\includegraphics[height=7.5cm]{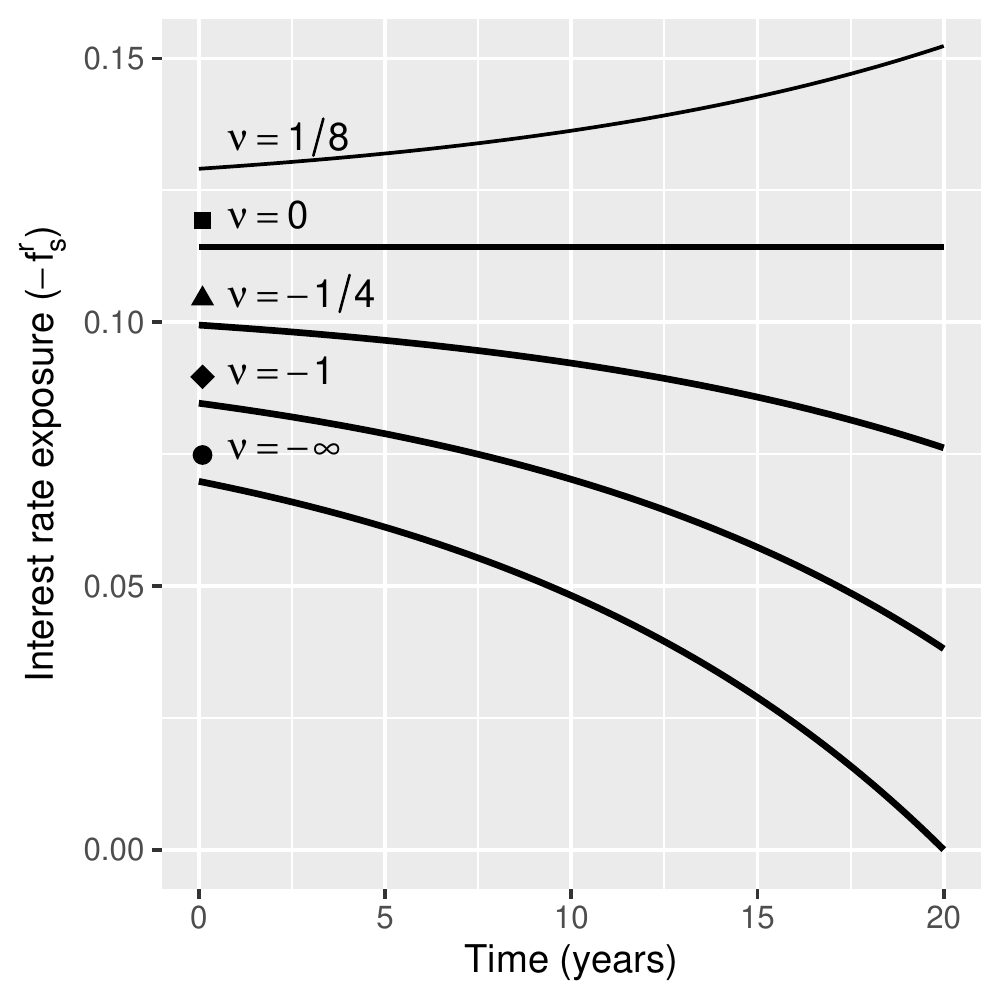}   
\end{center}
\vspace*{-5mm}
\caption{Optimal rate strategies for $T=20$ years when the market price of interest rate risk is moderate (left) and low (right), cf.\ Table~\ref{tab:ratePar}. The symbols distinguishing four of the strategies are also used in Figures~\ref{fig:rateMVprofile} and \ref{fig:rateDistrib} where they mark the risk-reward pair, ($\sigma_T,\mu_T$), and the portfolio value at the horizon, $V_T$, respectively, when the strategies are applied in a market with $r_0=0\%$.}
\label{fig:rateStrategy}
\end{figure}

\clearpage
\subsection{Optimal equity strategies} \label{sec:NumOptimalEquity}
We now turn attention to the optimal strategies of Section~\ref{sec:OptimalEquity}, i.e., optimal strategies where we are allowed to invest only in the equity market. Recall that we are optimizing {\em excess} returns on a given horizon and that we interpret the resulting strategies as overlay strategies, i.e, strategies where the exposure is financed by a corresponding short money-market position. In our setup, an equity strategy acts as a stochastic multiplier to (the portfolio value of) the underlying rate strategy, which in turn acts as a stochastic multiplier to the risk-free, $T$-bond hedging strategy.

\subsubsection{Volatility profiles and calibration of mean-reversion}
We begin with an investigation of the model for equity returns (\ref{eq:stockindex})---(\ref{eq:excessreturn}). The rationale for the model is that equity returns above/below the expected return leads to a decrease/increase in the future risk premium, $x_t$. Thus, there is a tendency for gains to follow losses, and vice versa, and this mean-reversion feature effectively compresses the equity process, and prevents both extreme loses and extreme gains over longer periods. The model is simple, yet it can produce a variety of different volatility profiles for the equity process. In fact, the simplicity of the model is deceptive and counter-intuitive volatility profiles can easily arise even for "innocent looking" parameter values.

There are two parameters of interest in the risk premium process: $\alpha$ controlling the degree of mean reversion, and $\sigma_x$ controlling the size of feedbacks. At first sight, we are free to choose these parameters as we please. A larger/smaller $\sigma_x$ causes a larger/smaller correction of the equity premium, while a larger/smaller value of $\alpha$ causes a shorter/longer "corrective" period. While mean-reversion decreases the volatility of the equity process in the short term, excessive mean-reversion ($\sigma_x$ too large and/or $\alpha$ too small) actually increases volatility in the long term. Intuitively, if mean-reversion parameters are set too aggressively, the uncertainty that builds up in the risk premium process can overshadow the short-term volatility reductions and lead to an overall increase in volatility.

In the following we study this phenomenon more formally and give recommendations on the degree of mean-reversion. We are interested in the process of excess equity returns, which we denote by $\tilde{S}_t$. It follows from Section~3 of \cite{jarpre17} that
\begin{equation}
   \tilde{S}_t \equiv S_t \exp\left\{-\int_0^t r_s ds \right\} = S_0\left\{\int_0^t x_s ds - \frac{\sigma_S^2}{2}t + \sigma_S W^S_t\right\},
\end{equation}
and, further, from Section~6 ibid.\ it follows that
\begin{equation} \label{eq:varlogStilde}
   \Var[\log \tilde{S}_t] = \sigma_x^2 \Upsilon(\alpha,t) + \sigma_S^2 t - 2\sigma_x \sigma_S \Theta(\alpha,t),
\end{equation}
where $\Upsilon$ is given by (\ref{eq:upsilondef}) on page \pageref{eq:upsilondef} of the present report, and $\Theta$ is given by
\begin{align} \label{eq:thetadef}
  \Theta(a,t) \equiv \int_0^t \Psi(a,s) ds =
  \begin{cases}
  \frac{t^2}{2} & \mbox{for } a = 0, \\
  \frac{1}{a^2}\left(-1 + at + e^{-at} \right) & \mbox{for } a \neq 0; \\
  \end{cases}
\end{align}
the interested reader is referred to Appendix~A of \cite{jarpre17} for background information on the expression for the variance. We note that the variance expression in (\ref{eq:varlogStilde}) contains both a positive (first) term arising from the stochasticity of the risk premium itself, and a negative (last) term due to the link between innovations and risk premia. The asymptotic rate of increase of these two terms determine whether mean-reversion leads to an increase or a decrease of the long-term variance---compared to a (Black-Scholes) model with constant risk premium, $x_t = \bar{x}$.

For $\alpha >0$, we have from Section~6 of \cite{jarpre17} that the asymptotic (rate of) variance is given by
\begin{equation} \label{eq:varlogStildeAsymp}
   \lim_{t\to\infty}\frac{\Var[\log \tilde{S}_t]}{t} = \frac{\sigma_x^2}{\alpha^2} + \sigma_S^2 - 2\frac{\sigma_x \sigma_S}{\alpha} = \left(\sigma_S -\frac{\sigma_x}{\alpha}\right)^2 = \sigma_S^2\left(1-\frac{1}{\tilde{\alpha}}\right)^2,
\end{equation}
where $\tilde{\alpha}=\alpha/[\sigma_x/\sigma_S]$ is the mean-reversion ratio encountered in Section~\ref{sec:OptEqStrategyNegTradeOff}. It follows from (\ref{eq:varlogStildeAsymp}) that the asymptotic variance is the same, or below, the variance of the Black-Scholes model if and only if $\tilde{\alpha}\geq 1/2$.

\begin{table}[ht]
  \centering
\rowcolors{2}{gray!25}{white}
\bgroup
\def\arraystretch{1.3}
\begin{tabular}{C{1.5cm}C{1.4cm}C{1.4cm}C{1.4cm}C{1.4cm}|C{2cm}C{1.8cm}} \hline
  \rowcolor{gray!50}
 Par.\ set &  $\sigma_S$ & $\sigma_x$  &  $\alpha$  & $\tilde{\alpha}$ &  Asym.\ vol.  & SD($x_\infty$) \\ \hline
     1     &    0.15     &    0        &     0.06   &       $\infty$   &   0.15        &      0         \\
     2     &    0.15     &    0.003    &     0.06   &       3.00       &   0.10        &      0.0087    \\
 {\bf 3}   & {\bf 0.15}  & {\bf 0.007} & {\bf 0.06} & {\bf 1.30}       & {\bf 0.033}   &  {\bf 0.020}   \\
     4     &    0.15     &    0.009    &     0.06   &         1        &     0         &      0.026     \\
     5     &    0.15     &    0.015    &     0.06   &       0.60       &   0.10        &      0.043     \\
     6     &    0.15     &    0.020    &     0.06   &       0.45       &   0.18        &      0.058     \\
     7     &    0.15     &    0.030    &     0.06   &       0.30       &   0.35        &      0.087     \\ \hline
\end{tabular}
\egroup
  \caption{Set of parameters (with varying values of $\sigma_x$) governing the mean-reversion of equity returns. Also shown is the mean-reversion ratio $\tilde{\alpha}=\alpha/[\sigma_x/\sigma_S]$, the asymptotic rate of volatility, $|\sigma_S-\sigma_x/\alpha|$, and the standard deviation in the stationary distribution of the risk premium, $\sigma_x/\sqrt{2\alpha}$.}
  \label{tab:meanrevParSigmax}
\bigskip
\bgroup
\def\arraystretch{1.3}
\begin{tabular}{C{1.5cm}C{1.4cm}C{1.4cm}C{1.4cm}C{1.4cm}|C{2cm}C{1.8cm}} \hline
  \rowcolor{gray!50}
 Par.\ set &  $\sigma_S$ & $\sigma_x$  &  $\alpha$  & $\tilde{\alpha}$ &  Asym.\ vol.  & SD($x_\infty$) \\ \hline
     A     &    0.15     &    0.007    &     0.90   &       19.3       &   0.14        &      0.0052    \\
     B     &    0.15     &    0.007    &     0.14  &       3.00       &   0.10        &      0.013    \\
{\bf C}    & {\bf 0.15}  & {\bf 0.007} & {\bf 0.06} &  {\bf 1.30}      &   {\bf 0.033} & {\bf 0.020}     \\
     D     &    0.15     &    0.007    &     0.047  &         1        &     0         &      0.023     \\
     E     &    0.15     &    0.007    &     0.020  &       0.43       &   0.20        &      0.035     \\
     F     &    0.15     &    0.007    &     0.010  &       0.21       &   0.55        &      0.049     \\
     G     &    0.15     &    0.007    &     0      &        0         &    $\infty$   &   $\infty$     \\ \hline
\end{tabular}
\egroup
  \caption{Sets of parameters (with varying values of $\alpha$) governing the mean-reversion of equity returns. Also shown is the mean-reversion ratio $\tilde{\alpha}=\alpha/[\sigma_x/\sigma_S]$, the asymptotic rate of volatility, $|\sigma_S-\sigma_x/\alpha|$, and the standard deviation in the stationary distribution of the risk premium, $\sigma_x/\sqrt{2\alpha}$.}
  \label{tab:meanrevParAlpha}
\end{table}

The asymptotic volatility, i.e., the square-root of (\ref{eq:varlogStildeAsymp}), is given by $|\sigma_S-\sigma_x/\alpha|$. For $\tilde{\alpha} > 1$, $\sigma_x/\alpha < \sigma_S$ and the effect of mean-reversion can be interpreted as reducing the (local) volatility. Conversely, for $\tilde{\alpha} < 1$, $\sigma_x/\alpha > \sigma_S$ and the volatility reduction due to mean-reversion in a sense "overshoots" the (local) volatility; if the overshoot is large enough mean-reversion in fact increases, rather than reduces, volatility. Also note, that for $\tilde{\alpha}=1$ the two terms are equal and stocks are asymptotically risk free!

Figure~\ref{fig:volProfile} shows the annualized volatility, i.e, $(\Var[\log \tilde{S}_t]/t)^{1/2}$, as a function of time for the parameter sets in Tables~\ref{tab:meanrevParSigmax} and~\ref{tab:meanrevParAlpha}.
The highlighted, third parameter set is inspired by the empirical estimates reported in Table~1 of \cite{munetal04} and it is the same in the two tables. With this parameter set as a starting point, we vary the values of $\sigma_x$ and $\alpha$ in Tables~\ref{tab:meanrevParSigmax} and~\ref{tab:meanrevParAlpha}, respectively. In each case, we see that increased mean-reversion lowers the asymptotic volatility, but only up to a certain point (given by $\tilde{\alpha}=1$). Beyond that point the asymptotic volatility starts to increase again and it eventually diverges as $\tilde{\alpha}$ tends to zero. As mentioned above, this is due to the uncertainty that accumulates in the risk premium process. In the last column of each table, the standard deviation of the stationary distribution of the risk premium process is shown as a measure of this uncertainty.

Clearly, not every volatility profile in Figure~\ref{fig:volProfile} is suitable for modelling equity returns. It is noteworthy, that all parameter values appear reasonable and in fact they all (except set A) are within one standard error of their empirical estimates, cf.~\cite{munetal04}. Thus, some kind of expert judgment is needed in the calibration process. The theoretical foundation for the model seems strongest for $\tilde{\alpha} > 1$, and as a rule of thumb we recommend using parameter sets satisfying this constraint. Parameter sets with $\tilde{\alpha}\leq 1$ might be useful on shorter horizons, but on longer horizons they can lead to counter-intuitive results, as illustrated later.

\begin{figure}[ht]
\begin{center}
\includegraphics[height=7.5cm]{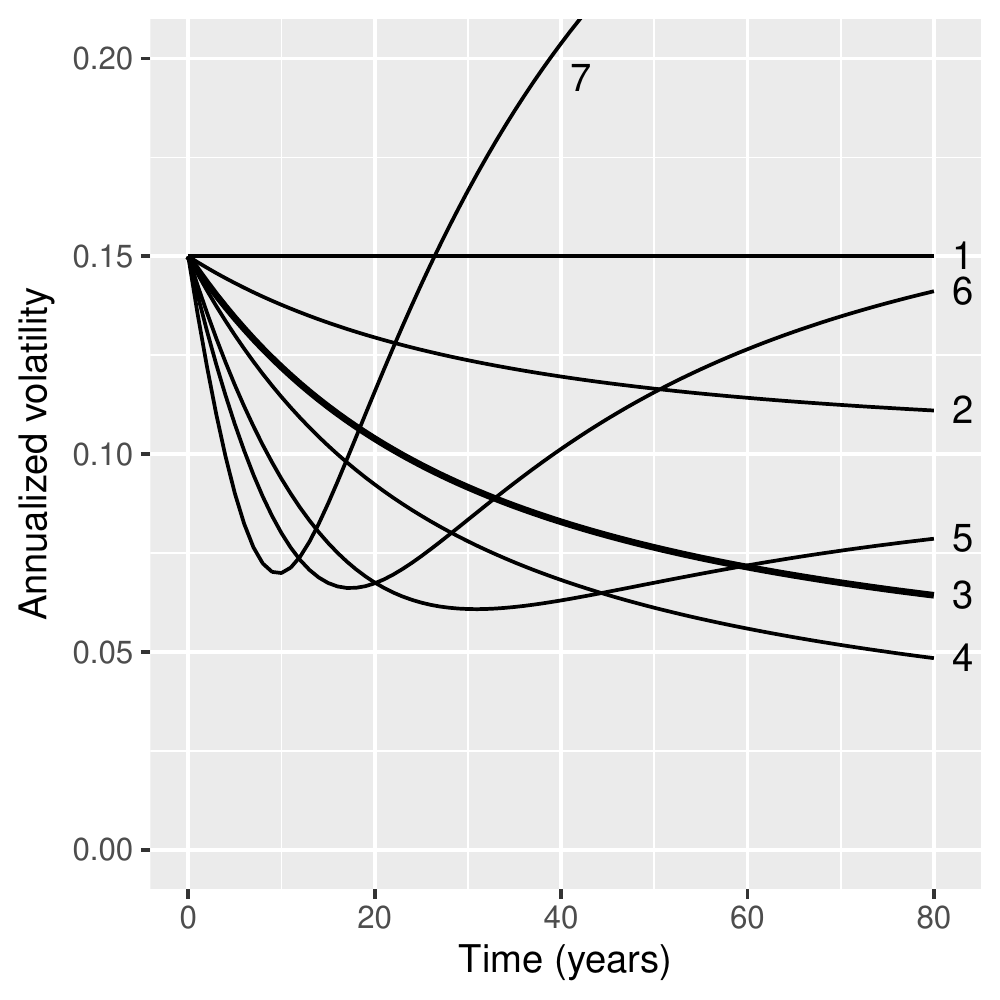}   
\hfill
\includegraphics[height=7.5cm]{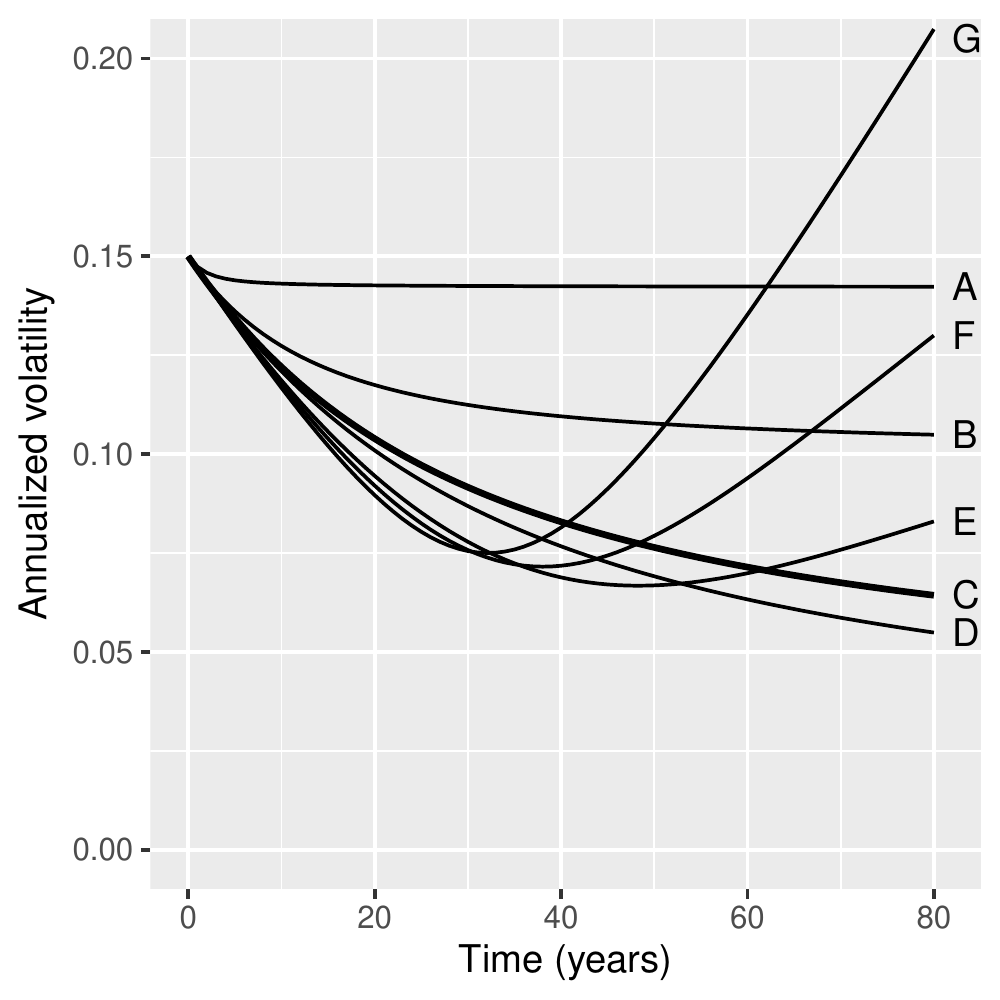}   
\end{center}
\vspace*{-5mm}
\caption{Illustration of the annualized volatility of log excess stock returns, i.e., $(\Var[\log \tilde{S}_t]/t)^{1/2}$ as a function of $t$. The left plot illustrates the parameter sets in Table~\ref{tab:meanrevParSigmax}, where $\sigma_x$ is varied; the right plot illustrates the parameter sets in Table~\ref{tab:meanrevParAlpha}, where $\alpha$ is varied. The emphasized profile (3 and C) is the same in the two plots. It is based on the empirical estimates reported by \cite{munetal04}.}
\label{fig:volProfile}
\end{figure}

\subsubsection{Risk-reward profiles} \label{sec:MVprofileOptimalEquity}
In this section we illustrate the optimal risk-reward trade-off that can be obtained by equity strategies, and we give examples of the optimal strategies. We also compare the optimal strategies with constant strategies; constant strategies provide a natural benchmark due to their optimality in the Black-Scholes model. In particular, it is instructive to see how the horizon and the degree of mean-reversion affect the performance of optimal and constant strategies differently.

We consider the two parameter sets given in Table~\ref{tab:equityPar}, corresponding to a moderate ($\tilde{\alpha}=1.30$) and high ($\tilde{\alpha}=0.60$) degree of mean-reversion. The two sets are the same as sets 3 and 5 of Table~\ref{tab:meanrevParSigmax} with an added assumption on the risk premium in stationarity, $\bar{x}=0.045$. For both parameter sets, the (expected) market price of equity risk in stationarity is $\bar{\lambda}^S=0.30$. This is higher than the market price of interest rate risk used in Section~\ref{sec:NumOptimalRate}, cf.~Table~\ref{tab:ratePar}, but lower than values typically used for equities.\footnote{\cite{munetal04} reports on empirical estimates of $\bar{x}$ and $\sigma_S$ of $0.0648$ and $0.1468$, respectively. This corresponds to a market price of equity risk in stationary of $\bar{\lambda}^S=0.44$, which is more in line with levels typically used. However, we do not wish to overstate the return potential on stocks going forward, and therefore we use more conservative return assumptions in this paper.} Note, that the market price of equity risk is stochastic, with sizeable variations both between paths and within paths. This stochasticity is part of the optimization problem and the main driver of the different (optimal) strategies that arise.

\begin{table}[ht]
  \centering
\rowcolors{2}{gray!25}{white}
\bgroup
\def\arraystretch{1.3}
\begin{tabular}{C{2.6cm}C{1.2cm}C{1.2cm}C{1.2cm}C{1.2cm}C{1.2cm}C{1.2cm}} \hline
  \rowcolor{gray!50}
    Parameter set  &   $\bar{x}$   & $\sigma_S$  &  $\sigma_x$  & $\alpha$  &  $\tilde{\alpha}$  & $\bar{\lambda}^S$ \\ \hline
    Moderate       &    0.045      &   0.015     &   0.007      & 0.06      &     1.30           &  0.30  \\
    High           &    0.045      &   0.015     &   0.015      & 0.06      &     0.60           &  0.30  \\  \hline
\end{tabular}
\egroup
  \caption{Parameters governing the dynamics for equity returns (\ref{eq:stockindex})---(\ref{eq:excessreturn}). Also shown is $\tilde{\alpha}=\alpha/[\sigma_x/\sigma_S]$ as a measure of mean reversion, and
  the market price of equity risk in stationary, $\bar{\lambda}^S=\bar{x}/\sigma_S$. The latter is the continuous-time analogue to the Sharpe ratio, i.e., excess return per unit of volatility for holding equities. }
  \label{tab:equityPar}
\end{table}

It follows from Corollaries~\ref{cor:VTratesonly} and~\ref{cor:VTequityonly}, that we can represent the portfolio value as 
\begin{equation}  \label{eq:VTrateEquityStructure}
   V_T \stackrel{\cal{D}}{=} \frac{V_0}{p_0(T)}Y_T Z_T, 
\end{equation}
where $Y_T$ and $Z_T$ are independent, and determined by the rate and equity strategies, respectively. Further, $Z_T$ is log-normally distributed, $\log Z_T \sim N(\mu_T,\sigma_T^2)$, where $\mu_T$ and $\sigma_T^2$ are given by (\ref{eq:VTmeanequityonly}) and (\ref{eq:VTvarequityonly}), respectively. Due to (\ref{eq:VTrateEquityStructure}), we can study the effect of the rate and equity strategies separately. In Appendix~\ref{app:statistics}, Tables~\ref{tab:equityStatisticsMod} and~\ref{tab:equityStatisticsHigh} show risk and reward statistics for $Z_T$ for a range of optimal equity strategies. 

Figures~\ref{fig:equityMVprofile20Y} and \ref{fig:equityMVprofile40Y} plot $\mu_T$ against $\sigma_T$ on horizons 20 and 40 years, respectively, for the optimal equity strategies. The profiles are constructed by varying $\nu$ from $-\infty$ to $1/2$, both values excluded. For each of the selected values of $\nu$, the corresponding optimal strategy is found by Theorem~\ref{thm:ExtremalOverview}, and $\mu_T$ and $\sigma_T^2$ are then computed by numeric integration of (\ref{eq:VTmeanequityonly}) and (\ref{eq:VTvarequityonly}), respectively, with the optimal strategy inserted.\footnote{In fact, since $\tilde{\alpha} > 1/2$ for the two parameter sets under consideration and since $\nu<1/2$ the optimal strategies are all of exponential form (type I), cf.\ Figure~\ref{fig:domain} of Appendix~\ref{app:overviewExtremal}, and Theorem~\ref{thm:OptimalEquity} also applies.}

Each plot also shows the risk-reward trade-off for constant strategies, $f_s^S=c$. For constant strategies the log-mean and log-variance of $Z_T$ can be calculated explicitly as
\begin{align}
  \mu_T & = \frac{c}{\sigma_S}\left[T\bar{x} + \frac{x_0 - \bar{x}}{\alpha}\left(1-e^{-\alpha T}\right)  \right] - \frac{T}{2}c^2, \\
  \sigma_T^2 & = c^2\left[T \left(\frac{\tilde{\alpha}-1}{\tilde{\alpha}}\right)^2  + 2 \frac{\tilde{\alpha}-1}{\alpha\tilde{\alpha}}\left(1-e^{-\alpha T}\right)
     + \frac{1}{2\alpha\tilde{\alpha}^2}\left(1-e^{-2\alpha T}\right) \right],  \label{eq:constEquityVolExcess}
\end{align}
assuming both $\alpha$ and $\tilde{\alpha}$ are non-zero. The dedicated reader is encouraged to implement the variance formula and compare its behaviour as a function of time to the variance of a constant strategy in a Black-Scholes model, $\sigma_T^2=c^2 T$.

\begin{figure}[ht]
\begin{center}
\includegraphics[height=7.5cm]{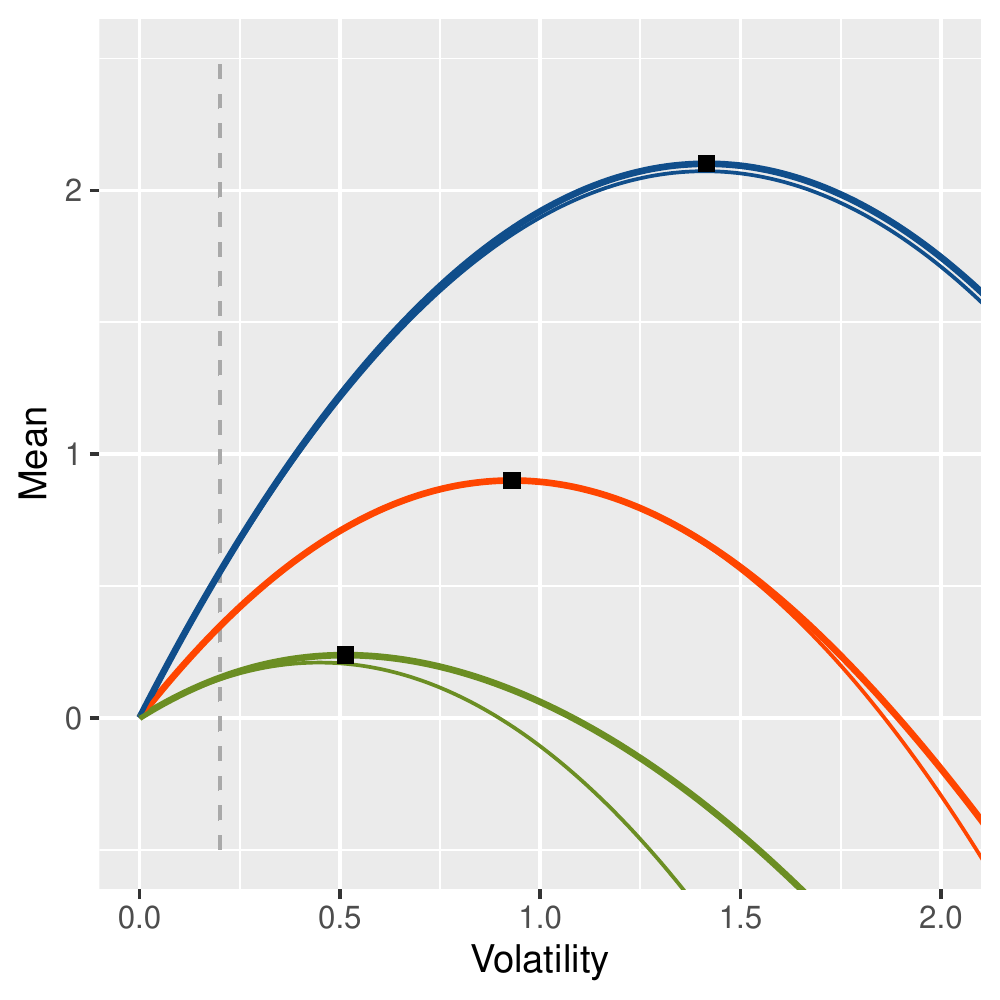}   
\hfill
\includegraphics[height=7.5cm]{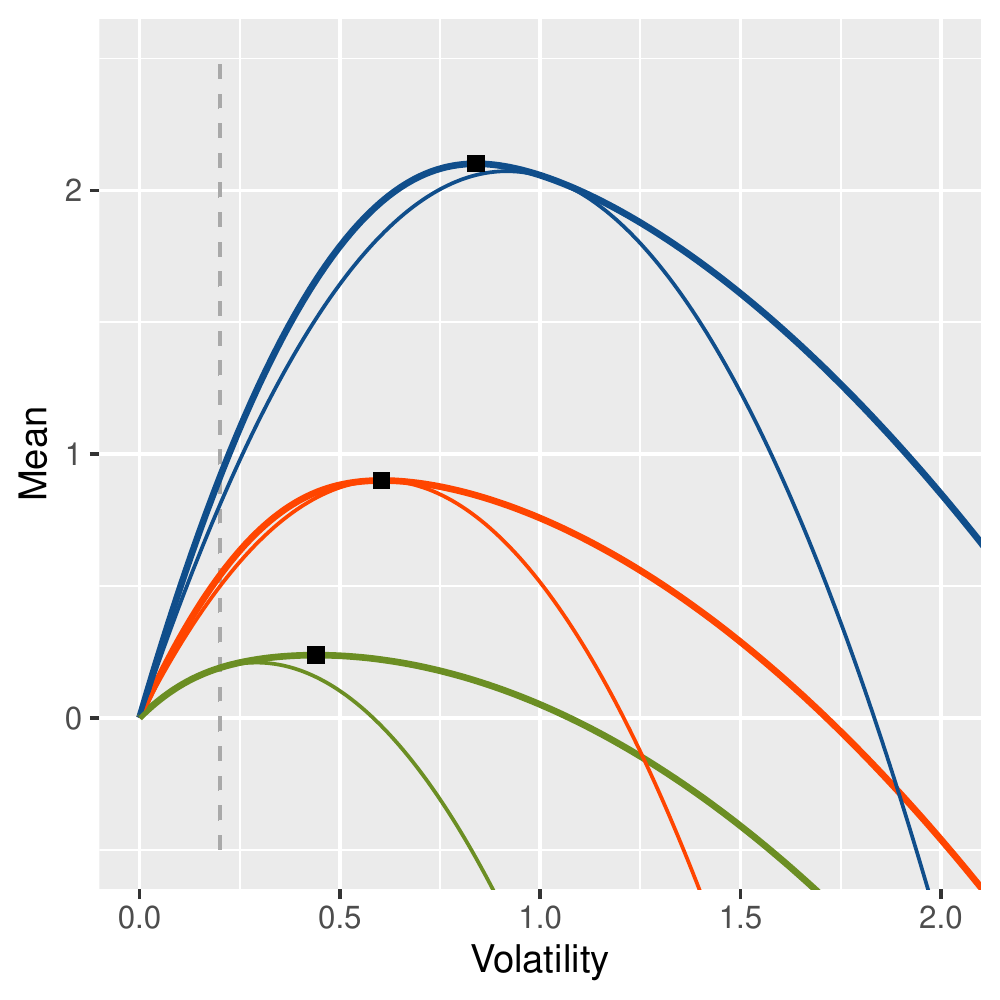}   
\end{center}
\vspace*{-5mm}
\caption{Risk-reward profiles, $(\sigma_T,\mu_T)$, for optimal (thick) and constant (thin) equity strategies on a horizon of $T=20$ years for moderate (left) and high (right) degrees of mean-reversion, cf.\ Table~\ref{tab:equityPar}. The different profiles correspond to $x_0$ of $0.005$ (green), $0.045$ (orange), and $0.085$ (blue). The global optimum is obtained for $\nu=0$ ({\scriptsize $\blacksquare$}). The vertical dashed line marks $\sigma_T=0.2$, the corresponding optimal and constant strategies are illustrated in Figure~\ref{fig:equityStrategy20Y}.}
\label{fig:equityMVprofile20Y}
\end{figure}

\begin{figure}[ht]
\begin{center}
\includegraphics[height=7.5cm]{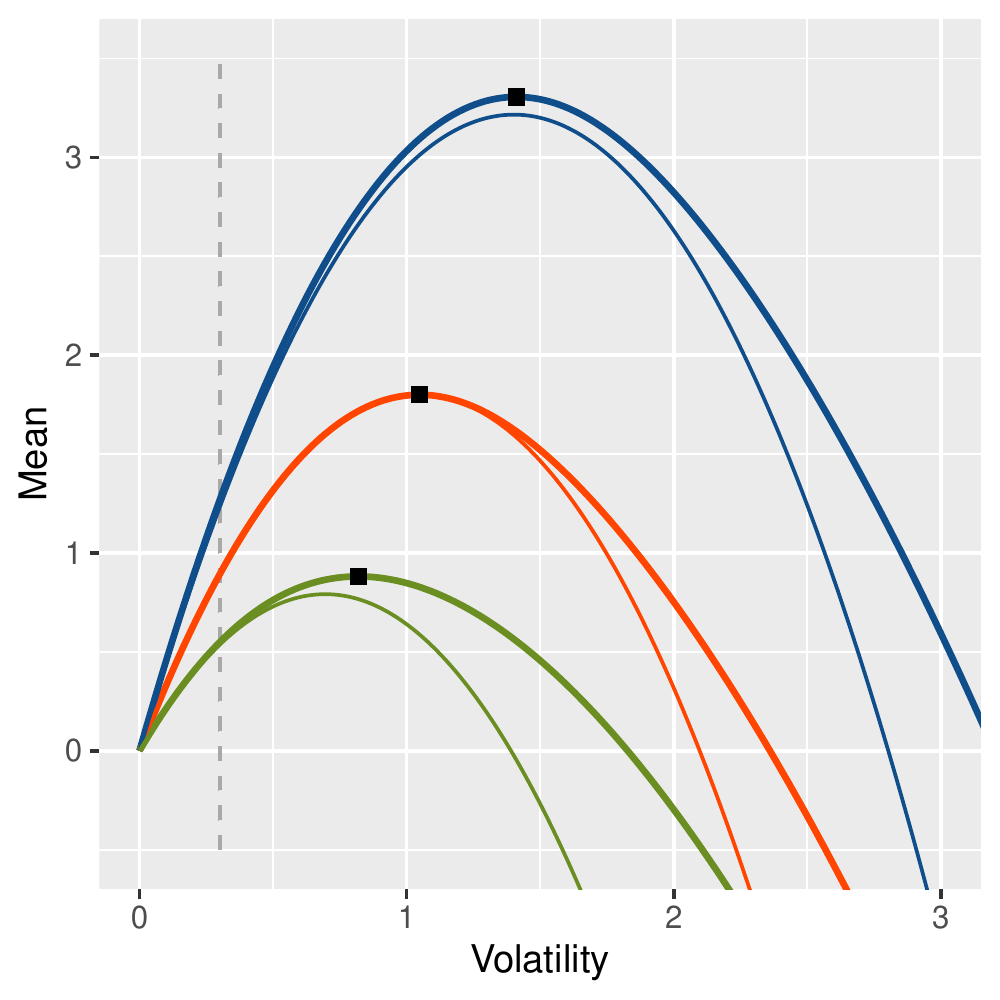}   
\hfill
\includegraphics[height=7.5cm]{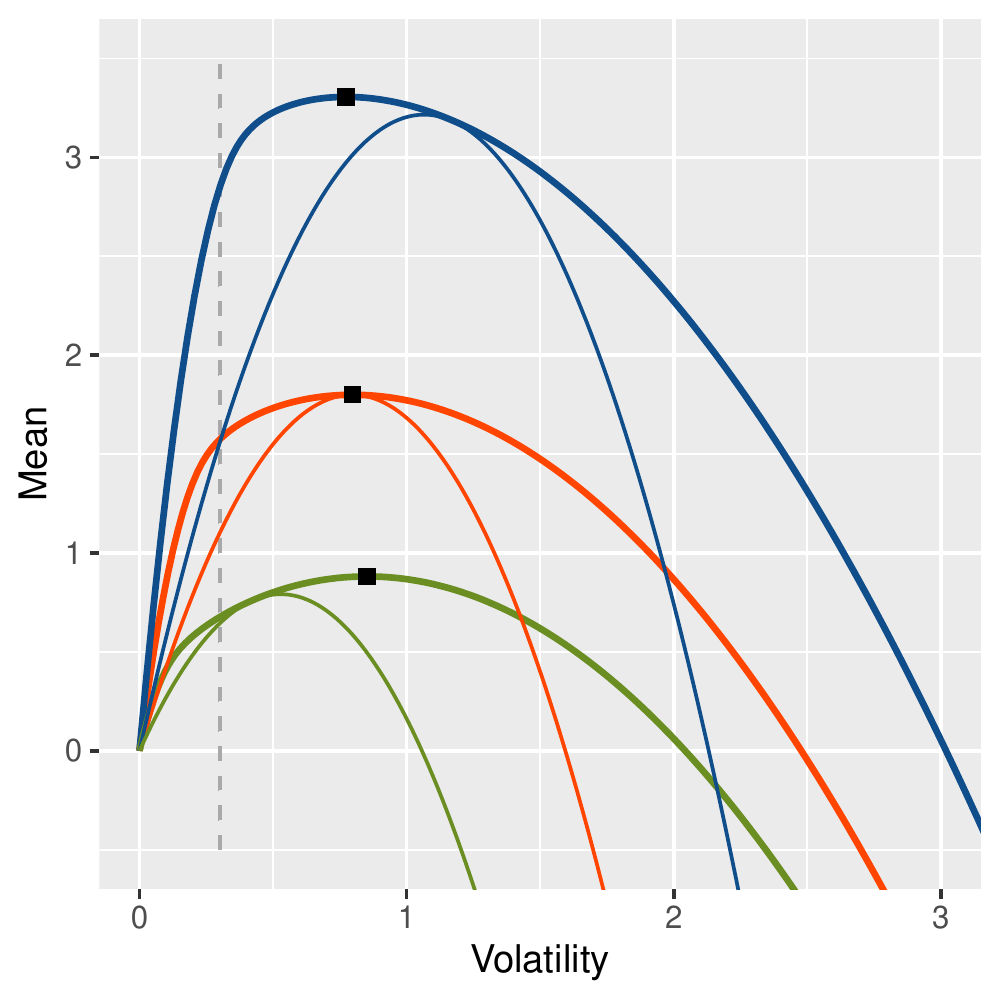}   
\end{center}
\vspace*{-5mm}
\caption{Risk-reward profiles, $(\sigma_T,\mu_T)$, for optimal (thick) and constant (thin) equity strategies on a horizon of $T=40$ years for moderate (left) and high (right) degrees of mean-reversion, cf.\ Table~\ref{tab:equityPar}. The different profiles correspond to $x_0$ of $0.005$ (green), $0.045$ (orange), and $0.085$ (blue). The global optimum is obtained for $\nu=0$ ({\scriptsize $\blacksquare$}). The vertical dashed line marks $\sigma_T=0.3$, the corresponding optimal and constant strategies are illustrated in Figure~\ref{fig:equityStrategy40Y}.}
\label{fig:equityMVprofile40Y}
\end{figure}

We see from Figures~\ref{fig:equityMVprofile20Y} and \ref{fig:equityMVprofile40Y} that for moderate levels of mean-reversion the optimal strategies improve the mean only slightly compared to constant strategies, at least for the strategies of interest, i.e., strategies with $\nu \leq 0$. However, for high levels of mean-reversion substantial improvements are obtained on the 40-year horizon. Indeed, the optimal risk-reward profile in the right plot of Figure~\ref{fig:equityMVprofile40Y} is surprising. The profile is very steep initially meaning that substantial equity gains can be achieved on long horizons with very little risk. Mathematically, the high degree of mean-reversion allows a very effective netting of equity fluctuations---but only on long horizons, and only if the exposure is taken in the right way.

\subsubsection{Illustration of optimal strategies}
In Figures~\ref{fig:equityStrategy20Y} and \ref{fig:equityStrategy40Y} we illustrate the optimal equity strategies with $\sigma_T=0.2$ on 20 years horizon and $\sigma_T=0.3$ on 40 years horizon, respectively. The optimal strategies exploit the dynamics of the risk premium process, and different strategies will therefore be employed depending on the initial value of the risk premium. Therefore, there are three optimal strategies in each plot, corresponding to the three different values of $x_0$. In contrast, we see from (\ref{eq:constEquityVolExcess}) that the volatility of a constant strategy does not depend on $x_0$. Consequently, for given volatility target the same constant strategy will be used regardless of $x_0$; the horizontal, black line in each plot shows this strategy.

The natural scale (left axis) for the strategies is the exposure, $f^S_s$, to the Brownian motion driving equity returns. When positive, the exposure can be interpreted as the (local) volatility due to equity investments. To aid interpretation we also show (right axis) the exposure in terms of the corresponding equity share, i.e., $f^S_s/\sigma_S$. An equity share of, say, 50 pct.\ means that the equity exposure should equal the exposure obtained from investing half the portfolio in equities. Note, since all equity exposures are financed by equivalent short positions in cash, there is no upper limit on the equity exposure/share.

The optimal strategies are generally decreasing over time, except when the initial risk premium is sufficiently low. We also know that for moderate mean-reversion only little is gained from following the optimal strategy compared to a constant strategy, even though the strategies are quite different. However, with high mean-reversion and a long horizon optimal strategies substantially outperform constant strategies. We see from the right plot of Figure~\ref{fig:equityStrategy40Y} that the optimal exposure is initially very high and declines rapidly thereafter. When mean-reversion is high, this profile is very effective at netting equity fluctuations over time, and constant strategies can only obtain the same volatility target by having a much smaller exposure.

\begin{figure}[ht]
\begin{center}
\includegraphics[height=7.5cm]{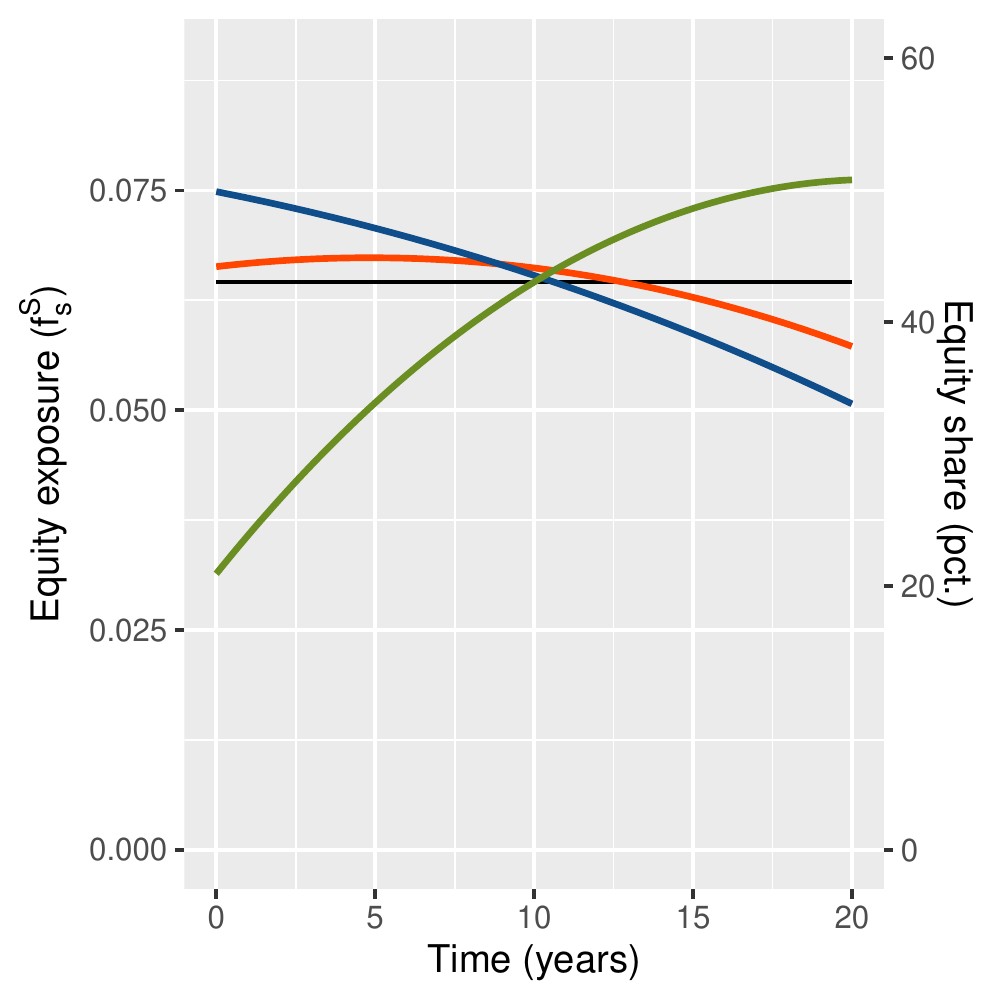}   
\hfill
\includegraphics[height=7.5cm]{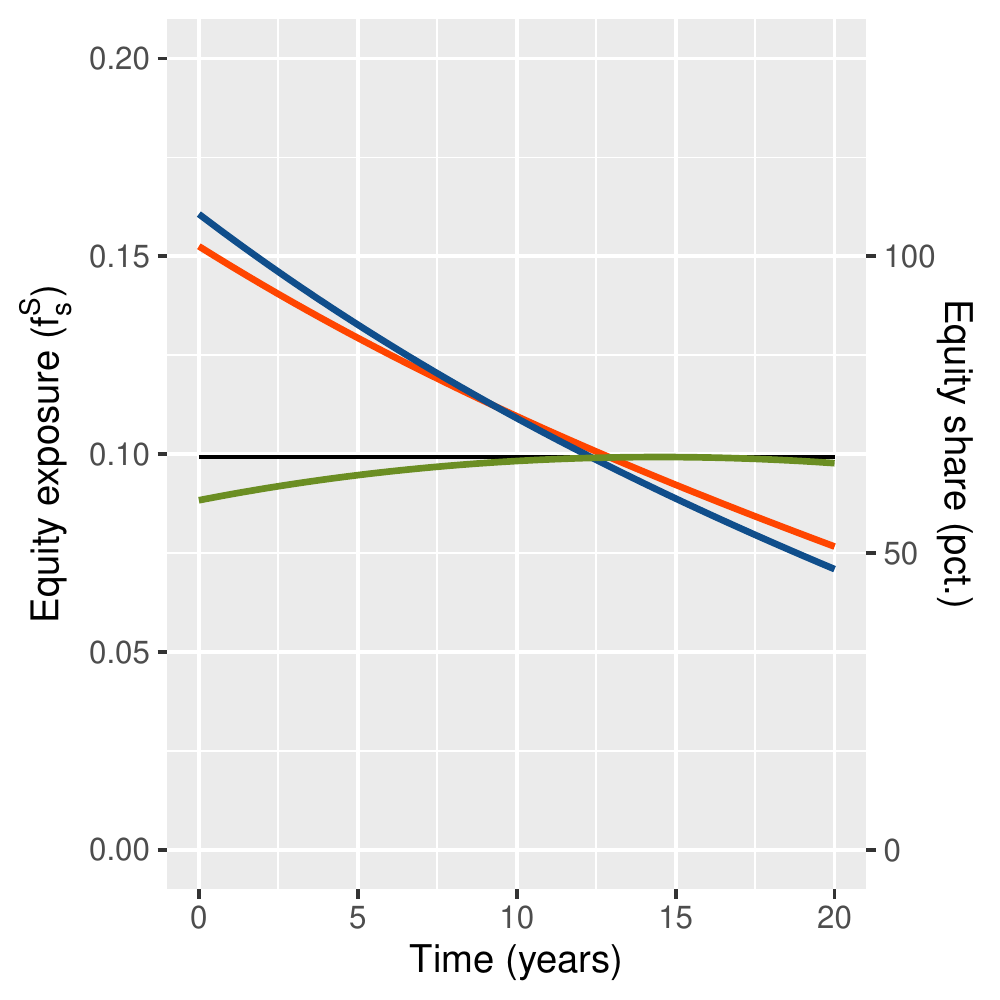}   
\end{center}
\vspace*{-5mm}
\caption{Optimal equity strategies with $\sigma_T=0.2$ for $T=20$ years when the degree of mean-reversion is moderate (left) and high (right), cf.\ Table~\ref{tab:equityPar}. The different profiles correspond to $x_0$ of $0.005$ (green), $0.045$ (orange), and $0.085$ (blue). The horizontal black line shows the constant equity strategy that achieves $\sigma_T=0.2$.}
\label{fig:equityStrategy20Y}
\end{figure}

\begin{figure}[ht]
\begin{center}
\includegraphics[height=7.5cm]{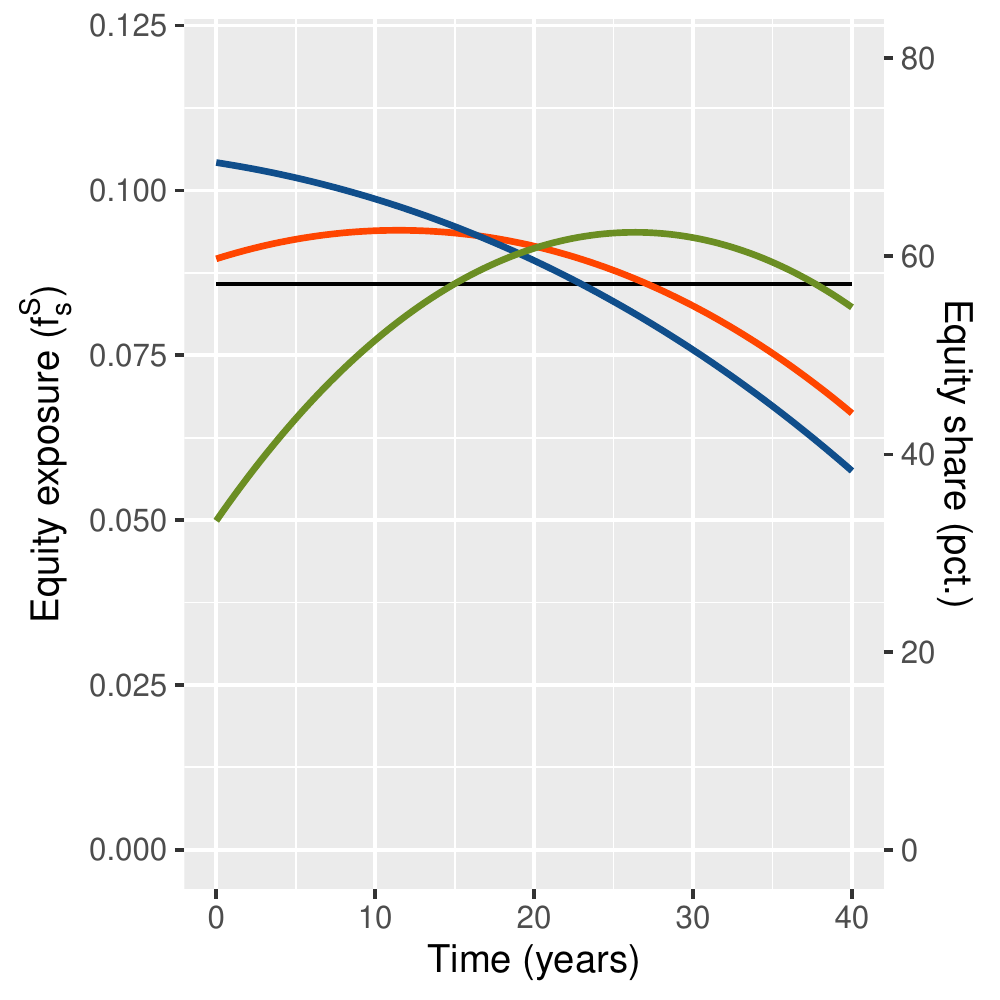}   
\hfill
\includegraphics[height=7.5cm]{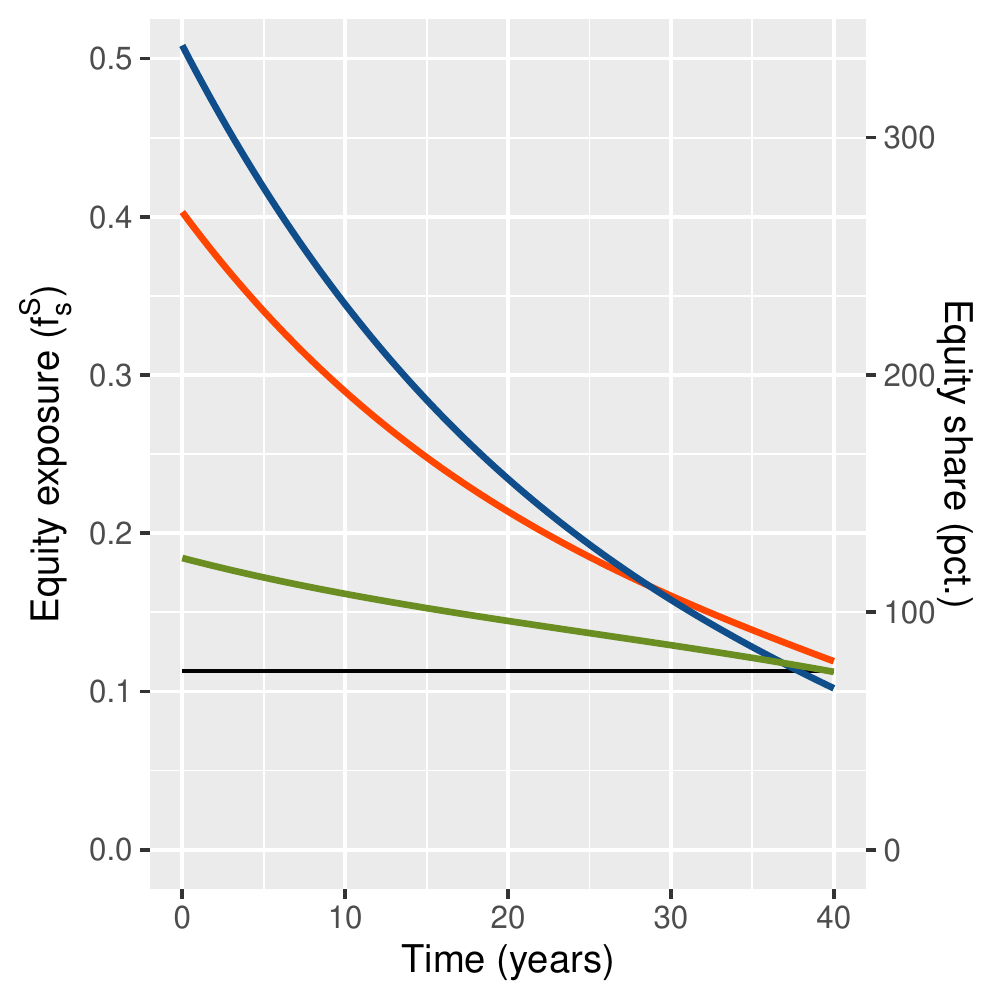}   
\end{center}
\vspace*{-5mm}
\caption{Optimal equity strategies with $\sigma_T=0.3$ for $T=40$ years when the degree of mean-reversion is moderate (left) and high (right), cf.\ Table~\ref{tab:equityPar}. The different profiles correspond to $x_0$ of $0.005$ (green), $0.045$ (orange), and $0.085$ (blue). The horizontal black line shows the constant equity strategy that achieves $\sigma_T=0.3$.}
\label{fig:equityStrategy40Y}
\end{figure}
\clearpage

\subsubsection{Interior wedges of extremal strategies} \label{sec:numWedge}
We conclude the numerical section with an illustration of a surprising phenomenon due to excessive mean-reversion. The risk-reward profiles shown in Section~\ref{sec:MVprofileOptimalRate} have two branches. An upper branch consisting of optimal rate strategies and a lower branch consisting of minimizing rate strategies. From the calculations in Section~\ref{sec:closedFormMeanVar} it follows that there can be no other extremal rate strategies. For extremal equity strategies the situation is considerably more complex.

The left plot of Figure~\ref{fig:equityProfileWedge} shows the risk-reward profile for the entire family of extremal equity strategies ($\nu\neq 1/2$) for the parameter set of Table~\ref{tab:equityPar} with high mean-reversion. We show the profile for a horizon of $T=40$ years and $x_0=\bar{x}$, but other horizons and initial risk premiums give similar plots. The upper branch of the profile is formed by $\nu < 1/2$; this part is also shown in the right plot of Figure~\ref{fig:equityMVprofile40Y} (orange thick line). The remaining extremal strategies, however, do not all line up in a lower branch, as might be expected. Instead, an interior wedge of  strategies is formed, in addition to the lower branch of minimizing strategies. The wedge consists of locally minimizing and maximizing strategies, i.e., strategies where, for given variance target, the mean cannot be improved, either upwards or downwards, in a vicinity of the strategies (in the underlying function space of strategies). A similar wedge does not occur for the parameter set of Table~\ref{tab:equityPar} with moderate mean-reversion.

\begin{figure}[ht]
\begin{center}
\includegraphics[height=7.5cm]{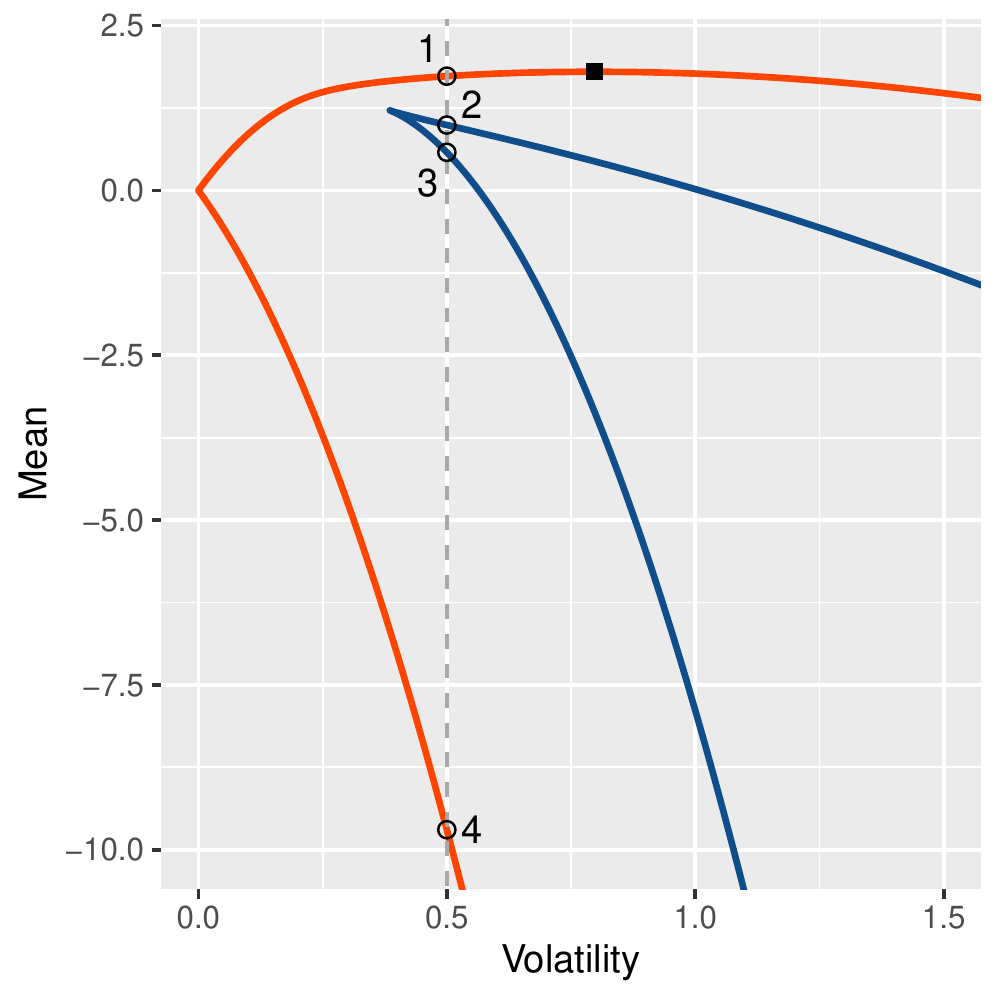}   
\hfill
\includegraphics[height=7.5cm]{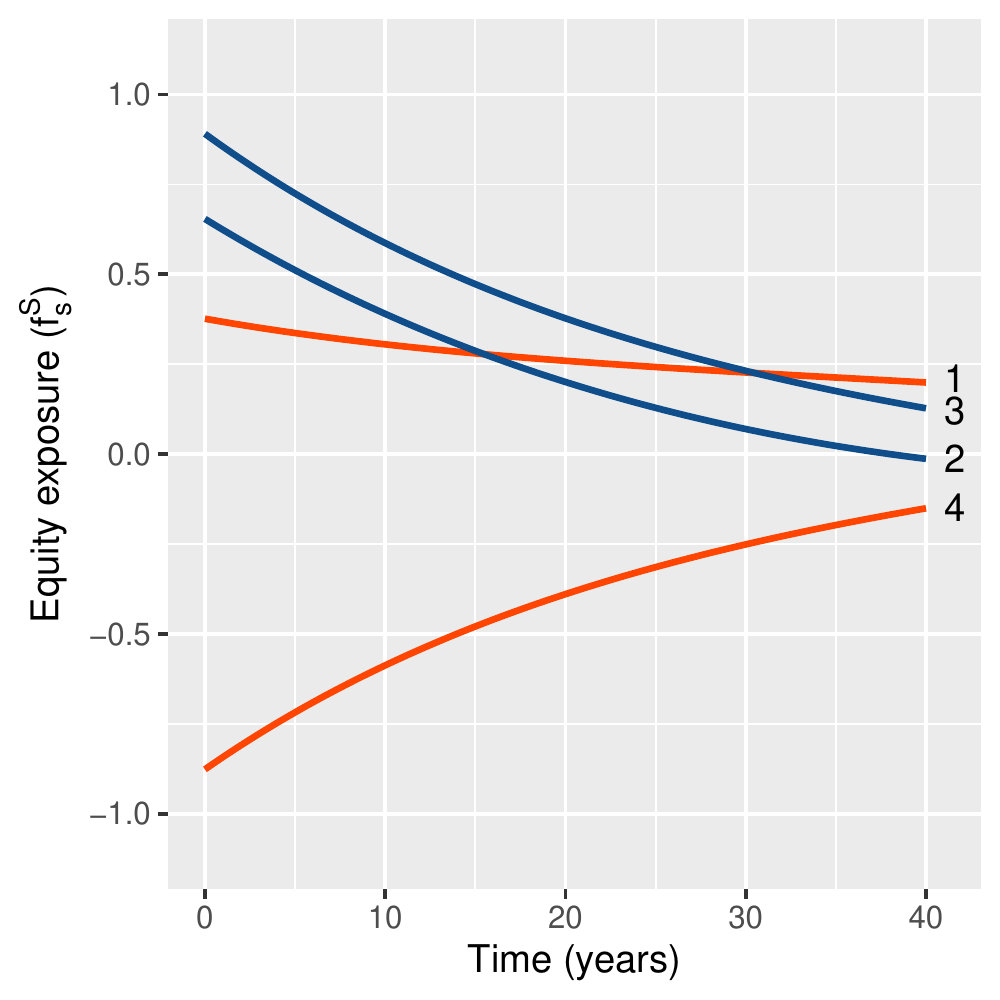}   
\end{center}
\vspace*{-5mm}
\caption{Left plot shows the risk-reward profile, $(\sigma_T,\mu_T)$, for the family of extremal strategies ($\nu\neq 1/2$) on a horizon of $T=40$ years for the parameter set of Table~\ref{tab:equityPar} with high mean-reversion and $x_0=\bar{x}=0.045$. The upper branch corresponds to $\nu<1/2$, with the global maximum attained for $\nu=0$ ({\scriptsize $\blacksquare$}); the interior wedge corresponds to $1/2 < \nu < 15$ (rounded), and the lower branch corresponds to $\nu > 15$. The vertical dashed line marks $\sigma_T=0.5$, the four extremal strategies attaining this level of variability are illustrated in the right plot. The colors distinguish between globally (orange) and locally (blue) minimizing/maximizing strategies.}
\label{fig:equityProfileWedge}
\end{figure}

To investigate further this phenomenon, the right plot of Figure~\ref{fig:equityProfileWedge} shows the four different extremal strategies with $\sigma_T=0.5$. The optimal strategy (1) is similar to the optimal strategy shown in Figure~\ref{fig:equityStrategy40Y} for $\sigma_T=0.3$ (orange thick line). The "drastic" change in exposure over time of the optimal strategy is, however, dwarfed by the radical change in exposure of the minimizing strategy (4). Without a high degree of mean-reversion such a strategy would lead to a very high variance, but high mean-reversion in combination with declining (absolute) exposure all but cancel aggregate return fluctuations.

The locally minimizing strategy (3) is essentially the mirror strategy of the globally minimizing strategy: The same glidepath is used to control the variance, but with positive rather than negative exposure. The reason this strategy has a relatively low mean, is that the extreme leverage results in a large quadratic term that impairs the mean, cf.\ (\ref{eq:VTmeanequityonly}). The locally maximizing strategy (2) uses a dampened version of the same glidepath, but with an overall lower exposure resulting in a lower quadratic term and a higher mean.

In summary, high degrees of mean-reversion implies the existence of very aggressive glidepaths with low aggregate variance. Depending on the sign of the exposure, these glidepaths can be used as building blocks to create either very low, or high, but not optimal, portfolio means. It is the presence of these glidepaths that generates a richer set of extremal strategies than typically seen. Apparently, it requires a high degree of mean-reversion before the interior wedge appears. We conjecture that $\tilde{\alpha}<1$ is a necessary condition, but we leave this as an open research question.

When mean reversion is even higher, e.g., parameter set 7 in Table~\ref{tab:meanrevParSigmax}, a multitude of interior wedges can appear with locally minimizing/maximizing exposure profiles of trigonometric form (not illustrated). These periodic profiles constitute yet another type of building block for obtaining (relatively) low variance even when the absolute exposure is very high. The structural richness of the extremal strategies is intriguing and further insights about the role of mean-reversion can undoubtedly be uncovered, but we leave that for others to explore.

\clearpage
\section{Concluding remarks} \label{sec:Conclusion}
In this report we have derived the mean-variance optimal investment strategies in a two-factor model with stochastic interest rates and mean-reverting equity returns, with the constraint that the strategies are allowed to depend on time only. Such deterministic strategies are closer to practical use than their state-dependent counterparts typically considered in the literature. We have used techniques from calculus of variations to obtain explicit solutions for optimal rate and equity strategies separately, and discussed how to combine these strategies. For equity strategies a complete solution has been provided, while optimal rate strategies have been derived under the simplifying assumption of constant market price of interest rate risk. We have also provided illustrations of the results, and discussed the calibration of the mean-reversion component in detail.

Assuming independent risk factors, the distribution of the portfolio on the horizon can be succinctly represented as
\begin{equation}
   V_T \stackrel{\cal{D}}{=} \frac{V_0}{p_0(T)}Y_T Z_T,
\end{equation}
where $Y_T$ and $Z_T$ are independent, log-normally distributed random variates depending on the rate and equity strategies, respectively. Holding only $T$-bonds and no equity risk corresponds to $Y_T=Z_T=1$ and thereby no risk on the horizon. All rate and equity strategies, or combination of these, deviating from this hedging strategy, entail a non-trivial risk-reward trade-off captured by the distribution of the stochastic multipliers, $Y_T$ and $Z_T$. Appendix~\ref{app:statistics} contains tables with risk and reward statistics that can be used to assess this trade-off. The strategies are parameterized by $\nu$, which can be interpreted as a risk-aversion parameter, with $\nu=0$ corresponding to the strategy with maximal log-mean.

We conclude from Appendix~\ref{app:statistics} that on long horizons, say, 30 years and above there exist equity strategies with substantial upside and very little risk of accumulated losses ($Z_T<1$). When mean-reversion is high the strategies are essentially risk free (Table~\ref{tab:equityStatisticsHigh}), but also for moderate levels of mean-reversion the risk is low for all but the most aggressive strategies (Table~\ref{tab:equityStatisticsMod}). In other words, on long horizons, (optimal) equity strategies can function as return generating overlays with very low risk of performing below the risk-free, hedging strategy.

The optimal equity strategies are typically decreasing over time. This conforms with our intuitive understanding of the effect of mean-reversion. Essentially, "early" exposure is better rewarded than "late" exposure, since early losses are more likely to be (partly) compensated by subsequent excess returns than losses towards the end of the period. However, we have demonstrated numerically, that for moderate levels of mean-reversion constant strategies perform close to optimally. Indeed, under no mean-reversion, i.e., the Black-Scholes model, constant strategies are optimal. Conversely, with high levels of mean-reversion the optimal strategies outperform constant strategies significantly, but they rely on rather extreme glidepaths which may be hard to implement in practice. 


\pagebreak
\appendix
\section{Optimal equity strategies with positive mean-variance  trade-off} \label{app:coefmatch}
In this appendix we derive the (unique) solution to (\ref{eq:OptEquityIntegral}) of Lemma~\ref{lemma:OptEquityIntegral} for $\nu<0$. These strategies correspond to the optimal equity strategies with positive mean-variance trade-off, as discussed in the main text. We assume throughout that $\sigma_x>0$ and $\sigma_S>0$, while $x_0$ and $\bar{x}$ can have any real value.

It follows from Lemma~\ref{lemma:OptEquityODE} that for $\nu<0$ the solution to (\ref{eq:OptEquityIntegral}), if it exists, must be of form
\begin{equation} \label{eq:OptimalEquityAnsatz}
     f_s = b_0 + b_1 e^{c_1 s} + b_2 e^{c_2 s} \quad (0 \leq s \leq T).
\end{equation}
The plan therefore is to rewrite (\ref{eq:OptEquityIntegral}) using this form of $f$, and from this derive the values of the five constants. We first carry out this programme, assuming $\alpha \neq 0$ and $\alpha \neq \frac{1}{2}\frac{\sigma_x}{\sigma_S}$. The two special cases for $\alpha$ are handled separately afterwards. Proof of uniqueness of the stated solutions is given at the end of the appendix. We begin with the following lemma, the conclusion of which we will need throughout.

\begin{lemma} \label{lemma:DistinctRoots}
Assume $\nu < 0$ and $\alpha \neq \frac{1}{2}\frac{\sigma_x}{\sigma_S}$. Let $c_1=\sqrt{-C/A}$, where $A = 1-2\nu$ and $C  = 2\nu(\alpha-\frac{\sigma_x}{\sigma_S})^2 -\alpha^2$.
Then, $c_1 >0$ and $c_1 \neq|\alpha|$. Further, the equation
\begin{equation}
   \left(1+\frac{\sigma_x}{\sigma_S(c-\alpha)}\right)\left(1 -\frac{\sigma_x}{\sigma_S(c+\alpha)}\right) = \frac{1}{2\nu} \label{eq:divergingQuadratic}
\end{equation}
has two distinct solutions given by $c=\pm c_1$.
\end{lemma}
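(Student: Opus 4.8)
The plan is to verify the three assertions by direct algebra, writing $\beta=\sigma_x/\sigma_S>0$ for brevity. First I would note that $A=1-2\nu>1$ since $\nu<0$, and that $C=2\nu(\alpha-\beta)^2-\alpha^2$ is strictly negative: both summands are $\le 0$ (the first because $\nu<0$), and they cannot vanish simultaneously, since that would force $\alpha=\beta$ and $\alpha=0$, hence $\beta=0$, contradicting $\sigma_x>0$. Therefore $-C/A>0$ and $c_1=\sqrt{-C/A}$ is a well-defined positive number. For $c_1\neq|\alpha|$ I would argue by contradiction: if $c_1^2=\alpha^2$ then $-C=A\alpha^2$, i.e.\ $\alpha^2-2\nu(\alpha-\beta)^2=(1-2\nu)\alpha^2$; the $\alpha^2$-terms cancel, division by $-2\nu\neq0$ leaves $(\alpha-\beta)^2=\alpha^2$, hence $\beta^2=2\alpha\beta$, and dividing by $\beta\neq0$ gives $\beta=2\alpha$, i.e.\ $\alpha=\tfrac12\tfrac{\sigma_x}{\sigma_S}$, contradicting the hypothesis.

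For the last assertion I would put the two factors of (\ref{eq:divergingQuadratic}) over a common denominator to obtain the compact identity
\[
  \Bigl(1+\tfrac{\beta}{c-\alpha}\Bigr)\Bigl(1-\tfrac{\beta}{c+\alpha}\Bigr)=\frac{c^2-(\beta-\alpha)^2}{c^2-\alpha^2}.
\]
For $c^2\neq\alpha^2$, equation (\ref{eq:divergingQuadratic}) is then equivalent, upon clearing the denominator, to $2\nu\bigl(c^2-(\beta-\alpha)^2\bigr)=c^2-\alpha^2$, which is linear in $c^2$ with the unique solution $c^2=\bigl(2\nu(\alpha-\beta)^2-\alpha^2\bigr)/(2\nu-1)=-C/A=c_1^2$. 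Since $c_1^2\neq\alpha^2$ by the previous paragraph, both $c=c_1$ and $c=-c_1$ genuinely satisfy (\ref{eq:divergingQuadratic}) (no spurious root is introduced by clearing the denominator), they are distinct because $c_1>0$, and the linearity in $c^2$ shows there are no others.

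I expect no step to be the real obstacle here; the work is entirely routine, and the only thing requiring care is the bookkeeping of the non-degeneracy conditions --- $\nu\neq0$, $\beta\neq0$, $2\nu-1\neq0$, and above all $c^2\neq\alpha^2$ --- so that every division, and the clearing of the denominator in the final step, is reversible. It is precisely that reversibility (which in turn rests on the hypothesis $\alpha\neq\tfrac12\tfrac{\sigma_x}{\sigma_S}$ via the argument that $c_1\neq|\alpha|$) that upgrades ``at least two solutions'' to ``exactly the two solutions $c=\pm c_1$''.
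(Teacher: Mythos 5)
Your proposal is correct and follows essentially the same route as the paper: establish $C<0$ termwise, rule out $c_1^2=\alpha^2$ via the hypothesis $\alpha\neq\tfrac12\sigma_x/\sigma_S$, and reduce the left-hand side of (\ref{eq:divergingQuadratic}) to $\frac{c^2-(\alpha-\sigma_x/\sigma_S)^2}{c^2-\alpha^2}$ so the equation becomes $Ac^2+C=0$. The only (cosmetic) difference is that the paper shows $c_1^2\neq\alpha^2$ by writing $c_1^2$ as a strict convex combination of $\alpha^2$ and $(\alpha-\sigma_x/\sigma_S)^2$, whereas you argue by direct contradiction; both rest on the same non-degeneracy facts.
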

\begin{proof}
In the expression for $C$, at least one of the terms $(\alpha-\frac{\sigma_x}{\sigma_S})^2$ and $\alpha^2$ will be strictly positive, with coefficients $2\nu$ and $-1$, respectively, which are both strictly negative. This implies $C<0$, and since $A>1$ we have $-C/A > 0$, and thereby also $c_1 >0$.

For the second statement, we first note that the assumption $\alpha\neq\frac{1}{2}\frac{\sigma_x}{\sigma_S}$ implies that $\alpha^2 \neq (\alpha-\frac{\sigma_x}{\sigma_S})^2$. Now, consider
\begin{equation}
 c_1^2 = -C/A = p(\nu)\alpha^2 + (1-p(\nu))\left(\alpha-\frac{\sigma_x}{\sigma_S}\right)^2,   \label{eq:gamma2expectation}
\end{equation}
where $p(\nu)=1/(1-2\nu)$ with $0 < p(\nu) < 1$. Equation (\ref{eq:gamma2expectation}) expresses $c_1^2$ as a strict convex combination of two distinct values. In particular, $c_1^2\neq\alpha^2$, which is equivalent to $c_1\neq|\alpha|$.

Considering equation (\ref{eq:divergingQuadratic}) we note that the right-hand side is finite, while the left-hand side diverges for $c$ approaching $\pm\alpha$. In the search for a solution we can therefore assume $c \neq \pm\alpha$. Under this assumption and introducing the short-hand notation $R=\sigma_x/\sigma_S>0$, we can rewrite the left-hand side of (\ref{eq:divergingQuadratic}) as follows
\begin{equation}
   \left(1+\frac{R}{c-\alpha}\right)\left(1 -\frac{R}{c+\alpha}\right) = \frac{c-(\alpha - R)}{c-\alpha}\frac{c+(\alpha - R)}{c+\alpha} = \frac{c^2-(\alpha - R)^2}{c^2-\alpha^2}.
\end{equation}
Equating this to the right-hand side of (\ref{eq:divergingQuadratic}) and rearranging terms, we see that this is equivalent to $Ac^2+C=0$ with solution $c=\pm\sqrt{-C/A}=\pm c_1$. Finally, since $c_1 > 0$ the solutions are distinct.
\end{proof}

The next two lemmas establish the (general) form of $h$ and $\int h$ as defined in Lemma~\ref{lemma:OptEquityIntegral}.

\begin{lemma} \label{lemma:GeneralFormOfh}
Assume $f$ is of form (\ref{eq:OptimalEquityAnsatz}) with $c_1\neq \alpha$, $c_2\neq\alpha$, and $\alpha\neq 0$. For $0\leq u\leq T$,
\begin{equation}
  h_u \equiv f_u - \frac{\sigma_x}{\sigma_S}\int_u^T f_s e^{-\alpha(s-u)}ds =  d_0 + d_1 e^{c_1 u} + d_2 e^{c_2 u} + d_3 e^{\alpha u} ,
\end{equation}
where
\begin{align}
  d_0 & = b_0\left[ 1 - \frac{\sigma_x}{\sigma_S\alpha}\right], \quad d_1 = b_1\left[1 + \frac{\sigma_x}{\sigma_S(c_1-\alpha)}\right], \quad d_2 = b_2\left[1 + \frac{\sigma_x}{\sigma_S(c_2-\alpha)}\right], \\[2mm]
  d_3 & = e^{-\alpha T}\frac{\sigma_x}{\sigma_S}\left[\frac{b_0}{\alpha} - \frac{b_1}{c_1-\alpha}e^{c_1 T} - \frac{b_2}{c_2-\alpha}e^{c_2 T} \right].
\end{align}
\end{lemma}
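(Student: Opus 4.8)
The statement is a direct computation, so the plan is to substitute the ansatz $f_s = b_0 + b_1 e^{c_1 s} + b_2 e^{c_2 s}$ into the definition of $h_u$ and evaluate the integral term by term. The key structural observation is that multiplying each summand of $f_s$ by the kernel $e^{-\alpha(s-u)} = e^{\alpha u}e^{-\alpha s}$ produces $e^{\alpha u}$ times a pure exponential in $s$, with exponent $-\alpha$, $c_1-\alpha$, or $c_2-\alpha$ respectively. Under the hypotheses $\alpha\neq 0$, $c_1\neq\alpha$, $c_2\neq\alpha$, all three exponents are nonzero, so each integrand has the elementary antiderivative $\int_u^T e^{\beta s}\,ds = \beta^{-1}\bigl(e^{\beta T}-e^{\beta u}\bigr)$.

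First I would compute $\int_u^T b_0 e^{-\alpha(s-u)}\,ds = (b_0/\alpha)\bigl(1 - e^{-\alpha T}e^{\alpha u}\bigr)$, contributing a constant term and an $e^{\alpha u}$ term. Next, for $i=1,2$, I would compute $\int_u^T b_i e^{c_i s}e^{-\alpha(s-u)}\,ds = \frac{b_i}{c_i-\alpha}\bigl(e^{(c_i-\alpha)T}e^{\alpha u} - e^{c_i u}\bigr)$, which contributes an $e^{c_i u}$ term with coefficient $-b_i/(c_i-\alpha)$ and an $e^{\alpha u}$ term with coefficient $b_i e^{(c_i-\alpha)T}/(c_i-\alpha)$; rewriting $e^{(c_i-\alpha)T} = e^{c_i T}e^{-\alpha T}$ puts the latter in the shape appearing in $d_3$.

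Finally I would assemble $h_u = f_u - (\sigma_x/\sigma_S)\int_u^T(\cdots)\,ds$ and read off the coefficients of the four functions $1$, $e^{c_1 u}$, $e^{c_2 u}$, $e^{\alpha u}$: the constant term is $b_0 - (\sigma_x/\sigma_S)(b_0/\alpha) = d_0$, the $e^{c_i u}$ term is $b_i + (\sigma_x/\sigma_S)\,b_i/(c_i-\alpha) = d_i$, and the $e^{\alpha u}$ contributions combine to $(\sigma_x/\sigma_S)e^{-\alpha T}\bigl[b_0/\alpha - b_1 e^{c_1 T}/(c_1-\alpha) - b_2 e^{c_2 T}/(c_2-\alpha)\bigr] = d_3$. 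There is no real obstacle beyond careful sign bookkeeping; the only point worth flagging is that the hypotheses $\alpha\neq 0$ and $c_i\neq\alpha$ are exactly what keeps every denominator nonzero, and that if $c_1$ or $c_2$ happens to vanish the formula for $d_i$ degenerates smoothly into a constant contribution, consistently with the $b_0$ term.
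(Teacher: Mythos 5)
Your proof is correct and is essentially the paper's own argument: substitute the ansatz, factor $e^{-\alpha(s-u)}=e^{\alpha u}e^{-\alpha s}$, evaluate the elementary exponential integrals (nonzero exponents $-\alpha$, $c_1-\alpha$, $c_2-\alpha$ guaranteed by the hypotheses), and collect coefficients of $1$, $e^{c_1 u}$, $e^{c_2 u}$, $e^{\alpha u}$. The resulting coefficients match $d_0$, $d_1$, $d_2$, $d_3$ exactly, so there is nothing to add.
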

\begin{proof}
This follows from inserting $f$ and evaluating integrals of the form $\int_u^T e^{k s}ds$ for $k=\alpha$, $c_1-\alpha$, and $c_2-\alpha$. By assumption, all three exponents are different from $0$, and the integral evaluates to $\frac{1}{k}\left[e^{k s}\right]_u^T = \frac{1}{k}(e^{kT} - e^{ku})$. We get,
\begin{align*}
  h_u =\ & b_0 + b_1 e^{c_1 u} + b_2 e^{c_2 u} - \\[2mm]
        & \frac{\sigma_x}{\sigma_S}e^{\alpha u}\left(-\frac{b_0}{\alpha}\left[e^{-\alpha s}\right]_u^T + \frac{b_1}{c_1-\alpha}\left[e^{(c_1-\alpha)s}\right]_u^T + \frac{b_2}{c_2-\alpha}\left[e^{(c_2-\alpha)s}\right]_u^T\right),
\end{align*}
from which the result follows after expanding and collecting terms.
\end{proof}

\begin{lemma} \label{lemma:GeneralFormOfInth}
Assume $f$ is of form (\ref{eq:OptimalEquityAnsatz}) with $c_1\neq \pm\alpha$, $c_2\neq \pm\alpha$, and $\alpha\neq 0$. Let $h_u$ and $d_0$, $d_1$, $d_2$, and $d_3$ be defined as in Lemma~\ref{lemma:GeneralFormOfh}. For $0\leq s\leq T$,
\begin{equation}
    \frac{\sigma_x}{\sigma_S}\int_0^s h_u e^{-\alpha(s-u)}du = a_0 + a_1 e^{c_1 s} + a_2 e^{c_2 s} + a_3 e^{\alpha s} + a_4 e^{-\alpha s},
\end{equation}
where
\begin{align}
  a_0 & = \frac{\sigma_x}{\sigma_S}\frac{d_0}{\alpha}, \quad
  a_1 = \frac{\sigma_x}{\sigma_S}\frac{d_1}{c_1+\alpha}, \quad
  a_2 = \frac{\sigma_x}{\sigma_S}\frac{d_2}{c_2+\alpha}, \quad
  a_3 = \frac{\sigma_x}{\sigma_S}\frac{d_3}{2\alpha},  \\[2mm]
   \quad a_4 & = -\frac{\sigma_x}{\sigma_S}\left[\frac{d_0}{\alpha} + \frac{d_1}{c_1+\alpha} + \frac{d_2}{c_2+\alpha} + \frac{d_3}{2\alpha} \right].
\end{align}
\end{lemma}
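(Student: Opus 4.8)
The plan is a direct computation. First I would factor the deterministic kernel out of the integral, writing
\[
  \frac{\sigma_x}{\sigma_S}\int_0^s h_u e^{-\alpha(s-u)}du = \frac{\sigma_x}{\sigma_S}\, e^{-\alpha s}\int_0^s h_u e^{\alpha u}\,du ,
\]
and then substitute the explicit form of $h_u$ from Lemma~\ref{lemma:GeneralFormOfh}, so that the integrand $h_u e^{\alpha u}$ becomes a linear combination of the four exponentials $e^{\alpha u}$, $e^{(c_1+\alpha)u}$, $e^{(c_2+\alpha)u}$, $e^{2\alpha u}$, with coefficients $d_0,d_1,d_2,d_3$ respectively.

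The point that makes this clean is that, under the stated hypotheses $\alpha\neq 0$ and $c_1,c_2\neq-\alpha$, none of the four exponents $\alpha$, $c_1+\alpha$, $c_2+\alpha$, $2\alpha$ vanishes, so each term integrates via $\int_0^s e^{k u}du = \tfrac1k(e^{ks}-1)$. Multiplying the result by $\frac{\sigma_x}{\sigma_S}e^{-\alpha s}$ then produces: a term proportional to $e^{c_1 s}$ (from the $e^{(c_1+\alpha)u}$ piece), a term proportional to $e^{c_2 s}$ (from $e^{(c_2+\alpha)u}$), a term proportional to $e^{\alpha s}$ (from $e^{2\alpha u}$, since $e^{2\alpha s}e^{-\alpha s}=e^{\alpha s}$), a constant (from $e^{\alpha u}$, since $e^{\alpha s}e^{-\alpha s}=1$), and a batch of $e^{-\alpha s}$ terms arising from the four lower limits. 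Collecting the coefficient of each of the five functions $1,\ e^{c_1 s},\ e^{c_2 s},\ e^{\alpha s},\ e^{-\alpha s}$ yields exactly the claimed $a_0,\dots,a_4$; in particular $a_4$ equals minus the sum of the ``positive-exponent'' coefficients because each lower-limit term contributes $-1/k$.

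There is essentially no obstacle here beyond careful bookkeeping: the only thing to track is that the $e^{\alpha u}$ piece contributes both to the constant $a_0$ and to $a_4$, while the $e^{2\alpha u}$ piece contributes to the $e^{\alpha s}$ coefficient $a_3$ and to $a_4$, and that the $c_1$- and $c_2$-pieces contribute to $a_1$, $a_2$ and to $a_4$. Keeping these paired contributions straight is the whole content of the proof. (One could alternatively observe that $y(s)=\frac{\sigma_x}{\sigma_S}\int_0^s h_u e^{-\alpha(s-u)}du$ solves the linear ODE $y'+\alpha y=\frac{\sigma_x}{\sigma_S}h_s$ with $y(0)=0$, and verify that the stated combination of $1,e^{c_1 s},e^{c_2 s},e^{\alpha s},e^{-\alpha s}$ solves the same problem; but the direct integration above is shorter given the hypotheses.)
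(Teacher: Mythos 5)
Your computation is correct and is exactly the argument the paper intends: the paper's own proof simply says the result follows "by the same argument as in the proof of Lemma~\ref{lemma:GeneralFormOfh}," i.e.\ insert the explicit form of $h_u$, integrate the exponentials $e^{\alpha u}$, $e^{(c_1+\alpha)u}$, $e^{(c_2+\alpha)u}$, $e^{2\alpha u}$ term by term using that all exponents are non-zero, and collect coefficients, with the details left to the reader. Your bookkeeping of the upper- and lower-limit contributions (in particular that $a_4$ collects the $-1/k$ terms multiplied by $e^{-\alpha s}$) matches the stated coefficients, so the proposal fills in precisely those omitted details.
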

\begin{proof}
This follows by the same argument as in the proof of Lemma~\ref{lemma:GeneralFormOfh} using that, by assumption, all exponents are different from $0$. We leave the details to the reader.
\end{proof}

Note, that the assumption of Lemma~\ref{lemma:GeneralFormOfh} is slightly weaker than in Lemma~\ref{lemma:GeneralFormOfInth}. In the former, we need to ensure that the exponents of $f$ are different from $\alpha$, while in the latter we need to ensure that the exponents of $f$ are also different from $-\alpha$. Both assumptions are typically satisfied, except in certain special cases.

\begin{proposition} \label{prop:verifyGenCase}
Assume $\nu<0$, $\alpha\neq 0$, and $\alpha\neq \frac{1}{2}\frac{\sigma_x}{\sigma_S}$. Condition (\ref{eq:OptEquityIntegral}) of Lemma~\ref{lemma:OptEquityIntegral} is satisfied for
\begin{equation} \label{eq:GeneralSolutionFullVersion}
     f_s = b_0 + b_1 e^{c_1 s} + b_2 e^{c_2 s} \quad (0 \leq s \leq T),
\end{equation}
where $c_1 = \sqrt{-C/A} > 0$, $c_2=-\sqrt{-C/A}$, $b_0=-D/C$ with $A = 1-2\nu$, $C  = 2\nu(\alpha-\frac{\sigma_x}{\sigma_S})^2 -\alpha^2$, and $D=\alpha^2 \bar{x}/\sigma_S$. Further,
the exponents satisfy $c_1 = |c_2| \neq |\alpha|$, and
\begin{equation}  \label{eq:b1b2Solution}
   \begin{pmatrix}
    b_1 \\
    b_2
  \end{pmatrix}
  =
  \begin{pmatrix}
    \frac{e^{c_1 T}}{c_1-\alpha} & \frac{e^{c_2 T}}{c_2-\alpha} \\[2mm]
    \frac{\sigma_x}{\sigma_S(c_1+\alpha)-\sigma_x} & \frac{\sigma_x}{\sigma_S(c_2+\alpha)-\sigma_x}
  \end{pmatrix}^{-1}
  \begin{pmatrix}
    \frac{\alpha\bar{x}}{\sigma_S\left(\alpha^2 - 2\nu(\alpha-\sigma_x/\sigma_S)^2 \right)} \\[2mm]
    -\frac{x_0}{\sigma_S} + \frac{\alpha\bar{x}}{\sigma_S}\frac{\alpha-2\nu(\alpha-\sigma_x/\sigma_S)}{\alpha^2-2\nu(\alpha-\sigma_x/\sigma_S)^2}
  \end{pmatrix}.
\end{equation}
\end{proposition}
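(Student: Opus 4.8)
The plan is to verify the integral equation~(\ref{eq:OptEquityIntegral}) of Lemma~\ref{lemma:OptEquityIntegral} directly: insert the ansatz $f_s=b_0+b_1e^{c_1s}+b_2e^{c_2s}$, reduce the left-hand side to a linear combination of the exponential building blocks it generates, and match coefficients. First I would record what Lemma~\ref{lemma:DistinctRoots} gives us: under $\nu<0$, $\alpha\neq0$, $\alpha\neq\tfrac12\tfrac{\sigma_x}{\sigma_S}$, the roots $c_1=\sqrt{-C/A}>0$ and $c_2=-c_1$ of $Ar^2+C$ are real, distinct, and $c_1=|c_2|\neq|\alpha|$; the strict convex-combination identity~(\ref{eq:gamma2expectation}) from that lemma's proof moreover gives $c_i^2\neq(\alpha-\sigma_x/\sigma_S)^2$, so the denominators $c_i-\alpha$ and $\sigma_S(c_i+\alpha)-\sigma_x$ appearing below and in~(\ref{eq:b1b2Solution}) are all nonzero. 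Since $c_1,c_2\neq\pm\alpha$ and $\alpha\neq0$, the hypotheses of Lemmas~\ref{lemma:GeneralFormOfh} and~\ref{lemma:GeneralFormOfInth} hold, so $h$ and $\tfrac{\sigma_x}{\sigma_S}\int_0^sh_ue^{-\alpha(s-u)}du$ are the explicit combinations of $\{1,e^{c_1s},e^{c_2s},e^{\alpha s},e^{-\alpha s}\}$ with the coefficients $d_i$, $a_i$ recorded there, and $\xi_s=\tfrac{\bar x}{\sigma_S}+\tfrac{x_0-\bar x}{\sigma_S}e^{-\alpha s}$ by~(\ref{eq:meanEquityRiskPremium}).

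Substituting these into the left-hand side of~(\ref{eq:OptEquityIntegral}) yields a linear combination of the five linearly independent functions $1,e^{c_1s},e^{c_2s},e^{\alpha s},e^{-\alpha s}$ on $[0,T]$, so the equation holds iff the five coefficients vanish; I would treat them in turn. The coefficient of $e^{c_is}$ factors as $b_i$ times $\bigl(2\nu[1+\tfrac{\sigma_x}{\sigma_S(c_i-\alpha)}][1-\tfrac{\sigma_x}{\sigma_S(c_i+\alpha)}]-1\bigr)$; to avoid the trivial solution $b_1=b_2=0$ the bracket must vanish, which is exactly~(\ref{eq:divergingQuadratic}), whose roots are $c=\pm\sqrt{-C/A}$ by Lemma~\ref{lemma:DistinctRoots}, forcing $\{c_1,c_2\}=\{\sqrt{-C/A},-\sqrt{-C/A}\}$ as claimed. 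The coefficient of $e^{\alpha s}$ equals $2\nu(d_3-a_3)=2\nu d_3\bigl(1-\tfrac{\sigma_x}{2\alpha\sigma_S}\bigr)$, which, since $\nu\neq0$ and $\alpha\neq\tfrac12\tfrac{\sigma_x}{\sigma_S}$, vanishes iff $d_3=0$, i.e.\ $\tfrac{b_1e^{c_1T}}{c_1-\alpha}+\tfrac{b_2e^{c_2T}}{c_2-\alpha}=\tfrac{b_0}{\alpha}$; inserting $b_0=-D/C$ and $-C=\alpha^2-2\nu(\alpha-\sigma_x/\sigma_S)^2$ identifies this with the first row of~(\ref{eq:b1b2Solution}). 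The constant coefficient, after writing $d_0$ and $a_0$ in terms of $b_0$, collapses to $\tfrac{\bar x}{\sigma_S}=b_0\bigl(1-2\nu(1-\tfrac{\sigma_x}{\sigma_S\alpha})^2\bigr)=b_0\tfrac{-C}{\alpha^2}$, i.e.\ $b_0=-D/C$, so it is satisfied automatically by the stated value of $b_0$.

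This leaves the coefficient of $e^{-\alpha s}$, which I expect to be the main obstacle. Using $d_3=0$ (hence $a_3=0$ and $a_4=-(a_0+a_1+a_2)$), the condition reads $\tfrac{x_0-\bar x}{\sigma_S}+2\nu(a_0+a_1+a_2)=0$. The key simplification is that, for $i=1,2$, equation~(\ref{eq:divergingQuadratic}) is equivalent to $\sigma_S^2(c_i^2-\alpha^2)=2\nu[\sigma_S(c_i-\alpha)+\sigma_x][\sigma_S(c_i+\alpha)-\sigma_x]$, which collapses $2\nu a_i$ to $\tfrac{\sigma_xb_i}{\sigma_S(c_i+\alpha)-\sigma_x}$---the second-row entries of the matrix in~(\ref{eq:b1b2Solution}); substituting $b_0=-D/C$ into $2\nu a_0$ and using $-C=\alpha^2-2\nu(\alpha-\sigma_x/\sigma_S)^2$ together with $\alpha(\alpha-\sigma_x/\sigma_S)=(\alpha-\sigma_x/\sigma_S)^2+\tfrac{\sigma_x}{\sigma_S}(\alpha-\sigma_x/\sigma_S)$ turns the condition into the second row of~(\ref{eq:b1b2Solution}). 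Thus the only coefficient equations not satisfied automatically by the choices $b_0=-D/C$ and $c_1=-c_2=\sqrt{-C/A}$ are exactly the $2\times2$ linear system~(\ref{eq:b1b2Solution}) for $(b_1,b_2)$, whose coefficient matrix one checks to be nonsingular; hence the $b_1,b_2$ given by~(\ref{eq:b1b2Solution}) make all five coefficients vanish and~(\ref{eq:OptEquityIntegral}) holds on $[0,T]$. Apart from this last, somewhat computational step, everything reduces to reading coefficients off Lemmas~\ref{lemma:GeneralFormOfh}--\ref{lemma:GeneralFormOfInth} and invoking Lemma~\ref{lemma:DistinctRoots}.
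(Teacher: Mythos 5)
Your proposal is correct and follows essentially the same route as the paper's own proof: expand the left-hand side of (\ref{eq:OptEquityIntegral}) via Lemmas~\ref{lemma:GeneralFormOfh} and~\ref{lemma:GeneralFormOfInth}, use Lemma~\ref{lemma:DistinctRoots} to make the $e^{c_1 s}$ and $e^{c_2 s}$ coefficients vanish, and reduce the constant, $e^{\alpha s}$ and $e^{-\alpha s}$ coefficients to $b_0=-D/C$ together with the $2\times 2$ system (\ref{eq:b1b2Solution}). The additional algebra you carry out (e.g.\ the collapse $2\nu a_i=\sigma_x b_i/[\sigma_S(c_i+\alpha)-\sigma_x]$ and the simplification of the right-hand side vector) is exactly the computation the paper leaves implicit when passing from (\ref{eq:b1b2defining}) to (\ref{eq:b1b2Solution}).
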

\begin{proof}
By Lemma~\ref{lemma:DistinctRoots} we have that $c_1\neq \pm\alpha$, and thereby also $c_2\neq \pm\alpha$. The assumptions of Lemmas~\ref{lemma:GeneralFormOfh} and~\ref{lemma:GeneralFormOfInth} are thereby satisfied, and we can use the conclusions of these lemmas to recast the left-hand side of condition (\ref{eq:OptEquityIntegral}) as
\begin{align}
  \xi_s - f_s & + 2\nu h_s - 2\nu \frac{\sigma_x}{\sigma_S}\int_0^s h_u e^{-\alpha(s-u)}du \nonumber \\[2mm]
    =\ & \left[\bar{x} + e^{-\alpha s}(x_0-\bar{x})\right]/\sigma_S + 2\nu\left(d_0 + d_1 e^{c_1 s} + d_2 e^{c_2 s} + d_3 e^{\alpha s} \right) - \nonumber \\[2mm]
     & b_0 - b_1e^{c_1 s} - b_2e^{c_2s} - 2\nu\left(a_0 + a_1 e^{c_1 s} + a_2 e^{c_2 s} + a_3 e^{\alpha s} + a_4 e^{-\alpha s} \right) \nonumber \\[2mm]
    =\ & \left[ \bar{x}/\sigma_S-b_0 + 2\nu(d_0-a_0)\right] + \label{eq:DefiningEquationI}  \\[2mm]
     & \left[-b_1 + 2\nu(d_1-a_1) \right]e^{c_1 s} + \left[-b_2 + 2\nu(d_2-a_2) \right]e^{c_2 s} + \label{eq:DefiningEquationII} \\[2mm]
     & \left[ 2\nu(d_3-a_3) \right]e^{\alpha s} + \left[ (x_0-\bar{x})/\sigma_S - 2\nu a_4 \right]e^{-\alpha s}. \label{eq:DefiningEquationIII}
\end{align}
Condition (\ref{eq:OptEquityIntegral}) states that the expression above should equal $0$ for all $0\leq s \leq T$. Since the exponents in (\ref{eq:DefiningEquationI})--(\ref{eq:DefiningEquationIII}) are all non-zero and distinct, this is satisfied if (and only if) the five expressions in square brackets in (\ref{eq:DefiningEquationI})--(\ref{eq:DefiningEquationIII}) are all $0$. It remains to be shown that this is fulfilled if (and only if) the parameters take the stated values. For the current result we only need the "if"-part, but for the later uniqueness result we also need the "only if"-part; to facilitate this argument we show slightly more than needed below.

By Lemmas~\ref{lemma:GeneralFormOfh} and~\ref{lemma:GeneralFormOfInth}, we have that the expression in (\ref{eq:DefiningEquationI}) is $0$ if and only if
\begin{equation}
   b_0\left[2\nu\left(1- \frac{\sigma_x}{\sigma_S\alpha}\right)^2 - 1\right] = - \frac{\bar{x}}{\sigma_S}, \label{eq:b0defining}
\end{equation}
and that the two expressions in square brackets in (\ref{eq:DefiningEquationII}) are zero if and only if
\begin{align}
   b_1 & = b_1 2\nu\left(1+\frac{\sigma_x}{\sigma_S(c_1-\alpha)}\right)\left(1-\frac{\sigma_x}{\sigma_S(c_1+\alpha)}\right), \label{eq:c1defining} \\[2mm]
   b_2 & = b_2 2\nu\left(1+\frac{\sigma_x}{\sigma_S(c_2-\alpha)}\right)\left(1-\frac{\sigma_x}{\sigma_S(c_2+\alpha)}\right). \label{eq:c2defining}
\end{align}
Further, assuming (\ref{eq:c1defining})--(\ref{eq:c2defining}) hold and using $\alpha\neq \frac{1}{2}\frac{\sigma_x}{\sigma_S}$, the two expressions in square brackets in (\ref{eq:DefiningEquationIII}) are zero if and only if
\begin{equation} \label{eq:b1b2defining}
    \begin{pmatrix}
    \frac{e^{c_1 T}}{c_1-\alpha} & \frac{e^{c_2 T}}{c_2-\alpha} \\[2mm]
    \frac{\sigma_x}{\sigma_S(c_1+\alpha)-\sigma_x} & \frac{\sigma_x}{\sigma_S(c_2+\alpha)-\sigma_x}
  \end{pmatrix}
   \begin{pmatrix}
    b_1 \\
    b_2
  \end{pmatrix}
  =
   \begin{pmatrix}
    \frac{b_0}{\alpha} \\
    -\frac{x_0}{\sigma_S} + \frac{\bar{x}}{\sigma_S} - 2\nu b_0 \frac{\sigma_x}{\sigma_S\alpha}\left(1-\frac{\sigma_x}{\sigma_S\alpha}\right)
  \end{pmatrix}.
\end{equation}
To reach the desired conclusion, we need to verify that (\ref{eq:b0defining})--(\ref{eq:b1b2defining}) all hold. Clearly, $b_0=-D/C$ satisfies (\ref{eq:b0defining}), while (\ref{eq:c1defining}) and (\ref{eq:c2defining}) are satisfied by Lemma~\ref{lemma:DistinctRoots}. Finally, with $b_0=-D/C$, the right-hand side of (\ref{eq:b1b2defining}) is equivalent to the rightmost column of (\ref{eq:b1b2Solution}), and it follows that (\ref{eq:b1b2defining}) is also satisfied.
\end{proof}

Proposition~\ref{prop:verifyGenCase} leaves two cases to be handled separately. These are covered by the following propositions.  The line of argument is the same, but the specific calculations differ slightly from those in the proof of Proposition~\ref{prop:verifyGenCase}.

\begin{proposition} \label{prop:verifyAlphaZero}
Assume $\nu<0$ and $\alpha=0$. Condition (\ref{eq:OptEquityIntegral}) of Lemma~\ref{lemma:OptEquityIntegral} is satisfied for
\begin{equation} \label{eq:GeneralSolutionAlphaZero}
     f_s = b_1 e^{c_1 s} + b_2 e^{c_2 s} \quad (0 \leq s \leq T),
\end{equation}
where $c_1 = \sqrt{-C/A} > 0$, $c_2=-\sqrt{-C/A}$ with $A = 1-2\nu$, and $C  = 2\nu(\frac{\sigma_x}{\sigma_S})^2$. Further,
the exponents satisfy $c_1 = |c_2| \neq 0$, and
\begin{equation}  \label{eq:b1b2SolutionAlphaZero}
   \begin{pmatrix}
    b_1 \\
    b_2
  \end{pmatrix}
  =
  \begin{pmatrix}
    \frac{e^{c_1 T}}{c_1} & \frac{e^{c_2 T}}{c_2} \\[2mm]
    \frac{\sigma_x}{\sigma_S c_1 -\sigma_x} & \frac{\sigma_x}{\sigma_S c_2-\sigma_x}
  \end{pmatrix}^{-1}
  \begin{pmatrix}
    0 \\[2mm]
    -\frac{x_0}{\sigma_S}
  \end{pmatrix}.
\end{equation}
\end{proposition}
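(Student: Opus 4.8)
The plan is to follow the proof of Proposition~\ref{prop:verifyGenCase} essentially verbatim, pausing only where the degeneracy $\alpha=0$ forces a change. First I would specialize the data: for $\alpha=0$ the expected market price of equity risk (\ref{eq:meanEquityRiskPremium}) becomes the constant $\xi_s=x_0/\sigma_S$, and the coefficients (\ref{eq:ODCcoef}) reduce to $A=1-2\nu>1$, $C=2\nu(\sigma_x/\sigma_S)^2$ and $D=0$. Since $\nu<0$ and $\sigma_x,\sigma_S>0$ we have $C<0$, hence $-C/A>0$; the characteristic polynomial $Ar^2+C$ therefore has the two distinct real roots $c_1=\sqrt{-C/A}>0$ and $c_2=-c_1\neq 0$, while the particular solution $-D/C$ vanishes. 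Lemma~\ref{lemma:OptEquityODE} then says every extremal strategy is of the form (\ref{eq:GeneralSolutionAlphaZero}), and the task reduces to pinning down $b_1,b_2$ from the integral characterization (\ref{eq:OptEquityIntegral}).

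Next I would recompute the auxiliary quantities $h$ and $\tfrac{\sigma_x}{\sigma_S}\int_0^s h_u\,e^{-\alpha(s-u)}du$ from scratch, since Lemmas~\ref{lemma:GeneralFormOfh} and~\ref{lemma:GeneralFormOfInth} were stated under $\alpha\neq 0$ (their coefficients involve $1/\alpha$). With $\alpha=0$ the exponential kernels collapse to $1$, so inserting $f_s=b_1e^{c_1s}+b_2e^{c_2s}$ into (\ref{eq:hEquityDef}) and evaluating the elementary integrals $\int e^{c_is}ds$ (legitimate because $c_1,c_2\neq 0$) gives $h_u=d_1e^{c_1u}+d_2e^{c_2u}+d_3$ with $d_i=b_i(1+\tfrac{\sigma_x}{\sigma_Sc_i})$ for $i=1,2$ and $d_3=-\tfrac{\sigma_x}{\sigma_S}\bigl(\tfrac{b_1}{c_1}e^{c_1T}+\tfrac{b_2}{c_2}e^{c_2T}\bigr)$. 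Integrating once more, the terms $d_ie^{c_iu}$ again produce terms proportional to $e^{c_is}$ and a constant, but the constant $d_3$ now integrates to a term \emph{linear in $s$}: this linear term is the sole structural novelty relative to the generic case, where the analogous contribution was the exponential $e^{\alpha s}$.

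Then I would substitute everything into (\ref{eq:OptEquityIntegral}) and collect coefficients. Using $b_0=0$ and $\xi_s=x_0/\sigma_S$, the left-hand side becomes a linear combination of $1$, $s$, $e^{c_1s}$, $e^{c_2s}$, which are linearly independent on $[0,T]$ because $c_1,c_2$ are nonzero and distinct; hence all four coefficients must vanish. The coefficients of $e^{\pm c_1s}$ cancel automatically by the choice $Ac_1^2+C=0$ (the $\alpha=0$ analogue of (\ref{eq:c1defining})--(\ref{eq:c2defining}), which here reads $2\nu\bigl(1-\tfrac{\sigma_x^2}{\sigma_S^2c_1^2}\bigr)=1$). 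Vanishing of the coefficient of $s$ reads $-2\nu\tfrac{\sigma_x}{\sigma_S}d_3=0$, i.e.\ $d_3=0$ since $\nu\neq 0$ and $\sigma_x,\sigma_S>0$; this is exactly the first row of the system in (\ref{eq:b1b2SolutionAlphaZero}). Finally, imposing $d_3=0$ in the constant coefficient leaves $x_0/\sigma_S-2\nu a_0=0$, where $a_0=-\tfrac{\sigma_x}{\sigma_S}\bigl(\tfrac{d_1}{c_1}+\tfrac{d_2}{c_2}\bigr)$; rewriting $Ac_i^2+C=0$ as $\sigma_S^2c_i^2=2\nu(\sigma_Sc_i-\sigma_x)(\sigma_Sc_i+\sigma_x)$ and simplifying turns this into $-x_0/\sigma_S=\tfrac{\sigma_xb_1}{\sigma_Sc_1-\sigma_x}+\tfrac{\sigma_xb_2}{\sigma_Sc_2-\sigma_x}$, the second row of (\ref{eq:b1b2SolutionAlphaZero}). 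Hence the $(b_1,b_2)$ defined in (\ref{eq:b1b2SolutionAlphaZero}) satisfy both equations (one checks the $2\times2$ coefficient matrix is nonsingular so the inverse is meaningful), so $f$ of the form (\ref{eq:GeneralSolutionAlphaZero}) with these coefficients satisfies (\ref{eq:OptEquityIntegral}), and $c_1=|c_2|\neq 0$ as claimed.

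There is no conceptual obstacle; the entire content is the algebraic bookkeeping of the third step. The two points that require care are: recognizing that with $\alpha=0$ one cannot simply specialize Lemmas~\ref{lemma:GeneralFormOfh}--\ref{lemma:GeneralFormOfInth} but must redo $h$ and $\int h$, and that the role played in the generic proof by the $e^{\alpha s}$-coefficient is here taken over by the coefficient of the linear term $s$; and verifying that the constant-term equation collapses to the second row of (\ref{eq:b1b2SolutionAlphaZero}), which rests entirely on the identity $Ac_1^2+C=0$ recast as $\sigma_S^2c_1^2=2\nu(\sigma_Sc_1-\sigma_x)(\sigma_Sc_1+\sigma_x)$.
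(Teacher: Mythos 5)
Your argument is correct and follows essentially the same route as the paper's (sketched) proof: recompute $h$ and $\tfrac{\sigma_x}{\sigma_S}\int_0^s h_u\,du$ directly for $\alpha=0$, expand the left-hand side of (\ref{eq:OptEquityIntegral}) in the independent functions $e^{c_1 s}$, $e^{c_2 s}$, $s$, $1$, note that the exponential coefficients vanish by $Ac_i^2+C=0$ (the paper invokes Lemma~\ref{lemma:DistinctRoots} for this, which is equivalent), and identify the $s$- and constant-coefficient equations with the two rows of (\ref{eq:b1b2SolutionAlphaZero}). Your explicit simplification of the constant-term equation via $\sigma_S^2c_i^2=2\nu(\sigma_Sc_i-\sigma_x)(\sigma_Sc_i+\sigma_x)$ is exactly the algebra the paper leaves to the reader, so there is nothing to add.
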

\begin{proof} We will only give a sketch of the proof. By computations similar to those of Lemmas~\ref{lemma:GeneralFormOfh} and~\ref{lemma:GeneralFormOfInth} we first obtain expressions for $h$ and $\int h$; the computations are left to the reader. Using these expressions we recast the left-hand side of condition (\ref{eq:OptEquityIntegral}) as
\begin{align}
  \xi_s - f_s & + 2\nu h_s - 2\nu \frac{\sigma_x}{\sigma_S}\int_0^s h_u du \nonumber \\[2mm]
    =\ & x_0/\sigma_S - b_1e^{c_1 s} - b_2e^{c_2s} + 2\nu\left( d_1 e^{c_1 s} + d_2 e^{c_2 s} + d_3 \right) - \nonumber \\[2mm]
     & 2\nu\left(a_1 e^{c_1 s} + a_2 e^{c_2 s} + a_3 s - a_1 - a_2 \right) \nonumber \\[2mm]
    =\ & \left[-b_1 + 2\nu(d_1-a_1) \right]e^{c_1 s} + \left[-b_2 + 2\nu(d_2-a_2) \right]e^{c_2 s} + \label{eq:DefiningEquationAlphaZeroI} \\[2mm]
       & \left[ - 2\nu a_3 \right]s + \left[ x_0/\sigma_S + 2\nu (d_3+a_1+a_2) \right], \label{eq:DefiningEquationAlphaZeroII}
\end{align}
where
\begin{align*}
  d_1 & = b_1\left[1+\frac{\sigma_x}{\sigma_S c_1}\right], d_2 = b_2\left[1+\frac{\sigma_x}{\sigma_S c_2}\right], d_3=-\frac{\sigma_x}{\sigma_S}\left[\frac{b_1}{c_1}e^{c_1 T}+\frac{b_2}{c_2}e^{c_2 T}\right], \\[2mm]
  a_1 & = \frac{\sigma_x}{\sigma_S}\frac{d_1}{c_1}, \quad  a_2 = \frac{\sigma_x}{\sigma_S}\frac{d_2}{c_2}, \quad  a_3 = \frac{\sigma_x}{\sigma_S}d_3.
\end{align*}
As in the proof of Proposition~\ref{prop:verifyGenCase}, it follows that condition (\ref{eq:OptEquityIntegral}) is satisfied if and only if the four terms in square brackets in (\ref{eq:DefiningEquationAlphaZeroI}) and (\ref{eq:DefiningEquationAlphaZeroII}) are zero. This in turn is equivalent to the following set of conditions,
\begin{align}
  b_1 & = b_1 2\nu\left(1+\frac{\sigma_x}{\sigma_Sc_1}\right)\left(1 -\frac{\sigma_x}{\sigma_Sc_1}\right),\ b_2 = b_2 2\nu\left(1+\frac{\sigma_x}{\sigma_Sc_2}\right)\left(1 -\frac{\sigma_x}{\sigma_Sc_2}\right), \label{eq:c1c2definingAlphaZero} \\[2mm]
  0 & = \frac{b_1}{c_1}e^{c_1 T}+\frac{b_2}{c_2}e^{c_2 T}, \quad -\frac{x_0}{\sigma_S} = \frac{b_1\sigma_x}{\sigma_S c_1-\sigma_x} + \frac{b_2\sigma_x}{\sigma_S c_2-\sigma_x}.
  \label{eq:b1b2definingAlphaZero}
\end{align}
Since $\alpha=0 \neq \frac{1}{2}\frac{\sigma_x}{\sigma_S}$, we can use Lemma~\ref{lemma:DistinctRoots} to conclude that (\ref{eq:c1c2definingAlphaZero}) is satisfied, while it follows from (\ref{eq:b1b2SolutionAlphaZero}) that (\ref{eq:b1b2definingAlphaZero}) is satisfied.
\end{proof}

Note that the solution stated in Proposition~\ref{prop:verifyAlphaZero} is in fact the same as in Proposition~\ref{prop:verifyGenCase} with $\alpha=0$; it is only the argument for verifying the solution that differs. This is due to the fact that in both cases the exponents, $c_1$ and $c_2$, differ from $\pm\alpha$. For the other special case, $\alpha=\frac{1}{2}\frac{\sigma_x}{\sigma_S}$, the exponents equal $\pm\alpha$ which change the conditions somewhat.

\begin{proposition} \label{prop:verifyAlphaHalfRatio}
Assume $\nu<0$ and $\alpha=\frac{1}{2}\frac{\sigma_x}{\sigma_S}$. Condition (\ref{eq:OptEquityIntegral}) of Lemma~\ref{lemma:OptEquityIntegral} is satisfied for
\begin{equation} \label{eq:GeneralSolutionHalfRatio}
     f_s = b_0 + b_2 e^{-\alpha s} \quad (0 \leq s \leq T),
\end{equation}
where
\begin{equation}  \label{eq:bSolutionAlphaHalfRatio}
   b_0 =  \frac{\bar{x}}{\sigma_S(1-2\nu)}, \quad
   b_2 = \frac{-\frac{x_0}{\sigma_S}+ \frac{\bar{x}}{\sigma_S}\left(1-\frac{4\nu}{1-2\nu}\left(e^{-\alpha T}-1\right)\right)}{2\nu e^{-2\alpha T}-1}.
\end{equation}
\end{proposition}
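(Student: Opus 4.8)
The plan is to follow the same scheme as in Propositions~\ref{prop:verifyGenCase} and~\ref{prop:verifyAlphaZero}, but starting from the observation that when $\alpha=\tfrac12\tfrac{\sigma_x}{\sigma_S}$ the two exponents of the differential equation in Lemma~\ref{lemma:OptEquityODE} collapse onto $\pm\alpha$. Indeed, substituting $\sigma_x/\sigma_S=2\alpha$ into (\ref{eq:ODCcoef}) gives $C=(2\nu-1)\alpha^2$ and hence $-C/A=\alpha^2$, so the general solution of (\ref{eq:OptEquityODE}) is $f_s=b_0+b_1e^{\alpha s}+b_2e^{-\alpha s}$ with $b_0=-D/C=\bar x/(\sigma_S(1-2\nu))$. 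Note also that $\alpha>0$ here (since $\sigma_x,\sigma_S>0$), so the divisions by $\alpha$ occurring below are harmless. I would insert this three-parameter ansatz into the characterising integral equation (\ref{eq:OptEquityIntegral}) and show that the coefficient matching forces $b_1=0$ and pins $b_0,b_2$ to the stated values.

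The first step is to evaluate $h_u$ from (\ref{eq:hEquityDef}). The new feature relative to Lemma~\ref{lemma:GeneralFormOfh} is that the $e^{\alpha s}$-mode of $f$ is now in resonance with the kernel $e^{-\alpha(s-u)}$: the integral $\int_u^T e^{\alpha s}e^{-\alpha(s-u)}ds$ equals $(T-u)e^{\alpha u}$ instead of an exponential. Carrying this through and using $\sigma_x/\sigma_S=2\alpha$, I expect $h_u=p+q\,e^{\alpha u}+r\,u\,e^{\alpha u}$ for explicit constants $p,q,r$ with, in particular, $p=-b_0$ and $r=2\alpha b_1$, the $e^{-\alpha u}$-contributions cancelling. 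The second step is to compute $\tfrac{\sigma_x}{\sigma_S}\int_0^s h_u e^{-\alpha(s-u)}du$; here no further resonance occurs, and the integral is a linear combination of $1$, $e^{-\alpha s}$, $e^{\alpha s}$ and $s\,e^{\alpha s}$ with coefficients expressible through $p,q,r$.

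The third step is to substitute these expressions into (\ref{eq:OptEquityIntegral}), with $\xi_s=\bar x/\sigma_S+(x_0-\bar x)e^{-\alpha s}/\sigma_S$ from (\ref{eq:meanEquityRiskPremium}), and to collect the coefficients of the four linearly independent functions $1$, $e^{\alpha s}$, $e^{-\alpha s}$, $s\,e^{\alpha s}$. I expect: the coefficient of $s\,e^{\alpha s}$ to vanish identically; the coefficient of $e^{\alpha s}$ to reduce to $-(1-2\nu)b_1$, which since $\nu<0$ (so $1-2\nu>0$) forces $b_1=0$; the constant coefficient to give $\bar x/\sigma_S-(1-2\nu)b_0=0$, i.e.\ the stated $b_0$; and, after setting $b_1=0$, the coefficient of $e^{-\alpha s}$ to become $\tfrac{x_0-\bar x}{\sigma_S}-b_2-4\nu b_0+2\nu(2b_0e^{-\alpha T}+b_2e^{-2\alpha T})$, which, upon substituting $b_0$ and solving for $b_2$, yields exactly (\ref{eq:bSolutionAlphaHalfRatio}). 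The division by $2\nu e^{-2\alpha T}-1$ is legitimate since $\nu<0$ gives $2\nu e^{-2\alpha T}-1<-1<0$.

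The computation is essentially bookkeeping; the one genuinely non-routine point is the resonance of the $e^{\alpha s}$-mode with the kernel, which introduces the extra basis function $s\,e^{\alpha s}$. One must carry this term through carefully to see both that its coefficient cancels and that the residual $e^{\alpha s}$-coefficient is precisely what forces $b_1=0$, explaining why in this special case the solution is a genuine two-term function rather than a three-term one. If one only wants the ``if''-direction as stated, it is enough to substitute $f_s=b_0+b_2e^{-\alpha s}$ directly, in which case the resonant term never appears; but keeping $b_1$ general gives the ``only if''-direction at no extra cost, which is what the uniqueness argument at the end of the appendix will need.
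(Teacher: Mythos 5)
Your proposal is correct and follows essentially the same route as the paper's own proof: the paper also inserts the three-term ansatz $f_s=b_0+b_1e^{\alpha s}+b_2e^{-\alpha s}$, handles the resonant $s\,e^{\alpha s}$ term arising from the kernel, and matches coefficients of $1$, $e^{\alpha s}$, $s\,e^{\alpha s}$, $e^{-\alpha s}$ to obtain exactly your conditions $b_0=\bar{x}/[\sigma_S(1-2\nu)]$, $-(1-2\nu)b_1=0$, and the $e^{-\alpha s}$-equation yielding (\ref{eq:bSolutionAlphaHalfRatio}). Your intermediate quantities (e.g.\ $h_u=-b_0+q e^{\alpha u}+2\alpha b_1 u e^{\alpha u}$ with the $e^{-\alpha u}$ terms cancelling) agree with the paper's $d$- and $a$-coefficients, and your remark about keeping $b_1$ general for the uniqueness argument mirrors the paper's stated motivation.
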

\begin{proof} For use in the later uniqueness result, we will prove a bit more than needed to establish the current result. Consider $f$ of the form
\begin{equation} \label{eq:GeneralSolutionHalfRatioExtra}
  f_s = b_0 + b_1 e^{\alpha s} + b_2 e^{-\alpha s} \quad (0 \leq s \leq T).
\end{equation}
By computations similar to those of Lemmas~\ref{lemma:GeneralFormOfh} and~\ref{lemma:GeneralFormOfInth} we recast the left-hand side of condition (\ref{eq:OptEquityIntegral}) as
\begin{align}
  \xi_s - f_s & + 2\nu h_s - 2\nu \frac{\sigma_x}{\sigma_S}\int_0^s h_u e^{-\alpha(s-u)}du \nonumber \\[2mm]
    =\ & \left[\bar{x} + e^{-\alpha s}(x_0-\bar{x})\right]/\sigma_S + 2\nu\left(-b_0 + d_1 e^{\alpha s} + d_2 e^{\alpha s}s \right) - \nonumber \\[2mm]
     & b_0 - b_1e^{\alpha s} - b_2e^{-\alpha s} - 2\nu\left(-2 b_0 + a_1 e^{\alpha s} + a_2 e^{\alpha s}s + a_3 e^{-\alpha s} \right) \nonumber \\[2mm]
    =\ & \left[ \bar{x}/\sigma_S -b_0 + 2\nu b_0 \right] + \left[-b_1 + 2\nu(d_1-a_1) \right]e^{\alpha s} +  \label{eq:DefiningEquationAlphaHalfRatioI}  \\[2mm]
       & \left[ (x_0-\bar{x})/\sigma_S - b_2 - 2\nu a_3 \right]e^{-\alpha s}. \label{eq:DefiningEquationAlphaHalfRatioII}
\end{align}
where
\begin{align*}
  d_1 & = b_1 - \frac{\sigma_x}{\sigma_S} b_1 T + 2b_0e^{-\alpha T} + b_2e^{-2\alpha T}, \quad d_2 = \frac{\sigma_x}{\sigma_S} b_1, \\[2mm]
  a_1 & = d_1 - b_1, \quad a_2=d_2, \quad a_3 = 2b_0 - d_1 + b_1.
\end{align*}
Note that, compared to previous calculations there is apparently missing factors of $\sigma_x/\sigma_S$ in the $a$-coefficients. These, however, are being absorbed by the (also missing) factors proportional to $1/\alpha$.

As in the proof of Proposition~\ref{prop:verifyGenCase}, it follows that condition (\ref{eq:OptEquityIntegral}) is satisfied if and only if the three terms in square brackets in (\ref{eq:DefiningEquationAlphaHalfRatioI}) and (\ref{eq:DefiningEquationAlphaHalfRatioII}) are zero. This in turn is equivalent to the following set of conditions,
\begin{equation}
b_0 = \frac{\bar{x}}{\sigma_S(1-2\nu)}, \quad b_1=0, \quad 4\nu b_0\left(e^{-\alpha T}-1\right) + b_2\left(2\nu e^{-2\alpha T}-1\right) = -\frac{x_0-\bar{x}}{\sigma_S},
\end{equation}
from which the result follows.
\end{proof}

Collectively, Propositions~\ref{prop:verifyGenCase}, \ref{prop:verifyAlphaZero}, and~\ref{prop:verifyAlphaHalfRatio} provide a solution to condition (\ref{eq:OptEquityIntegral}) of Lemma~\ref{lemma:OptEquityIntegral} for $\nu<0$. The proofs of the propositions consist of deriving an equivalent set of conditions characterizing the constants in (\ref{eq:OptimalEquityAnsatz}). Since we know that $f$ is of form (\ref{eq:OptimalEquityAnsatz}), it might seem as if we have also shown uniqueness. However, there is a caveat. In deriving the equivalent set of conditions we make {\em a priori} assumptions on the value of the exponents, $c_1$ and $c_2$, and of the constant $b_0$ (in Proposition~\ref{prop:verifyAlphaZero}). To prove uniqueness we also need to check that condition (\ref{eq:OptEquityIntegral}) cannot be satisfied if $c_1$ and $c_2$ take values initially ruled out by assumption. This amounts to deriving alternative sets of equivalent conditions and showing that these cannot be satisfied. In principle, this is possible, but it is rather laborious. Fortunately, there is a shortcut to this brute force approach.

\begin{proposition} \label{prop:uniqueness}
Assume $\nu < 0$. The solution to (\ref{eq:OptEquityIntegral}) of Lemma~\ref{lemma:OptEquityIntegral} provided by Propositions~\ref{prop:verifyGenCase}, \ref{prop:verifyAlphaZero}, and~\ref{prop:verifyAlphaHalfRatio} is unique.
\end{proposition}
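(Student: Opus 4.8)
We sketch the argument; it reduces the claim to a short determinant computation. The plan is to show that (\ref{eq:OptEquityIntegral}) forces a solution into the finite-dimensional family already identified, and that the remaining free parameters are then pinned down by a \emph{non-singular} linear system --- so that no separate analysis of ``alternative'' exponents is ever needed. Concretely: \textbf{(i)} a regularity bootstrap shows that every solution of (\ref{eq:OptEquityIntegral}) is $C^{\infty}$, hence satisfies the differential equation of Lemma~\ref{lemma:OptEquityODE}; \textbf{(ii)} for $\nu<0$ that ODE has $A>0$ and $C<0$, so $0$ is not a root of $Ar^{2}+C$, which forces $b_{0}=-D/C$ and confines the solution to $f_{s}=b_{0}+b_{1}e^{c_{1}s}+b_{2}e^{c_{2}s}$ with $c_{1}=\sqrt{-C/A}>0$ and $c_{2}=-c_{1}$ determined; \textbf{(iii)} substituting this form into (\ref{eq:OptEquityIntegral}) and equating the coefficients of the distinct, non-zero exponentials that appear --- exactly the computation already carried out, in both directions, in Propositions~\ref{prop:verifyGenCase}, \ref{prop:verifyAlphaZero}, and~\ref{prop:verifyAlphaHalfRatio} --- shows that (\ref{eq:OptEquityIntegral}) holds iff $(b_{1},b_{2})$ solves the $2\times2$ linear system displayed there; \textbf{(iv)} that system has a unique solution because its coefficient matrix is invertible. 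Uniqueness of $f$ follows.

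For step (i): collecting the two occurrences of $f_{s}$ in (\ref{eq:OptEquityIntegral}) via $h_{s}=f_{s}-\frac{\sigma_{x}}{\sigma_{S}}\int_{s}^{T}f_{u}e^{-\alpha(u-s)}\,du$ rewrites the equation as
\[
  (1-2\nu)\,f_{s}=\xi_{s}-2\nu\frac{\sigma_{x}}{\sigma_{S}}\int_{s}^{T}f_{u}e^{-\alpha(u-s)}\,du-2\nu\frac{\sigma_{x}}{\sigma_{S}}\int_{0}^{s}h_{u}e^{-\alpha(s-u)}\,du.
\]
Since $1-2\nu\neq0$ and the right-hand side is continuous in $s$ whenever $f$ is merely integrable ($\xi$ is smooth, and the integral terms are continuous because $h\in L^{1}$ follows from $f\in L^{1}$ and the kernels are bounded), every solution is continuous; the right-hand side is then $C^{1}$, so $f\in C^{1}$, and iterating gives $f\in C^{\infty}$. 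Hence Lemma~\ref{lemma:OptEquityODE} applies with no extra hypothesis; for $\nu<0$ one has $A=1-2\nu>0$ and (using $\sigma_{x},\sigma_{S}>0$) $C<0$, giving step (ii) as stated. The special cases $\alpha=0$ (then $D=0$, $C\neq0$, so $b_{0}=0$) and $\alpha=\frac{1}{2}\frac{\sigma_{x}}{\sigma_{S}}$ (then $c_{1}=\alpha$ and $c_{2}=-\alpha$) are the settings of Propositions~\ref{prop:verifyAlphaZero} and~\ref{prop:verifyAlphaHalfRatio}; step (iii) is precisely the ``only if'' content those three propositions were written to supply.

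The substantive step is (iv). Write $c=c_{1}>0$, $c_{2}=-c$, $P=\sigma_{S}(c-\alpha)+\sigma_{x}$, $Q=\sigma_{S}(c+\alpha)-\sigma_{x}$. Rearranging equation (\ref{eq:divergingQuadratic}) of Lemma~\ref{lemma:DistinctRoots} yields $PQ=\sigma_{S}^{2}(c^{2}-\alpha^{2})/(2\nu)$, and the same lemma gives $c\neq\pm\alpha$, whence also $P\neq0$ and $Q\neq0$. Expanding the determinant of the matrix in (\ref{eq:b1b2defining}), substituting $P=\sigma_{S}^{2}(c^{2}-\alpha^{2})/(2\nu Q)$, and cancelling the factor $c+\alpha$ (non-zero since $c\neq-\alpha$), one finds that the determinant vanishes if and only if
\[
  e^{2cT}Q^{2}=\frac{\sigma_{S}^{2}(c-\alpha)^{2}}{2\nu},
\]
which is impossible because the left-hand side is positive while the right-hand side is negative ($\nu<0$, $c\neq\alpha$). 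The case $\alpha=0$ is identical with $\alpha$ replaced by $0$ throughout; and in the remaining case $\alpha=\frac{1}{2}\frac{\sigma_{x}}{\sigma_{S}}$ the system degenerates (as in Proposition~\ref{prop:verifyAlphaHalfRatio}) to $b_{1}=0$ together with $b_{2}\,(2\nu e^{-2\alpha T}-1)=\cdots$, whose coefficient satisfies $2\nu e^{-2\alpha T}-1<-1\neq0$. In every case $(b_{1},b_{2})$ --- hence $f$ --- is uniquely determined. As a byproduct, this non-vanishing is what makes the matrix inverses written in Propositions~\ref{prop:verifyGenCase} and~\ref{prop:verifyAlphaZero} meaningful in the first place.

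I expect step (iv) to be the main obstacle: the key is to notice that the relation defining the exponents --- $PQ$ proportional to $c^{2}-\alpha^{2}$ --- collapses ``$\det=0$'' to the self-contradictory identity above, using $\nu<0$. The bootstrap in step (i) is routine but must be written out, since Lemma~\ref{lemma:OptEquityODE} was only justified ``assuming sufficient regularity''; and one must run the three exponent regimes ($\alpha$ generic, $\alpha=0$, $\alpha=\frac{1}{2}\frac{\sigma_{x}}{\sigma_{S}}$) in parallel, as the relevant linear system --- and even its size --- changes between them.
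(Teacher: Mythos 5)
Your proof follows the same skeleton as the paper's: invoke Lemma~\ref{lemma:OptEquityODE} to force any solution of (\ref{eq:OptEquityIntegral}) into the two-parameter exponential family with the exponents and $b_0$ already fixed, observe that the equivalences derived in Propositions~\ref{prop:verifyGenCase}, \ref{prop:verifyAlphaZero}, and~\ref{prop:verifyAlphaHalfRatio} then reduce the problem to the linear system (\ref{eq:b1b2defining}), and conclude from its unique solvability. The difference is that you actually prove two things the paper only asserts. First, the paper's Lemma~\ref{lemma:OptEquityODE} is stated ``assuming sufficient regularity,'' and the published proof of uniqueness applies it without comment; your bootstrap --- solving (\ref{eq:OptEquityIntegral}) for $(1-2\nu)f_s$ and iterating smoothness of the right-hand side --- closes that loop (and is valid, since $1-2\nu\neq 0$ and the kernels are smooth and bounded). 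Second, and more substantively, the paper's proof ends with ``since (\ref{eq:b1b2defining}) has a unique solution, the claimed uniqueness follows'' without verifying that the coefficient matrix is non-singular; indeed the matrix inverse in Proposition~\ref{prop:verifyGenCase} is written down without justification. Your determinant computation supplies this: the identity $PQ=\sigma_S^2(c^2-\alpha^2)/(2\nu)$ extracted from (\ref{eq:divergingQuadratic}) of Lemma~\ref{lemma:DistinctRoots} (which also forces $P,Q\neq 0$, since $c\neq\pm\alpha$ makes the product non-zero) collapses $\det=0$ to $e^{2cT}Q^2=\sigma_S^2(c-\alpha)^2/(2\nu)$, whose two sides have opposite signs when $\nu<0$. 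That sign argument, together with the observation that the degenerate case $\alpha=\tfrac12\sigma_x/\sigma_S$ reduces to the non-vanishing of $2\nu e^{-2\alpha T}-1<-1$, is correct and is a genuine improvement on the published argument rather than a deviation from it.
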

\begin{proof}
For $\nu<0$, it follows from Lemma~\ref{lemma:OptEquityODE} that $f$ is of form (\ref{eq:OptimalEquityAnsatz}) with $b_0=-D/C$. Moreover, it follows from general theory on ordinary differential equations (ODE) that the exponents are given by $\pm\sqrt{-C/A}$, since $-C/A>0$.

Revisiting the proof of Proposition~\ref{prop:verifyGenCase}, the initial assumptions on $c_1$ and $c_2$, and the recasting of condition (\ref{eq:OptEquityIntegral}) are thus justified, not by assumption, but as a consequence of Lemma~\ref{lemma:OptEquityODE}. A fortiori, we can restrict attention to candidate solutions of form (\ref{eq:OptimalEquityAnsatz}) with $c_1=\sqrt{-C/A}$, $c_2=-\sqrt{C/A}$, and $b_0=-D/C$, and for these candidate solutions the derived conditions are valid. In fact, conditions (\ref{eq:b0defining})--(\ref{eq:c2defining}) are automatically satisfied, and a candidate solution is therefore an actual solution if and only if it satisfies condition (\ref{eq:b1b2defining}). Finally, since (\ref{eq:b1b2defining}) has a unique solution, the claimed uniqueness follows, in the case $\alpha\neq 0$ and $\alpha\neq \frac{1}{2}\frac{\sigma_x}{\sigma_S}$. The same argument applies to the proofs of Propositions~\ref{prop:verifyAlphaZero} and~\ref{prop:verifyAlphaHalfRatio} showing uniqueness also for the cases $\alpha=0$, or $\alpha = \frac{1}{2}\frac{\sigma_x}{\sigma_S}$.
\end{proof}


\pagebreak
\section{Overview of extremal equity strategies} \label{app:overviewExtremal}
To allow a full discussion of the optimization problem, we give here an overview of the complete set of extremal equity strategies. Appendix~\ref{app:coefmatch} covers in detail the case of prime interest, namely the strategies with positive mean-variance trade-off. In this appendix we supplement these results with the extremal equity strategies corresponding to a negative mean-variance trade-off, $\nu>0$. We provide only the results and a few intermediate calculations, which can serve as stepping stones for the dedicated reader interested in reproducing the results. We assume throughout that $\sigma_x>0$ and $\sigma_S>0$, while $x_0$ and $\bar{x}$ can have any real value.

From Lemma~\ref{lemma:OptEquityODE} we know that the extremal equity strategies satisfy the nonhomogeneous, second-order differential equation
\begin{equation}\label{eq:OptEquityODEapp}
    A f''_s + C f_s + D = 0 \quad (0 \leq s \leq T),
\end{equation}
where
\begin{equation}
  A  = 1-2\nu, \quad
  C  = 2\nu\left(\alpha-\frac{\sigma_x}{\sigma_S}\right)^2 -\alpha^2,  \quad
  D  = \alpha^2 \frac{\bar{x}}{\sigma_S}.  \label{eq:CoefODEapp}
\end{equation}
There are three different cases, depending on the type of roots to the characteristic polynomial, $Ar^2+C$, of the associated homogeneous differential equation, $A f''_s + C f_s= 0$. The theorem below, which we state without proof, summarizes the results.

\begin{theorem} \label{thm:ExtremalOverview}
Assume independent risk factors ($\rho=0$). For given horizon $T > 0$, the extremal equity strategies are the solutions to condition (\ref{eq:OptEquityIntegral}) of Lemma~\ref{lemma:OptEquityIntegral}. If it exists, the extremal equity strategy, $f$, takes one of three forms.
\begin{description}
  \item[I. Real, distinct roots] If $C<0<A$, or $A<0<C$, then
  \begin{equation}
     f_s = b_0 + b_1e^{c_1 s} + b_2e^{c_2 s}.  \label{eq:solTypeI}
  \end{equation}
  For $\nu=0$, $f_s= \bar{x}/\sigma_S + e^{-\alpha s}(x_0-\bar{x})/\sigma_S$, otherwise $b_0=-D/C$, $c_1=\sqrt{-C/A}$, $c_2=-\sqrt{-C/A}$, while $b_1$ and $b_2$ are given by (\ref{eq:OptimalEquityGenB}) and (\ref{eq:OptimalEquitySpecialB}) of Theorem~\ref{thm:OptimalEquity}.
  \item[II. Complex roots] If $C<0$ and $A<0$, or $C>0$ and $A>0$, then
  \begin{equation}
     f_s = b_0 + b_1 \sin(c s) + b_2 \cos(c s), \label{eq:solTypeII}
  \end{equation}
  where $b_0=-D/C$, and $c=\sqrt{C/A}$. The values of $b_1$ and $b_2$ are given by (\ref{eq:TrigonometricDefb1b2}) of Proposition~\ref{prop:solTrigometric}.
  \item[III. Double roots] If $C=0$ and $A\neq 0$, then
  \begin{equation}
     f_s = b_0 + b_1 s + b_2 s^2, \label{eq:solTypeIII}
  \end{equation}
  where $b_2=-D/(2A)$. The values of $b_0$ and $b_1$ are given by (\ref{eq:QuadraticDefb0b1}) of Proposition~\ref{prop:solQuadratic}. In the special case $\nu=\alpha=0$, $f_s=x_0/\sigma_S$.
\end{description}
\end{theorem}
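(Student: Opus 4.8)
The plan is to run, without sign restrictions on $\nu$, the same three-step programme used for $\nu<0$ in Appendix~\ref{app:coefmatch}: derive the general form of $f$ from the ODE, substitute it into the integral equation, and match coefficients. The organising principle is the root structure of the characteristic polynomial $Ar^2+C$ of the homogeneous part of (\ref{eq:OptEquityODE}). First I would invoke Lemma~\ref{lemma:OptEquityODE}: any extremal strategy solves $Af''_s+Cf_s+D=0$ on $[0,T]$. For $\nu\neq 1/2$ we have $A\neq 0$, and standard ODE theory gives that the roots of $Ar^2+C$ are real and distinct exactly when $-C/A>0$, i.e.\ $A$ and $C$ have opposite signs (case I); a purely imaginary conjugate pair $\pm i\sqrt{C/A}$ exactly when $C/A>0$, i.e.\ $A$ and $C$ have the same sign (case II); and $0$ a double root exactly when $C=0$, $A\neq 0$ (case III). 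Plugging the constant (resp.\ quadratic) ansatz into the ODE pins the particular solution: $f\equiv -D/C$ in cases I, II, and $b_2=-D/(2A)$ in case III. This already forces $f$ into one of the forms (\ref{eq:solTypeI})--(\ref{eq:solTypeIII}), leaving two free constants in each case. The degenerate cases $\nu=0$ and $\nu=\alpha=0$ are dispatched immediately: there $f=\xi$ is the global maximiser of $\mu_T$, hence extremal, and it is already of the stated form.

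The second step determines the two remaining constants by imposing (\ref{eq:OptEquityIntegral}). In each case I substitute the two-parameter general solution into the left-hand side of (\ref{eq:OptEquityIntegral}) and use the elementary integrations of Lemmas~\ref{lemma:GeneralFormOfh} and~\ref{lemma:GeneralFormOfInth}, adapted to integrands $e^{ks}$, $e^{ks}\sin(cs)$, $e^{ks}\cos(cs)$, or $s^je^{ks}$ against the kernel $e^{-\alpha(s-u)}$. The left-hand side then collapses to a linear combination of finitely many linearly independent functions on $[0,T]$: the constant together with $e^{\pm c_1 s},e^{\pm\alpha s}$ in case I, with $\sin(cs),\cos(cs),e^{\pm\alpha s}$ in case II, and with $s,s^2,e^{\pm\alpha s}$ in case III (assuming for now no homogeneous exponent equals $\pm\alpha$). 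Vanishing of the left-hand side forces every coefficient to vanish; the constant coefficient is killed by $b_0=-D/C$, and the coefficients of the homogeneous modes vanish identically once one checks the appropriate quadratic identity in the homogeneous exponent — the analogue of equation (\ref{eq:divergingQuadratic}) in Lemma~\ref{lemma:DistinctRoots} — which holds precisely because that exponent solves $Ar^2+C=0$ (in case III the role of this identity is played by $C=0$). What is left is a $2\times2$ linear system arising from the coefficients of $e^{\alpha s}$ and $e^{-\alpha s}$; solving it yields the formulas for $(b_1,b_2)$, and uniqueness of the extremal strategy of each type follows from uniqueness of the solution of that system, exactly as in Proposition~\ref{prop:uniqueness}.

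The third step is the resonant and boundary bookkeeping. A homogeneous exponent coincides with $\pm\alpha$ in case I when $\alpha=\tfrac12\sigma_x/\sigma_S$ (where $c_1=\alpha$) or when $\nu=0$ (where $c_1=|\alpha|$), and in cases I and III when $\alpha=0$; in each such situation integrating the resonant mode against $e^{-\alpha(s-u)}$ produces an extra term of the form $s\,e^{\alpha s}$ (or $s$, $s^2$), so the list of basis functions and the residual linear system change. One then re-derives these by hand, exactly as in Propositions~\ref{prop:verifyGenCase}, \ref{prop:verifyAlphaZero}, and~\ref{prop:verifyAlphaHalfRatio}, whose computations cover $\nu<0$ and carry over verbatim to $\nu>0$; this produces the special formulas (\ref{eq:OptimalEquityGenB})--(\ref{eq:OptimalEquitySpecialB}) cited in the statement. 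Finally, for $\nu=1/2$ ($A=0$) the ODE degenerates and, except at isolated parameter values, (\ref{eq:OptEquityIntegral}) has no solution — this is what the phrase ``if it exists'' in the statement refers to, and no strategy of the three forms is asserted there.

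The conceptual content is light: classification of solutions of a constant-coefficient linear ODE followed by coefficient matching. The main obstacle is organisational, not mathematical — one must carry out the substitution into (\ref{eq:OptEquityIntegral}) separately for the trigonometric and the quadratic general solution and in each resonant sub-case, verify in each that the homogeneous-mode coefficients cancel automatically via the relevant algebraic identity, and confirm that the surviving $2\times2$ system is non-singular so that the two free constants are pinned down uniquely. Keeping the six sign patterns of $(A,C)$, the borderline $C=0$, and the special values $\alpha\in\{0,\tfrac12\sigma_x/\sigma_S\}$ (together with $\nu=0$) straight is where the care is required; each individual computation mirrors one already performed in Appendix~\ref{app:coefmatch}.
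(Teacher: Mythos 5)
Your proposal follows essentially the same route as the paper: Theorem~\ref{thm:ExtremalOverview} is stated there without a separate proof, its content being exactly Lemma~\ref{lemma:OptEquityODE} (classification via the roots of $Ar^2+C$, giving the exponential, trigonometric and quadratic ans\"atze with the particular solution $-D/C$ resp.\ $b_2=-D/(2A)$) combined with the coefficient-matching verifications of Appendix~\ref{app:coefmatch} for type I and Propositions~\ref{prop:solTrigometric} and~\ref{prop:solQuadratic} for types II and III, including the special cases $\alpha\in\{0,\tfrac12\sigma_x/\sigma_S\}$ and $\nu=0$ --- precisely your three-step programme. The only minor nuance is that the paper does not assert the residual $2\times 2$ system is always non-singular; it explicitly allows ``singular'' values of $\nu$ (beyond $\nu=1/2$) with no extremal strategy, all absorbed by the theorem's ``if it exists'' clause, which does not affect the validity of your argument.
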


The theorem covers all situations, except $\nu=1/2$ ($A=0$), where generally there is no corresponding extremal strategy, i.e., no solution to (\ref{eq:OptEquityIntegral}) of Lemma~\ref{lemma:OptEquityIntegral}.\footnote{For $\nu=1/2$, the solution to (\ref{eq:OptEquityODEapp}) is $f_s=-D/C$, i.e., a constant not depending on $x_0$, nor on $T$. However, in general this constant cannot solve (\ref{eq:OptEquityIntegral}), since the initial term of (\ref{eq:OptEquityIntegral}) is $\xi_s=[\bar{x}+e^{-\alpha x}(x_0-\bar{x})]/\sigma_S$, while the remaining terms are independent of $x_0$. Thus, apart from very specific cases, e.g.\ $x_0=\bar{x}=0$, there is no solution to (\ref{eq:OptEquityIntegral}) for $\nu=1/2$.} For given model parameters, there are also other "singular" values of $\nu$ for which there exists no corresponding extremal strategy. However, providing a full description of necessary and sufficient conditions is outside the scope of this paper. Except in special cases, Theorem~\ref{thm:ExtremalOverview} gives the (unique) extremal equity strategy for given model parameters and Lagrange multiplier, $\nu$.

Noting that
\begin{equation}
   C= \left(\frac{\sigma_x}{\sigma_S}\right)^2\left[2\nu(\tilde{\alpha}-1)^2- \tilde{\alpha}^2 \right],
\end{equation}
where $\tilde{\alpha} = \alpha/[\sigma_x/\sigma_S]$, we see that the sign of $C$ depends only on $\nu$ and $\tilde{\alpha}$. This observation makes it possible to visualize the domains corresponding to the three cases, cf.~Figure~\ref{fig:domain}. The solid black line is given by $\tilde{\alpha}=[4\nu\pm\sqrt{8\nu}]/[2(2\nu-1)]$ for $\nu\geq 0$; on this line $C=0$ and the solution is of type III. As seen previously, the strategy is always of type I for $\nu<0$, and this is also the case for $\nu=0$ (except for the special case $\nu=\alpha=0$). For $\nu>0$, the situation is more complex with all three solution types occurring.

\begin{figure}[h]
\begin{center}
\includegraphics[height=7.5cm]{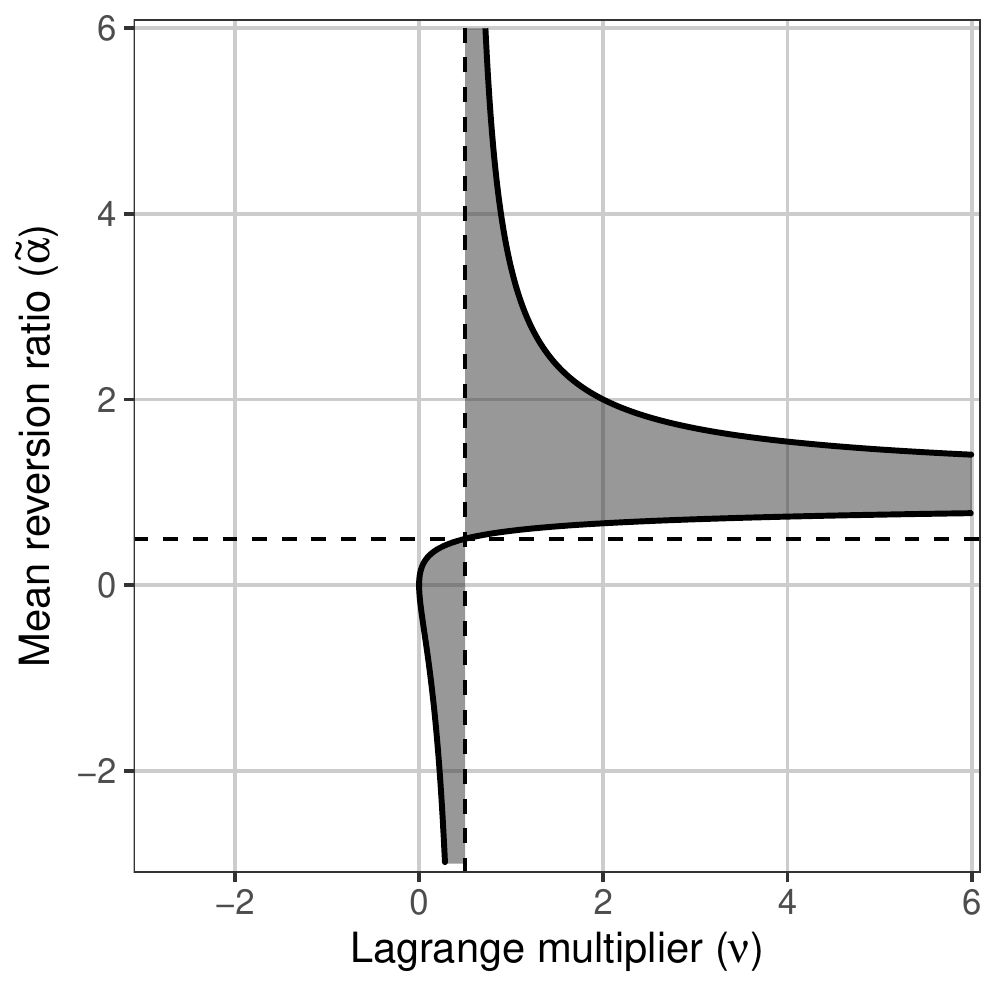}   
\hfill
\includegraphics[height=7.5cm]{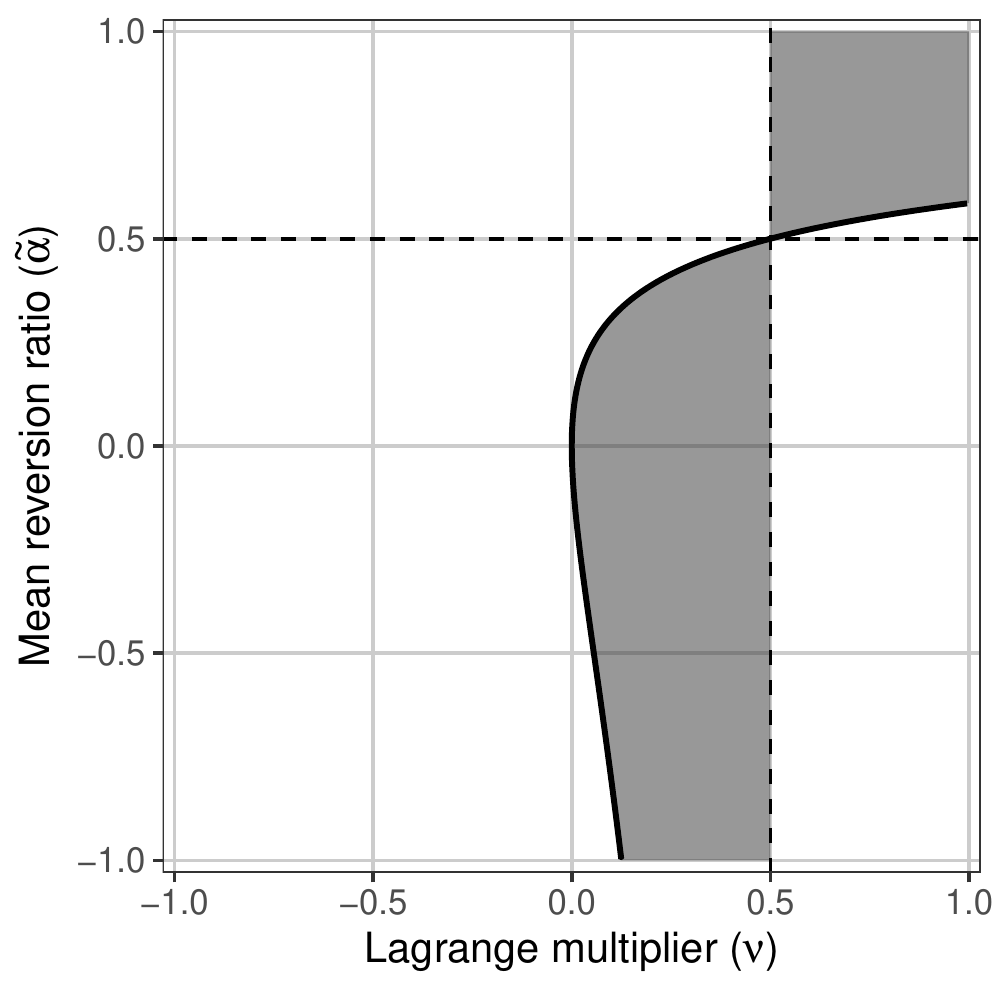}   
\end{center}
\vspace*{-5mm}
\caption{Plots of the parameter domains for solution types I (white area), II (gray area), and III (solid black line). The right plot is a close-up of the left plot. The solution type depends on the signs of $A$ and $C$, which in turn depend on $\nu$ (Lagrange multiplier) and $\tilde{\alpha} = \alpha/[\sigma_x/\sigma_S]$ (mean reversion ratio). The dashed vertical line marks the excluded value $\nu=1/2$; $A$ is positive to the left of this line and negative to the right of this line. The solid black line corresponds to $C=0$ (solution type III), and this line forms the boundary between solutions of type I and type II. The horizontal dashed line marks the special case $\tilde{\alpha}=\frac{1}{2}$ ($\alpha=\frac{1}{2}\frac{\sigma_x}{\sigma_S}$) for which the entire family of extremal strategies is of type I.}
\label{fig:domain}
\end{figure}

By varying $\nu$, but keeping the model parameters fixed, we obtain the family of all extremal strategies under this model. For $\nu<0$, the extremal strategies are optimal with a positive mean-variance trade-off, and for $\nu=0$ we get the global maximum. For $\nu>0$, we find both maximizing strategies, minimizing strategies, and locally extreme strategies, cf.\ Section~\ref{sec:NumOptimalEquity} for further discussion and examples. From Figure~\ref{fig:domain} we see that for $\alpha=\frac{1}{2}\frac{\sigma_x}{\sigma_S}$ (horizontal dashed line), all extremal strategies are of type I, and given by (\ref{eq:OptimalEquitySpecialB}) of Theorem~\ref{thm:OptimalEquity}. This is the only value of $\alpha$ for which all members of the family of extremal strategies are of the same type.

\subsection{Extremal equity strategies of type II and III}
Theorem~\ref{thm:ExtremalOverview} and Figure~\ref{fig:domain} give the full set of extremal strategies and the parameter values for which the different solutions apply. In Appendix~\ref{app:coefmatch}, the exponential solution (type I) was derived in detail for $\nu<0$. We see from Figure~\ref{fig:domain}, that the exponential solution also applies in some cases for $\nu>0$, but we can also have a trigonometric solution (type II), or a quadratic solution (type III). This section contains two propositions deriving the solutions of type II and III.

The method of proof is the same as in Appendix~\ref{app:coefmatch}. Knowing that the extremal strategies solve the ODE (\ref{eq:OptEquityODEapp}), we obtain an ansatz for the form of the solution. Next, we rewrite condition (\ref{eq:OptEquityIntegral}) of Lemma~\ref{lemma:OptEquityIntegral} for this ansatz, and this leads to a set of conditions for the constants appearing in the ansatz. We then verify that the stated strategies satisfy these conditions. This proves that the strategies are indeed extremal strategies. Uniqueness of the extremal strategy, for given $\nu$, could be proved along the same lines as in Proposition~\ref{prop:uniqueness}, but this is left to the reader.

\begin{proposition} \label{prop:solTrigometric}
Let $A$, $C$, and $D$ be given by (\ref{eq:CoefODEapp}). If $C<0$ and $A<0$, or $C>0$ and $A>0$, then condition (\ref{eq:OptEquityIntegral}) of Lemma~\ref{lemma:OptEquityIntegral} is satisfied for
  \begin{equation}
     f_s = b_0 + b_1 \sin(c s) + b_2 \cos(c s), \label{eq:solTrigonometric}
  \end{equation}
where $c=\sqrt{C/A}\neq 0$, $b_0=-D/C$, and
\begin{equation}
  \begin{pmatrix}
    b_1 \\
    b_2
  \end{pmatrix}
  =
  \begin{pmatrix}
    k_1\alpha + k_2 c & k_2\alpha - k_1 c \\[2mm]
    c & \tilde{R}
  \end{pmatrix}^{-1}
  \begin{pmatrix}
    \frac{\alpha\bar{x}}{\sigma_S}[ A^{-1} + \alpha^2 C^{-1}] \\[2mm]
    \frac{x_0}{\sigma_x}[c^2 + \tilde{R}^2]
  + \frac{\alpha\bar{x}}{\sigma_S}[ A^{-1} + \alpha\tilde{R}C^{-1}]
  \end{pmatrix},   \label{eq:TrigonometricDefb1b2}
\end{equation}
with $k_1=\sin(cT)$, $k_2=\cos(cT)$, and $\tilde{R}=\frac{\sigma_x}{\sigma_S}-\alpha$.
\end{proposition}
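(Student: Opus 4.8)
The plan is to follow the template of Appendix~\ref{app:coefmatch}, in particular the proof of Proposition~\ref{prop:verifyGenCase}, with the bookkeeping adapted to a trigonometric ansatz. Since $C$ and $A$ have the same sign, the characteristic polynomial $Ar^2+C$ of the homogeneous equation has purely imaginary roots $\pm i\sqrt{C/A}$, so by Lemma~\ref{lemma:OptEquityODE} the general solution of the ODE~(\ref{eq:OptEquityODEapp}) is $f_s = b_0 + b_1\sin(cs) + b_2\cos(cs)$ with $c=\sqrt{C/A}$ and particular solution $b_0 = -D/C$. It therefore suffices to substitute this form into the integral condition~(\ref{eq:OptEquityIntegral}) and to pin down $b_1$ and $b_2$.

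First I would compute $h_s$ from~(\ref{eq:hEquityDef}). Evaluating $\int_u^T \sin(cs)e^{-\alpha(s-u)}ds$ and $\int_u^T \cos(cs)e^{-\alpha(s-u)}ds$ by elementary antiderivatives, the boundary term at $u$ contributes a combination of $\sin(cu)$ and $\cos(cu)$ while the boundary term at $T$ contributes a multiple of $e^{\alpha u}$; together with the constant term this gives $h_u = e_0 + e_1\sin(cu) + e_2\cos(cu) + e_3 e^{\alpha u}$, where $e_0 = b_0(1-\tfrac{\sigma_x}{\sigma_S\alpha})$ mirrors $d_0$ of Lemma~\ref{lemma:GeneralFormOfh}, the coefficient $e_3$ carries the factor $e^{-\alpha T}$ and depends on $b_0,b_1,b_2$ through $\sin(cT)$ and $\cos(cT)$, and $(e_1,e_2)$ is obtained from $(b_1,b_2)$ by a rotation--scaling map, namely the $2\times 2$ matrix with diagonal entries $\mu = 1-\tfrac{\sigma_x\alpha}{\sigma_S(\alpha^2+c^2)}$ and off-diagonal entries $\pm\beta$, $\beta = \tfrac{\sigma_x c}{\sigma_S(\alpha^2+c^2)}$. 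The iterated integral $\tfrac{\sigma_x}{\sigma_S}\int_0^s h_u e^{-\alpha(s-u)}du$ is then, by the same integrations, of the form $a_0 + a_1\sin(cs) + a_2\cos(cs) + a_3 e^{\alpha s} + a_4 e^{-\alpha s}$, the extra $e^{-\alpha s}$ arising from the lower limit $u=0$. Inserting $f$, $h$ and this integral into the left-hand side of~(\ref{eq:OptEquityIntegral}) and using $\xi_s = [\bar{x} + e^{-\alpha s}(x_0-\bar{x})]/\sigma_S$ produces a linear combination of $1,\sin(cs),\cos(cs),e^{\alpha s},e^{-\alpha s}$, which are linearly independent for $\alpha\neq 0$ and $c\neq 0$; hence~(\ref{eq:OptEquityIntegral}) holds if and only if all five coefficients vanish.

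The five equations then resolve exactly as in Proposition~\ref{prop:verifyGenCase}. The constant coefficient yields $b_0[2\nu(1-\tfrac{\sigma_x}{\sigma_S\alpha})^2 - 1] = -\bar{x}/\sigma_S$, i.e.\ $b_0 = -D/C$. The $\sin(cs)$ and $\cos(cs)$ coefficients reduce to $b_1[2\nu(\mu^2+\beta^2) - 1] = 0$ and $b_2[2\nu(\mu^2+\beta^2) - 1] = 0$, because the two successive feedback transformations on $(b_1,b_2)$ compose to the scalar matrix $(\mu^2+\beta^2)I$; a short computation gives $\mu^2+\beta^2 = \bigl(c^2+(\alpha-\tfrac{\sigma_x}{\sigma_S})^2\bigr)/(c^2+\alpha^2)$, so that $2\nu(\mu^2+\beta^2)=1$ is equivalent to $c^2 = C/A$, and these two equations are automatically satisfied --- this is the trigonometric counterpart of Lemma~\ref{lemma:DistinctRoots} and of~(\ref{eq:divergingQuadratic}). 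The $e^{\alpha s}$ coefficient forces $e_3 = 0$ (using that in the type-II regime $\alpha\neq\tfrac{1}{2}\tfrac{\sigma_x}{\sigma_S}$), which after clearing denominators is the first row of the linear system in~(\ref{eq:TrigonometricDefb1b2}) with $k_1=\sin(cT)$, $k_2=\cos(cT)$; and the $e^{-\alpha s}$ coefficient, after collecting the lower-limit terms, gives the second row, in which the coefficients of $b_1$ and $b_2$ collapse to $c$ and $\tilde{R}=\tfrac{\sigma_x}{\sigma_S}-\alpha$. It then remains to check that the $(b_1,b_2)$ displayed in~(\ref{eq:TrigonometricDefb1b2}) solves this $2\times 2$ system, which is where the stated right-hand side with its $A^{-1}$, $C^{-1}$ and $x_0/\sigma_x$ terms comes from.

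The main obstacle is the $\sin$--$\cos$ mixing: unlike in Appendix~\ref{app:coefmatch}, where the exponentials $e^{c_1 s}$ and $e^{c_2 s}$ stay separated and each produces a scalar identity of the type~(\ref{eq:divergingQuadratic}), here the feedback on $(b_1,b_2)$ is a genuine $2\times 2$ rotation--scaling, and the cancellation in the $\sin(cs)$ and $\cos(cs)$ equations rests on recognizing that the relevant composition equals $1/(2\nu)$ times the identity precisely when $c=\sqrt{C/A}$. The remaining care points are the degenerate sub-cases --- $\alpha=0$, where $1,e^{\alpha s},e^{-\alpha s}$ cease to be independent and one argues separately as in Proposition~\ref{prop:verifyAlphaZero}; the fact that $\alpha=\tfrac{1}{2}\tfrac{\sigma_x}{\sigma_S}$ cannot occur under the type-II sign conditions; and the invertibility of the matrix in~(\ref{eq:TrigonometricDefb1b2}), which fails only at the isolated ``singular'' values of $\nu$ at which no extremal strategy exists. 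Uniqueness, if wanted, follows verbatim from the argument of Proposition~\ref{prop:uniqueness}.
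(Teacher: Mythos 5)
Your proposal is correct and follows essentially the same route as the paper: the same trigonometric ansatz justified by Lemma~\ref{lemma:OptEquityODE}, the same expansion of the left-hand side of (\ref{eq:OptEquityIntegral}) in $1$, $\sin(cs)$, $\cos(cs)$, $e^{\alpha s}$, $e^{-\alpha s}$, the same observation that the $\sin$/$\cos$ coefficients vanish because $2\nu\bigl[(1-\tfrac{\sigma_x}{\sigma_S}\tfrac{\alpha}{\alpha^2+c^2})^2+(\tfrac{\sigma_x}{\sigma_S}\tfrac{c}{\alpha^2+c^2})^2\bigr]=1$ exactly when $c^2=C/A$, and the same reduction of the $e^{\pm\alpha s}$ coefficients to the $2\times 2$ system (\ref{eq:TrigonometricDefb1b2}), with the special cases $\nu\neq 0$, $\alpha\neq\tfrac{1}{2}\tfrac{\sigma_x}{\sigma_S}$, and $\alpha=0$ handled as in the paper. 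The level of detail you leave to verification matches the paper's own sketch, so no gap to report.
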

\begin{proof}
We will only give a sketch of the proof. Note that by assumption $C$ and $A$ are both non-zero and of equal sign. We are thus in the gray area of Figure~\ref{fig:domain}, and we see that this implies that $\nu\neq 0$ and $\alpha \neq \frac{1}{2}\frac{\sigma_x}{\sigma_S}$.

Assume first that $\alpha\neq 0$. By computations similar to those of Lemmas~\ref{lemma:GeneralFormOfh} and~\ref{lemma:GeneralFormOfInth} we obtain expressions for $h$ and $\int h$; the computations are left to the reader. Using these expressions we recast the left-hand side of condition (\ref{eq:OptEquityIntegral}) as
\begin{align}
  \xi_s - f_s & + 2\nu h_s - 2\nu \frac{\sigma_x}{\sigma_S}\int_0^s h_u e^{-\alpha(s-u)}du \nonumber \\[2mm]
    =\ & [\bar{x}+ e^{-\alpha s}(x_0 - \bar{x})]/\sigma_S - b_0 - b_1\sin(cs) - b_2\cos(cs) + \nonumber \\[2mm]
     & 2\nu\left( d_0 + d_1 \sin(cs) + d_2 \cos(cs) + d_3 e^{\alpha s} \right) - \nonumber \\[2mm]
     & 2\nu\left(a_0 + a_1 \sin(cs) + a_2\cos(cs) + a_3 e^{\alpha s} + a_4 e^{-\alpha s} \right) \nonumber \\[2mm]
    =\ & \left[ \bar{x}/\sigma_S-b_0 + 2\nu(d_0-a_0)\right] + \label{eq:TrigDefiningEquationI}  \\[2mm]
     & \left[-b_1 + 2\nu(d_1-a_1) \right]\sin(cs) + \left[-b_2 + 2\nu(d_2-a_2) \right]\cos(cs) + \label{eq:TrigDefiningEquationII} \\[2mm]
     & \left[ 2\nu(d_3-a_3) \right]e^{\alpha s} + \left[ (x_0-\bar{x})/\sigma_S - 2\nu a_4 \right]e^{-\alpha s}. \label{eq:TrigDefiningEquationIII}
\end{align}
where
\begin{align*}
  d_0 & = b_0\left[ 1 - \frac{\sigma_x}{\sigma_S\alpha}\right], \quad
    d_1 = b_1 - \frac{\sigma_x}{\sigma_S}\frac{\alpha b_1 - c b_2}{\alpha^2+c^2},  \quad
    d_2 = b_2 - \frac{\sigma_x}{\sigma_S}\frac{c b_1 + \alpha b_2}{\alpha^2+c^2}, \\[2mm]
  d_3 & = e^{-\alpha T}\frac{\sigma_x}{\sigma_S}\left[\frac{b_0}{\alpha} + k_1 \frac{\alpha b_1 - c b_2}{\alpha^2+c^2} + k_2 \frac{c b_1 + \alpha b_2}{\alpha^2+c^2}  \right], \\[2mm]
  a_0 & = \frac{\sigma_x}{\sigma_S}\frac{d_0}{\alpha}, \quad
  a_1 = \frac{\sigma_x}{\sigma_S}\frac{\alpha d_1 + c d_2}{\alpha^2+c^2}, \quad
  a_2 = \frac{\sigma_x}{\sigma_S}\frac{\alpha d_2-c d_1}{\alpha^2+c^2}, \quad
  a_3 = \frac{\sigma_x}{\sigma_S}\frac{d_3}{2\alpha}, \\[2mm]
  a_4 & = \frac{\sigma_x}{\sigma_S}\left[-\frac{d_0}{\alpha} - \frac{d_3}{2\alpha} + \frac{cd_1-\alpha d_2}{\alpha^2+c^2}\right].
\end{align*}
As in the proof of Proposition~\ref{prop:verifyGenCase}, it follows that condition (\ref{eq:OptEquityIntegral}) is satisfied if and only if the five terms in square brackets in (\ref{eq:TrigDefiningEquationI})--(\ref{eq:TrigDefiningEquationIII}) are zero. Note that these terms are formally equivalent to the ones of Proposition~\ref{prop:verifyGenCase}, but with different values of the constants.

The initial observation that $\nu\neq 0$ and $\alpha \neq \frac{1}{2}\frac{\sigma_x}{\sigma_S}$ implies that the first condition in (\ref{eq:TrigDefiningEquationIII}), $2\nu(d_3-a_3)=0$, is equivalent to $d_3=0$. Using this, we get after a series of simplifications the following, equivalent set of conditions
\begin{align*}
   & b_0 = -\frac{\bar{x}}{\sigma_S}\left[2\nu\left(1-\frac{\sigma_x}{\sigma_S\alpha}\right)^2-1\right]^{-1}, \\[2mm]
   & b_1 = b_1 2 \nu\left[ \left(1-\frac{\sigma_x}{\sigma_S}\frac{\alpha}{\alpha^2+c^2}\right)^2 + \left(\frac{\sigma_x}{\sigma_S}\frac{c}{\alpha^2+c^2}\right)^2 \right], \\[2mm]
   & b_2 = b_2 2 \nu\left[ \left(1-\frac{\sigma_x}{\sigma_S}\frac{\alpha}{\alpha^2+c^2}\right)^2 + \left(\frac{\sigma_x}{\sigma_S}\frac{c}{\alpha^2+c^2}\right)^2 \right], \\[2mm]
   & 0 = \frac{b_0}{\alpha} + k_1\frac{\alpha b_1 - c b_2}{\alpha^2+c^2} + k_2\frac{c b_1 + \alpha b_2}{\alpha^2+c^2}, \\[2mm]
   & 0 = \frac{x_0-\bar{x}}{\sigma_S} + 2\nu\frac{\sigma_x}{\sigma_S}\left[\frac{b_0}{\alpha}\left(1-\frac{\sigma_x}{\sigma_S\alpha}\right) - \frac{b_1 c}{\alpha^2+c^2} + b_2\frac{\alpha - \sigma_x/\sigma_S}{\alpha^2+c^2}\right].
\end{align*}
In the second and third of the above conditions, the term in square brackets is identical and we therefore in reality have only four conditions (corresponding to the number of constants in $f$). Inserting $c^2=C/A$, this term reduces to $1/(2\nu)$ and the two conditions are thereby satisfied. It is also straightforward to verify that the remaining conditions are satisfied for $b_0=-D/C$, and $(b_1,b_2)$ given by (\ref{eq:TrigonometricDefb1b2}). This concludes the proof for $\alpha\neq 0$.

For $\alpha=0$, we get by similar, but somewhat simpler calculations, that $f$ of form (\ref{eq:solTrigonometric}) solves (\ref{eq:OptEquityIntegral}) of Lemma~\ref{lemma:OptEquityIntegral} with $c=\sqrt{C/A}$, $b_0=0$ and
\begin{equation}
\begin{pmatrix}
    b_1 \\
    b_2
\end{pmatrix}
  =
  \begin{pmatrix}
    \cos(c T)c & -\sin(c T)c \\[2mm]
    c & \sigma_x/\sigma_S
  \end{pmatrix}^{-1}
  \begin{pmatrix}
    0 \\[2mm]
    x_0[c^2 + (\sigma_x/\sigma_S)^2]/\sigma_x
  \end{pmatrix}.
\end{equation}
Finally, we note that the solution for $\alpha=0$ is in fact also covered by the general solution, i.e., it is also of the form $b_0=-D/C$ and $(b_1,b_2)$ given by (\ref{eq:TrigonometricDefb1b2}) evaluated at $\alpha=0$. Hence, when stating the result we do not need to distinguish between the two cases.
\end{proof}

Finally, we give the proposition covering strategies of type III. These, quadratic, solutions constitute the boundary between the exponential solutions (type I) and the trigonometric solutions (type II). In the proof, we see that we obtain the same formal set of conditions as seen before, but with different values of the constants. It should be noted that $C=0$ implies $\alpha \neq \sigma_x/\sigma_S$, and (\ref{eq:QuadraticDefb0b1}) of Proposition~\ref{prop:solQuadratic} is therefore well-defined.

\begin{proposition} \label{prop:solQuadratic}
Let $A$, $C$, and $D$ be given by (\ref{eq:CoefODEapp}). If $C=0$ and $A \neq 0$, then condition (\ref{eq:OptEquityIntegral}) of Lemma~\ref{lemma:OptEquityIntegral} is satisfied for
  \begin{equation}
     f_s = b_0 + b_1 s + b_2 s^2, \label{eq:solQuadratic}
  \end{equation}
where $b_2=-D/(2A)$, and
\begin{equation}
  \begin{pmatrix}
    b_0 \\
    b_1
  \end{pmatrix}
  =
  \begin{pmatrix}
    \alpha  & T \alpha +1 \\[2mm]
    \sigma_x/\sigma_S & \sigma_x/(\sigma_x-\sigma_S\alpha)
  \end{pmatrix}^{-1}
  \begin{pmatrix}
    \frac{\bar{x}\alpha}{\sigma_S(1-2\nu)}\left[\frac{1}{2}T^2\alpha^2 + T\alpha + 1\right] \\[2mm]
    \frac{x_0}{\sigma_x}\tilde{R}
  + \frac{\bar{x}\alpha}{\sigma_S(1-2\nu)}\frac{\sigma_x}{\sigma_S}\tilde{R}^{-1}
  \end{pmatrix},   \label{eq:QuadraticDefb0b1}
\end{equation}
with $\tilde{R}=\frac{\sigma_x}{\sigma_S}-\alpha$.
\end{proposition}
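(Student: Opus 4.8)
The plan is to follow the template already used in Appendix~\ref{app:coefmatch} and in the proof of Proposition~\ref{prop:solTrigometric}: use the ODE to fix the functional form, plug the resulting ansatz into the integral condition~(\ref{eq:OptEquityIntegral}), and match coefficients. By Lemma~\ref{lemma:OptEquityODE} any extremal strategy solves $A f_s'' + C f_s + D = 0$; with $C=0$ and $A\neq 0$ this reduces to $f_s'' = -D/A$, whose general solution is the quadratic $f_s = b_0 + b_1 s + b_2 s^2$ with $b_2 = -D/(2A)$. Thus the form of the solution and the value of $b_2$ are forced, and the whole task is to pin down $b_0$ and $b_1$ so that the quadratic ansatz also satisfies the strictly stronger integral condition~(\ref{eq:OptEquityIntegral}) from which the ODE was obtained by differentiation.

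First I would substitute the quadratic ansatz into $h_u = f_u - \frac{\sigma_x}{\sigma_S}\int_u^T f_s e^{-\alpha(s-u)}\,ds$. Evaluating the elementary integrals $\int_u^T s^k e^{-\alpha(s-u)}\,ds$ for $k=0,1,2$ (here $\alpha\neq 0$, since $C=0$ together with $\alpha=0$ would force $\nu=0$, the degenerate case $\nu=\alpha=0$ handled separately in Theorem~\ref{thm:ExtremalOverview}) shows that $h_u$ is a quadratic polynomial in $u$ plus a single exponential term proportional to $e^{\alpha u}$ whose coefficient carries a factor $e^{-\alpha T}$, exactly the structure of Lemma~\ref{lemma:GeneralFormOfh}. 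Next I would compute $\frac{\sigma_x}{\sigma_S}\int_0^s h_u e^{-\alpha(s-u)}\,du$: integrating the quadratic part of $h$ against $e^{-\alpha(s-u)}$ gives a quadratic in $s$ plus an $e^{-\alpha s}$ term, while $\int_0^s e^{\alpha u} e^{-\alpha(s-u)}\,du = (e^{\alpha s}-e^{-\alpha s})/(2\alpha)$ contributes both $e^{\alpha s}$ and $e^{-\alpha s}$; this mirrors Lemma~\ref{lemma:GeneralFormOfInth}.

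Substituting these into the left-hand side of~(\ref{eq:OptEquityIntegral}), and recalling $\xi_s = [\bar{x} + e^{-\alpha s}(x_0-\bar{x})]/\sigma_S$, the whole expression becomes a linear combination of the functionally independent functions $1,\ s,\ s^2,\ e^{\alpha s},\ e^{-\alpha s}$, so the condition holds on $[0,T]$ iff each of the five coefficients vanishes. As in the proof of Proposition~\ref{prop:verifyGenCase}, where the exponent conditions~(\ref{eq:c1defining})--(\ref{eq:c2defining}) are automatic for $c_1,c_2=\pm\sqrt{-C/A}$, here the $s^2$-coefficient equation is automatically satisfied because $C=0$ and $b_2=-D/(2A)$, and a second equation (the $e^{\alpha s}$ one) collapses for the same reason. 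Using $C=0$, equivalently $2\nu(\tilde{\alpha}-1)^2 = \tilde{\alpha}^2$, one then simplifies the remaining three equations to two independent linear equations in $(b_0,b_1)$, which I would check coincide with the system~(\ref{eq:QuadraticDefb0b1}); note $C=0$ forces $\tilde{\alpha}\neq 1$, so $\tilde{R}=\sigma_x/\sigma_S-\alpha\neq 0$ and every entry is well defined. Verifying that the $2\times 2$ coefficient matrix is invertible (away from the isolated singular values of $\nu$ flagged after Theorem~\ref{thm:ExtremalOverview}) gives the unique $(b_0,b_1)$; uniqueness of the extremal strategy then follows as in Proposition~\ref{prop:uniqueness}, since the ODE already forces any solution into the quadratic form.

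The main obstacle is the bookkeeping in carrying the quadratic coefficients of $f$, $h$ and $\int h$ through correctly, and — most delicately — confirming that the two "automatic" simplifications really do collapse two of the five coefficient equations purely by virtue of the identity $C=0$; this is the exact analogue of the exponent-absorption step in Appendix~\ref{app:coefmatch}, and it is where structure rather than brute force does the work. The remainder is routine algebra to recognize the surviving pair of equations as~(\ref{eq:QuadraticDefb0b1}) and to confirm the coefficient matrix is nonsingular.
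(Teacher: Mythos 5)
Your overall route is the same as the paper's: take the quadratic ansatz forced by Lemma~\ref{lemma:OptEquityODE} with $b_2=-D/(2A)$, compute $h$ and $\frac{\sigma_x}{\sigma_S}\int_0^s h_u e^{-\alpha(s-u)}du$ as in Lemmas~\ref{lemma:GeneralFormOfh} and~\ref{lemma:GeneralFormOfInth}, expand the left-hand side of (\ref{eq:OptEquityIntegral}) in the independent functions $1,s,s^2,e^{\alpha s},e^{-\alpha s}$, and kill the five coefficients using $C=0$. However, the structural claim you single out as the key step is wrong. The two coefficient equations that collapse automatically under $C=0$ are the $s$- and $s^2$-equations: writing $q=1-\frac{\sigma_x}{\sigma_S\alpha}$, one finds $d_1-a_1=b_1q^2$ and $d_2-a_2=b_2q^2$, so both equations read $b_i\,(2\nu q^2-1)=0$, and $C=\alpha^2(2\nu q^2-1)=0$ makes them identities for \emph{any} $b_1,b_2$ (no use of $b_2=-D/(2A)$ is needed). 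The $e^{\alpha s}$-equation, by contrast, does \emph{not} collapse: since $C=0$ with $\alpha\neq 0$ forces $\nu\neq 0$, and $A\neq 0$ forces $\alpha\neq\frac{1}{2}\frac{\sigma_x}{\sigma_S}$, that equation is equivalent to $d_3=0$, i.e.\ $b_0+b_1\bigl(T+\tfrac{1}{\alpha}\bigr)+b_2\bigl(T^2+\tfrac{2T}{\alpha}+\tfrac{2}{\alpha^2}\bigr)=0$, which after substituting $b_2=-D/(2A)$ is precisely the first row of the system (\ref{eq:QuadraticDefb0b1}). If you discard it as ``automatic'', the surviving pair (the constant and $e^{-\alpha s}$ equations) is a different linear system, and your outline would either produce coefficients you cannot match to (\ref{eq:QuadraticDefb0b1}) or silently omit a condition that must be verified. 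In the paper the three substantive equations are the constant, $e^{\alpha s}$ and $e^{-\alpha s}$ ones; adding the last two and using $C=0$ yields $b_2=-D/(2A)$, and the $e^{\alpha s}$ and $e^{-\alpha s}$ equations then give the $2\times 2$ system for $(b_0,b_1)$.

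Two smaller points. First, the proposition as stated also covers $\alpha=0$ (then $C=0$ forces $\nu=0$ and $A=1$), and the paper treats this inside the proof by checking that $f_s=x_0/\sigma_S$ solves (\ref{eq:OptEquityIntegral}); deferring it to Theorem~\ref{thm:ExtremalOverview} leaves that corner of the statement unproved. Second, uniqueness and invertibility of the coefficient matrix are not part of what the proposition asserts (it only claims the displayed $f$ satisfies (\ref{eq:OptEquityIntegral})), so that part of your plan is harmless but unnecessary.
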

\begin{proof}
We will only give a sketch of the proof. Assume first that $\alpha\neq 0$. By computations similar to those of Lemmas~\ref{lemma:GeneralFormOfh} and~\ref{lemma:GeneralFormOfInth} we obtain expressions for $h$ and $\int h$; the computations are left to the reader. Using these expressions we recast the left-hand side of condition (\ref{eq:OptEquityIntegral}) as
\begin{align}
  \xi_s - f_s & + 2\nu h_s - 2\nu \frac{\sigma_x}{\sigma_S}\int_0^s h_u e^{-\alpha(s-u)}du \nonumber \\[2mm]
    =\ & [\bar{x}+ e^{-\alpha s}(x_0 - \bar{x})]/\sigma_S - b_0 - b_1 s - b_2 s^2 + \nonumber \\[2mm]
     & 2\nu\left( d_0 + d_1 s + d_2 s^2 + d_3 e^{\alpha s} \right) - \nonumber \\[2mm]
     & 2\nu\left(a_0 + a_1 s + a_2 s^2 + a_3 e^{\alpha s} + a_4 e^{-\alpha s} \right) \nonumber \\[2mm]
    =\ & \left[ \bar{x}/\sigma_S-b_0 + 2\nu(d_0-a_0)\right] + \label{eq:QuadDefiningEquationI}  \\[2mm]
     & \left[-b_1 + 2\nu(d_1-a_1) \right]s + \left[-b_2 + 2\nu(d_2-a_2) \right]s^2 + \label{eq:QuadDefiningEquationII} \\[2mm]
     & \left[ 2\nu(d_3-a_3) \right]e^{\alpha s} + \left[ (x_0-\bar{x})/\sigma_S - 2\nu a_4 \right]e^{-\alpha s}. \label{eq:QuadDefiningEquationIII}
\end{align}
where
\begin{align*}
  d_0 & = b_0\left[ 1 - \frac{\sigma_x}{\sigma_S\alpha}\right] - \frac{\sigma_x}{\sigma_S}\frac{b_1\alpha + 2b_2}{\alpha^3}, \quad
  d_1 = b_1\left[ 1 - \frac{\sigma_x}{\sigma_S\alpha}\right] - \frac{\sigma_x}{\sigma_S}\frac{2 b_2}{\alpha^2},  \\[2mm]
  d_2 & = b_2\left[ 1 - \frac{\sigma_x}{\sigma_S\alpha}\right], \quad
  d_3 = e^{-\alpha T}\frac{\sigma_x}{\sigma_S\alpha}\left[b_0 + b_1\left(T+\frac{1}{\alpha}\right) + b_2\left(T^2 + \frac{2T}{\alpha} + \frac{2}{\alpha^2}\right) \right], \\[2mm]
  a_0 & = \frac{\sigma_x}{\sigma_S}\left[\frac{d_0}{\alpha} - \frac{d_1}{\alpha^2} + \frac{2d_2}{\alpha^3}\right], \quad
  a_1 = \frac{\sigma_x}{\sigma_S}\left[\frac{d_1}{\alpha} - \frac{2d_2}{\alpha^2}\right] , \quad
  a_2 = \frac{\sigma_x}{\sigma_S}\frac{d_2}{\alpha}, \quad
  a_3 = \frac{\sigma_x}{\sigma_S}\frac{d_3}{2\alpha}, \\[2mm]
  a_4 & = \frac{\sigma_x}{\sigma_S}\left[-\frac{d_0}{\alpha} - \frac{d_3}{2\alpha} + \frac{d_1}{\alpha^2} - \frac{2d_2}{\alpha^3}\right].
\end{align*}
As in the proof of Proposition~\ref{prop:verifyGenCase}, it follows that condition (\ref{eq:OptEquityIntegral}) is satisfied if and only if the five terms in square brackets in (\ref{eq:QuadDefiningEquationI})--(\ref{eq:QuadDefiningEquationIII}) are zero. It is easy to verify that $C=0$ implies that the two terms in square brackets in (\ref{eq:QuadDefiningEquationI})
are both zero. This reduces the number of conditions to three, corresponding to the number of constants in (\ref{eq:solQuadratic}).

By assumption $\nu\neq 0$ (as a consequence of $\alpha\neq 0$) and $\alpha \neq \frac{1}{2}\frac{\sigma_x}{\sigma_S}$ (as a consequence of $\nu\neq 1/2$), cf.\ Figure~\ref{fig:domain}. This implies that the first condition in (\ref{eq:TrigDefiningEquationIII}), $2\nu(d_3-a_3)=0$, is equivalent to $d_3=0$. Using this, we get after a series of simplifications the following, equivalent set of conditions
\begin{align*}
   & 0 = b_0 + b_1\left(T+\frac{1}{\alpha}\right) + b_2\left(T^2 + \frac{2T}{\alpha} + \frac{2}{\alpha^2}\right), \\[2mm]
   & 0 = \frac{x_0-\bar{x}}{\sigma_S}+2\nu\frac{\sigma_x}{\sigma_S\alpha}\left[b_0\left(1-\frac{\sigma_x}{\sigma_S\alpha}\right) - \frac{b_1}{\alpha} + \frac{2b_2}{\alpha^2}\left(1-\frac{\sigma_x}{\sigma_S\alpha}\right)\right], \\[2mm]
   & 0 = \frac{x_0}{\sigma_S}- b_0 + 2\nu\left[b_0\left(1-\frac{\sigma_x}{\sigma_S\alpha}\right)-\frac{\sigma_x}{\sigma_S\alpha}\frac{b_1}{\alpha} - \frac{\sigma_x}{\sigma_S\alpha}\frac{2b_2}{\alpha^2}\right].
\end{align*}
Adding the last two equations and using $C=0$, we get $b_2=-\alpha^2\bar{x}/[\sigma_S2(1-2\nu)]=-D/(2A)$. Substituting $b_2$ with $-D/(2A)$, the first two equations above become
\begin{align*}
    b_0+ b_1\left(T+\frac{1}{\alpha}\right) & = \frac{\bar{x}\alpha^2}{\sigma_S 2(1-2\nu)}\left(T^2 + \frac{2T}{\alpha} + \frac{2}{\alpha^2}\right), \\[2mm]
    b_0\left[2\nu\tilde{R}+\alpha\right] + b_1 2\nu\frac{\sigma_x}{\sigma_S\alpha} & = \frac{x_0\alpha}{\sigma_S} + \frac{\bar{x}2\nu}{\sigma_S(1-2\nu)}\frac{\sigma_x}{\sigma_S},
\end{align*}
where we have used the identify $2\nu\frac{\sigma_x}{\sigma_S\alpha}(1-\frac{\sigma_x}{\sigma_S\alpha})=2\nu(1-\frac{\sigma_x}{\sigma_S\alpha})-1$, due to $C=0$. Using again $C=0$, we rewrite terms containing $2\nu$, e.g., $2\nu\tilde{R}+\alpha=\alpha\frac{\sigma_x}{\sigma_S}\tilde{R}^{-1}$, to get
\begin{align*}
    b_0\alpha + b_1\left(T\alpha+1\right) & = \frac{\bar{x}\alpha}{\sigma_S(1-2\nu)}\left(T^2\alpha^2/2 + T\alpha+1\right), \\[2mm]
    b_0\frac{\sigma_x}{\sigma_S} + b_1\frac{\sigma_x}{\sigma_x-\sigma_S\alpha} & = \frac{x_0}{\sigma_x}\tilde{R} + \frac{\bar{x}\alpha}{\sigma_S(1-2\nu)}\frac{\sigma_x}{\sigma_S}\tilde{R}^{-1},
\end{align*}
which is satisfied for $(b_0, b_1)$ given by (\ref{eq:QuadraticDefb0b1}). This concludes the proof for $\alpha\neq 0$.

Assume $\alpha=0$. Since $C=0$ implies $\nu=0$, we know that (\ref{eq:OptEquityIntegral}) of Lemma~\ref{lemma:OptEquityIntegral} is satisfied for $f_s$ equal to the expected market price of equity risk, cf.\ Section~\ref{sec:OptEqStrategyPosTradeOff}. Thus, $f_s=x_0/\sigma_S$ solves (\ref{eq:OptEquityIntegral}). Since this is also the solution we obtain from (\ref{eq:solQuadratic}) with $b_2=-D/(2A)$ and $(b_0, b_1)$ given by (\ref{eq:QuadraticDefb0b1}) the proposition is proved.
\end{proof}


\section{Proofs of results in the main text}  \label{app:proofs}
\subsection*{Proof of Theorem~\ref{thm:VTrep}}
\begin{proof}
First, we decompose the exponent of the right-hand side of (\ref{eq:Vintrep}) into the contributions from the two risk sources and a mixed term,
\begin{equation} \label{eq:Vdecomp}
  V_T = V_0\exp\left\{I^r_T + I^S_T - \rho\int_0^T f^r_s f^S_s ds \right\},
\end{equation}
where
\begin{align}
  I^r_T & = \int_0^T r_s ds + \int_0^T f^r_s \lambda^r_s ds - \frac{1}{2}\int_0^T \left(f^r_s\right)^2 ds + \int_0^T f^r_s dW^r_s,  \\[2mm]
  I^S_T & = \int_0^T f^S_s \lambda^S_s ds - \frac{1}{2}\int_0^T \left(f^S_s\right)^2 ds + \int_0^T f^S_s dW^S_s.
\end{align}
Next, we use (\ref{eq:lambdar}) and (\ref{eq:rsol}) to obtain an integral representation of $\lambda^r_s$
\begin{align}
  \lambda^r_s & = \frac{1}{\sigma_r}\left[(\kappa\bar{r}-ab) + (a-\kappa)r_s \right] \nonumber \\[2mm]
              & = \frac{1}{\sigma_r}\left[(\kappa\bar{r}-ab) + (a-\kappa)\left(\bar{r} + e^{-\kappa s}(r_0 - \bar{r}) + \sigma_r \int_0^s e^{-\kappa(s-u)}dW^r_u \right) \right] \nonumber \\[2mm]
              & = \frac{1}{\sigma_r}\left[a(\bar{r}-b)+e^{-\kappa s}(a-\kappa)(r_0 - \bar{r})\right] + (a-\kappa)\int_0^s e^{-\kappa(s-u)}dW^r_u.  \label{eq:lambdarint}
\end{align}
Substituting $\lambda^r_s$ with (\ref{eq:lambdarint}) in the expression for $I^r_T$, and using (\ref{eq:intrsol}), we get
\begin{align}
  I^r_T & = m^0_T + m^r_T + \int_0^T \left[\sigma_r\Psi(\kappa,T-u) + f^r_u\right] dW^r_u + (a-\kappa)\int_0^T \int_0^s f^r_s e^{-\kappa(s-u)}dW^r_u ds. \nonumber \\[2mm]
        & = m^0_T + m^r_T + \int_0^T \left[\sigma_r\Psi(\kappa,T-u) + f^r_u + (a-\kappa)\int_u^T f^r_s e^{-\kappa(s-u)}ds\right] dW^r_u \nonumber \\[2mm]
        & = m^0_T + m^r_T + \int_0^T h^r_u dW^r_u, \label{eq:IrTrep}
\end{align}
where the second equality follows by changing the order of integration and combining the stochastic integrals.

Recalling that $\lambda^S_s = x_s/\sigma_S$, cf.~(\ref{eq:lambdaS}), and using the integral representation (\ref{eq:xsol}), we get
\begin{align}
  I^S_T  & =  \frac{1}{\sigma_S}\int_0^T f^S_s\left(\bar{x} + e^{-\alpha s}(x_0 - \bar{x}) - \sigma_x \int_0^s e^{-\alpha(s-u)}dW^S_u \right)ds - \nonumber \\[2mm]
         &   \quad\quad \frac{1}{2}\int_0^T \left(f^S_s\right)^2 ds + \int_0^T f^S_u dW^S_u. \nonumber \\[2mm]
         &=  m^S_T - \frac{\sigma_x}{\sigma_S}\int_0^T \int_u^T f^S_s e^{-\alpha(s-u)}ds dW^S_u + \int_0^T f^S_u dW^S_u \nonumber \\[2mm]
         & =  m^S_T + \int_0^T h^S_u dW^S_u. \label{eq:ISTrep}
\end{align}
The representation (\ref{eq:Vrep}) now follows from inserting (\ref{eq:IrTrep}) and (\ref{eq:ISTrep}) into (\ref{eq:Vdecomp}). The final claims regarding log-normality follows directly from (\ref{eq:Vrep}), and the variance formula follows from $\sigma_T^2 = \langle \int_0^T h_u^r dW^r_u + \int_0^T h^S_u dW^S_u, \int_0^T h_u^r dW^r_u + \int_0^T h^S_u dW^S_u\rangle$.
\end{proof}

\subsection*{Proof of Theorem~\ref{thm:OptimalBond}}
\begin{proof}
The problem takes the form of an isoperimetric problem with a constraint and it can be solve be calculus of variations, see e.g.\ Chapter 4 of \cite{wei74}. For ease of notation we will skip the superscript $r$ throughout.

We want to find the extremizing function of the functional
\begin{equation}
   I[f] = \int_0^T L(s,f,f') ds,
\end{equation}
with Lagrangian $L(s,f,f')= \lambda f_s - \frac{1}{2}f_s^2$, subject to the constraint that
\begin{equation}
   J[f] = \int_0^T \left(g_s + f_s\right)^2 ds
\end{equation}
possesses a given prescribed value. The constraint is handled by introducing a Lagrange multiplier. We thus introduce the functional
\begin{equation}
   I^*[f] = \int_0^T L^*(s,f,f') ds,
\end{equation}
with Lagrangian $L^*(s,f,f')= \lambda f_s - \frac{1}{2}f_s^2 + \nu \left(g_s + f_s\right)^2$. The constant $\nu$ is the undetermined multiplier whose value remains to be determined by the (variance) constraint.

Since $L^*$ does not depend on $f'$  the associated Euler-Lagrange equation satisfied by the extremizing function is simply
\begin{equation}
   \frac{\partial L^*}{\partial f} = \lambda - f_s + 2\nu\left(g_s + f_s\right) = 0,
\end{equation}
with solution given by (\ref{eq:OptimalBond}). Further, since the second variation equals $2\nu-1$, it follows that the extremizing function is a maximum for $\nu<1/2$ (and a minimum for $\nu>1/2$).
\end{proof}

\subsection*{Proof of Lemma~\ref{lemma:OptEquityIntegral}}  
\begin{proof}
We assume that $f$ is the extremizing function of (\ref{eq:VTmeanequityonly}) for given value of (\ref{eq:VTvarequityonly}), and we introduce a family of comparison functions with respect to which we carry out the extremization. Consider the two-parameter family of functions
\begin{equation}
  F(s) = f_s + \epsilon_1 \eta_1(s) + \epsilon_2 \eta_2(s)
\end{equation}
in which $\eta_1$ and $\eta_2$ are arbitrary continuous functions. Replace $f_s$ by $F(s)$ in (\ref{eq:VTmeanequityonly}) and (\ref{eq:VTvarequityonly}) to form
\begin{align}
  I(\epsilon_1, \epsilon_2) & = \int_0^T \xi_s F(s) - \frac{1}{2}F^2(s) ds, \\[2mm]
  J(\epsilon_1, \epsilon_2) & = \int_0^T \left[F(u) - \frac{\sigma_x}{\sigma_S}\int_u^T F(s) e^{-\alpha(s-u)}ds \right]^2 du.
\end{align}
The parameters $\epsilon_1$ and $\epsilon_2$ are not independent, since $J$ must maintain the prescribed value, $c$, say. By assumption, $J(0,0)=c$ and $I$ is maximized for $\epsilon_1=\epsilon_2=0$ with
respect to $\epsilon_1$ and $\epsilon_2$ satisfying $J(\epsilon_1,\epsilon_2)=c$.

The above construction turns the problem into one of finding conditions for an extremal point of $I$ under the constraint $J=c$. This is solved by a standard application of Lagrange multipliers. Introduce the function
\begin{equation}
  I^*(\epsilon_1,\epsilon_2) = I(\epsilon_1,\epsilon_2) + \nu J(\epsilon_1,\epsilon_2),
\end{equation}
where $\nu$ is the undetermined multiplier to be determined by the (variance) constraint. We must have
\begin{equation}\label{eq:Istarzero}
  \frac{\partial I^*}{\partial \epsilon_1}(0,0) = \frac{\partial I^*}{\partial \epsilon_2}(0,0) = 0,
\end{equation}
from which the integral characterization will follow.

Differentiation of $I$ and $J$ with respect to $\epsilon_i$ yields
\begin{align*}
  \frac{\partial I}{\partial \epsilon_i} & = \int_0^T \xi_s \eta_i(s) - F(s)\eta_i(s) ds = \int_0^T \left[\xi_s - F(s)\right]\eta_i(s) ds, \\[2mm]
  \frac{\partial J}{\partial \epsilon_i} & = \int_0^T 2\left[F(u) - \frac{\sigma_x}{\sigma_S}\int_u^T F(s) e^{-\alpha(s-u)}ds \right]\left(\eta_i(u)- \frac{\sigma_x}{\sigma_S}\int_u^T \eta_i(s) e^{-\alpha(s-u)}ds \right) du,
\end{align*}
and by combining these expressions and recalling the definition of $h$ in (\ref{eq:hEquityDef}) we get
\begin{align}
  \frac{\partial I^*}{\partial \epsilon_i}(0,0) & = \frac{\partial I}{\partial \epsilon_i}(0,0) + \nu \frac{\partial J}{\partial \epsilon_i}(0,0)  \nonumber \\[2mm]
  & = \int_0^T \left[\xi_s - f_s\right]\eta_i(s) ds  + 2\nu \int_0^T h_u \left[\eta_i(u)- \frac{\sigma_x}{\sigma_S}\int_u^T \eta_i(s) e^{-\alpha(s-u)}ds \right] du \nonumber \\[2mm]
  & = \int_0^T \left[\xi_s - f_s + 2\nu h_s\right]\eta_i(s) ds - 2\nu\frac{\sigma_x}{\sigma_S} \int_0^T \int_0^s h_u\eta_i(s) e^{-\alpha(s-u)}  du ds \nonumber \\[2mm]
  & = \int_0^T \left[\xi_s - f_s + 2\nu h_s - 2\nu\frac{\sigma_x}{\sigma_S}\int_0^s h_u e^{-\alpha(s-u)} du   \right]\eta_i(s) ds. \label{eq:Istar}
\end{align}
Because of (\ref{eq:Istarzero}) the expression in (\ref{eq:Istar}) equals 0 for $i=1,2$. Finally, since $\eta_i$ is an arbitrary continuous function it follows from the fundamental lemma of calculus of variations that the expression in square brackets in (\ref{eq:Istar}) is identically zero, i.e., the optimal $f$ satisfies (\ref{eq:OptEquityIntegral}) as claimed.
\end{proof}

\subsection*{Proof of Lemma~\ref{lemma:OptEquityODE}}  
\begin{proof}
For given $T$ and $\nu$, let $G(s)$ denote the left-hand side of (\ref{eq:OptEquityIntegral}). It follows from Lemma~\ref{lemma:OptEquityIntegral} that $G(s)=0$ for $0\leq s\leq T$ and thereby also
\begin{equation}
   G(s) = G'(s) = G''(s) = 0 \quad (0\leq s \leq T).
\end{equation}
The proof consists of carrying out these differentiations and simplifying the expressions.

By swapping the roles of $u$ and $s$ in the definition of $h$ given in (\ref{eq:hEquityDef}), we have
\begin{equation}\label{eq:hAlternative}
  h_s = f_s - \frac{\sigma_x}{\sigma_S}\int_s^T f_u e^{-\alpha(u-s)}du \quad (0\leq s \leq T),
\end{equation}
and thereby
\begin{align}
  h'_s  & = f'_s - \alpha\frac{\sigma_x}{\sigma_S}\int_s^T e^{-\alpha(u-s)}f_udu + \frac{\sigma_x}{\sigma_S}f_s.
\end{align}
Further, we have
\begin{align}
\frac{d}{ds}\left(-2\nu\frac{\sigma_x}{\sigma_S}\int_0^s h_u e^{-\alpha(s-u)}du\right) & = \alpha 2\nu\frac{\sigma_x}{\sigma_S}\int_0^s h_u e^{-\alpha(s-u)}du - 2\nu\frac{\sigma_x}{\sigma_S}h_s, \nonumber \\[2mm]
     & = \alpha\left(\xi_s - f_s + 2\nu h_s \right) - 2\nu\frac{\sigma_x}{\sigma_S}h_s,
\end{align}
where we have used $G(s)=0$ in the last equality. Using the above expressions we get
\begin{align}
  G'(s) & = \xi'_s - f'_s + 2\nu h'_s + \frac{d}{ds}\left(-2\nu\frac{\sigma_x}{\sigma_S}\int_0^s h_u e^{-\alpha(s-u)}du\right) \nonumber \\[2mm]
  & = \xi'_s - f'_s + 2\nu\left(f'_s - \alpha\frac{\sigma_x}{\sigma_S}\int_s^T e^{-\alpha(u-s)}f_udu + \frac{\sigma_x}{\sigma_S}f_s \right)
      + \alpha\left(\xi_s - f_s + 2\nu h_s \right) - 2\nu\frac{\sigma_x}{\sigma_S}h_s  \nonumber \\[2mm]
    & = \xi'_s + \alpha\xi_s + (2\nu-1)f'_s + \alpha(2\nu-1)f_s + 2\nu\frac{\sigma_x}{\sigma_S}\left(\frac{\sigma_x}{\sigma_S}-2\alpha\right)\int_s^T e^{-\alpha(u-s)}f_udu \nonumber \\[2mm]
    & = \frac{\alpha}{\sigma_S}\bar{x} + (2\nu-1)f'_s + \alpha(2\nu-1)f_s + C_0 \int_s^T e^{-\alpha(u-s)}f_udu,  \label{eq:Gprime}
\end{align}
where the last equality uses the definition of $\xi$ in (\ref{eq:meanEquityRiskPremium}), and we let $C_0=2\nu\frac{\sigma_x}{\sigma_S}(\frac{\sigma_x}{\sigma_S}-2\alpha)$. Differentiating $G$ once again yields
\begin{align}
  G''(s) & = (2\nu-1)f''_s + \alpha(2\nu-1)f'_s + C_0\left[\alpha\int_s^T e^{-\alpha(u-s)}f_udu - f_s\right]   \nonumber \\[2mm]
         & = (2\nu-1)f''_s + \alpha(2\nu-1)f'_s - \alpha\left[\frac{\alpha}{\sigma_S}\bar{x} + (2\nu-1)f'_s + \alpha(2\nu-1)f_s \right] - C_0 f_s \nonumber \\[2mm]
         & = (2\nu-1)f''_s -  \left[ C_0 + \alpha^2(2\nu-1) \right] f_s - \frac{\alpha^2}{\sigma_S}\bar{x}, \label{eq:Gprime2}
\end{align}
where, in the second equality, we have used the expression for $G'(s)$ from (\ref{eq:Gprime}) and the fact that $G'(s)=0$ to replace the integral. Since $G''(s)=0$ and thereby also $-G''(s)=0$, (\ref{eq:OptEquityODE}) follows from (\ref{eq:Gprime2}) after realizing $C_0 + \alpha^2(2\nu-1) = 2\nu(\alpha - \sigma_x/\sigma_S)^2 - \alpha^2$.
\end{proof}

\subsection*{Proof of Theorem~\ref{thm:OptimalEquity}}
\begin{proof}
It follows from Propositions~\ref{prop:verifyGenCase}, \ref{prop:verifyAlphaZero}, and~\ref{prop:verifyAlphaHalfRatio} that the stated strategy satisfies (\ref{eq:OptEquityIntegral}) of Lemma~\ref{lemma:OptEquityIntegral} for $\nu<0$, i.e., that it is an optimal strategy with positive mean-variance trade-off. Further, it follows from Proposition~\ref{prop:uniqueness} that there are no other solutions to (\ref{eq:OptEquityIntegral}) for $\nu<0$.
\end{proof}

\subsection*{Proof of Theorem~\ref{thm:OptimalJoint}}
\begin{proof}
Assume that $(f^r,f^S)$ is a pair of functions extremizing (\ref{eq:VTmeanIndependent}) for given value of (\ref{eq:VTvarIndependent}), say, $c$. We introduce a family of comparison functions with respect to which we carry out the extremization. Consider the four-parameter family of functions
\begin{align}
  F^r(s) & = f^r_s + \epsilon^r_1\eta^r_1(s) + \epsilon^r_2\eta^r_2(s), \\[2mm]
  F^S(s) & = f^S_s + \epsilon^S_1\eta^S_1(s) + \epsilon^S_2\eta^S_2(s),
\end{align}
in which the $\eta$'s are arbitrary continuous functions. Define the four functions
\begin{align}
  I^r(\epsilon^r_1,\epsilon^r_2) & = \int_0^T \lambda^r F^r(s) - \frac{1}{2} \left(F^r(s)\right)^2ds, \\[2mm]
  I^S(\epsilon^S_1,\epsilon^S_2) & = \int_0^T \xi_s F^S(s) - \frac{1}{2}\left(F^S(s)\right)^2 ds      \\[2mm]
  J^r(\epsilon^r_1,\epsilon^r_2) & = \int_0^T \left[\sigma_r\Psi(\kappa,T-u) + F^r(u)  \right]^2 du ,   \\[2mm]
  J^S(\epsilon^S_1,\epsilon^S_2) & = \int_0^T \left[F^S(u) - \frac{\sigma_x}{\sigma_S}\int_u^T F^S(s) e^{-\alpha(s-u)}ds \right]^2 du.
\end{align}
Further, let $I$ and $J$ denote the value of (\ref{eq:VTmeanIndependent}) and (\ref{eq:VTvarIndependent}) with $f^r$ and $f^S$ replaced by, respectively, $F^r$ and $F^S$. Then
\begin{align}
  I(\epsilon^r_1,\epsilon^r_2,\epsilon^S_1,\epsilon^S_2)  & = I^r(\epsilon^r_1,\epsilon^r_2) + I^S(\epsilon^S_1,\epsilon^S_2), \\[2mm]
  J(\epsilon^r_1,\epsilon^r_2,\epsilon^S_1,\epsilon^S_2)  & = J^r(\epsilon^r_1,\epsilon^r_2) + J^S(\epsilon^S_1,\epsilon^S_2).
\end{align}
By assumption, $J(0,0,0,0)=c$ and $I$ is (locally) extremized for $\epsilon^r_1 = \epsilon^r_2 = \epsilon^S_1 =\epsilon^S_2 = 0$ with respect to the $\epsilon$'s satisfying $J(\epsilon^r_1,\epsilon^r_2,\epsilon^S_1,\epsilon^S_2)=0$. The extremity of $(f^r,f^S)$ is thus equivalent to $(0,0,0,0)$ being a stationary point of $I$ under the constraint $J=c$ (for all $\eta$'s).
This is solved by a standard application of Lagrange multipliers. Introduce the function
\begin{equation}
   I^*(\epsilon^r_1,\epsilon^r_2,\epsilon^S_1,\epsilon^S_2) =  I(\epsilon^r_1,\epsilon^r_2,\epsilon^S_1,\epsilon^S_2) + \nu J(\epsilon^r_1,\epsilon^r_2,\epsilon^S_1,\epsilon^S_2),
\end{equation}
where $\nu$ is a Lagrange multiplier. We must have
\begin{align}
  \frac{\partial I^*}{\partial \epsilon^r_1}(0,0,0,0) & = \frac{\partial I^r}{\partial \epsilon^r_1}(0,0) + \nu \frac{\partial J^r}{\partial \epsilon^r_1}(0,0)=0, \label{eq:jointIstar1} \\[2mm]
  \frac{\partial I^*}{\partial \epsilon^r_2}(0,0,0,0) & = \frac{\partial I^r}{\partial \epsilon^r_2}(0,0) + \nu \frac{\partial J^r}{\partial \epsilon^r_2}(0,0)=0, \label{eq:jointIstar2} \\[2mm]
  \frac{\partial I^*}{\partial \epsilon^S_1}(0,0,0,0) & = \frac{\partial I^S}{\partial \epsilon^S_1}(0,0) + \nu \frac{\partial J^S}{\partial \epsilon^S_1}(0,0)=0, \label{eq:jointIstar3} \\[2mm]
  \frac{\partial I^*}{\partial \epsilon^S_2}(0,0,0,0) & = \frac{\partial I^S}{\partial \epsilon^S_2}(0,0) + \nu \frac{\partial J^S}{\partial \epsilon^S_2}(0,0)=0. \label{eq:jointIstar4}
\end{align}
Now, we realize that (\ref{eq:jointIstar1})--(\ref{eq:jointIstar2}) are the conditions for $f^r$ being an extremal rate strategy, while (\ref{eq:jointIstar3})--(\ref{eq:jointIstar4}) are the conditions for $f^S$ being an extremal equity strategy, cf.\ the proofs of Theorem~\ref{thm:OptimalBond} and Lemma~\ref{lemma:OptEquityIntegral}. We conclude that $(f^r,f^S)$ is an extremal pair of strategies for Problem~\ref{problem:IndependentRiskFactors} if and only if $f^r$ and $f^S$ are extremal strategies for, respectively, Problems~\ref{problem:RatesOnly} and \ref{problem:EquitiesOnly}, with the same $\nu$.
\end{proof}

\pagebreak
\section{Risk and reward statistics}  \label{app:statistics}
\subsection{Optimal rate strategies}

\begin{table}[ht]
  \centering
\rowcolors{2}{gray!25}{white}
\bgroup
\def\arraystretch{1.3}
\begin{tabular}{C{0.8cm}|L{3.2cm}C{1.1cm}C{1.1cm}C{1.1cm}C{1.1cm}C{1.1cm}C{1.1cm}C{1.1cm}} \hline
  \rowcolor{gray!50}
    $T$     &   \hphantom{median} $\nu$                &   -10    &   -2     & -1    &  -1/2  & -1/4   & -1/16 & 0 \\ \hline
  10 &  median$\left(Y_T\right)$       &   1.019  &   1.076  &   1.120  &   1.165  &   1.198  &   1.223  &   1.226  \\
     &  $\gP(Y_T<1)$                   &   0.267  &   0.283  &   0.297  &   0.316  &   0.335  &   0.361  &   0.375  \\
     &  $\E\left[1-Y_T|Y_T<1\right]$   &   0.018  &   0.074  &   0.121  &   0.176  &   0.227  &   0.291  &   0.322  \\
     &  $\E\left[1-Y_T\right]^+$       &   0.005  &   0.021  &   0.036  &   0.056  &   0.076  &   0.105  &   0.121  \\ \hline
  20 &  median$\left(Y_T\right)$       &   1.033  &   1.132  &   1.211  &   1.295  &   1.359  &   1.406  &   1.412  \\
     &  $\gP(Y_T<1)$                   &   0.209  &   0.227  &   0.244  &   0.267  &   0.290  &   0.322  &   0.339  \\
     &  $\E\left[1-Y_T|Y_T<1\right]$   &   0.022  &   0.089  &   0.144  &   0.208  &   0.268  &   0.340  &   0.374  \\
     &  $\E\left[1-Y_T\right]^+$       &   0.005  &   0.020  &   0.035  &   0.055  &   0.078  &   0.110  &   0.127  \\ \hline
  30 &  median$\left(Y_T\right)$       &   1.044  &   1.181  &   1.293  &   1.415  &   1.509  &   1.579  &   1.588  \\
     &  $\gP(Y_T<1)$                   &   0.174  &   0.193  &   0.211  &   0.235  &   0.261  &   0.297  &   0.315  \\
     &  $\E\left[1-Y_T|Y_T<1\right]$   &   0.024  &   0.097  &   0.157  &   0.226  &   0.290  &   0.368  &   0.404  \\
     &  $\E\left[1-Y_T\right]^+$       &   0.004  &   0.019  &   0.033  &   0.053  &   0.076  &   0.109  &   0.127  \\ \hline
  40 &  median$\left(Y_T\right)$       &   1.054  &   1.228  &   1.372  &   1.533  &   1.660  &   1.756  &   1.768  \\
     &  $\gP(Y_T<1)$                   &   0.149  &   0.168  &   0.187  &   0.212  &   0.238  &   0.277  &   0.297  \\
     &  $\E\left[1-Y_T|Y_T<1\right]$   &   0.026  &   0.104  &   0.167  &   0.240  &   0.307  &   0.387  &   0.424  \\
     &  $\E\left[1-Y_T\right]^+$       &   0.004  &   0.017  &   0.031  &   0.051  &   0.073  &   0.107  &   0.126  \\ \hline
  50 &  median$\left(Y_T\right)$       &   1.065  &   1.274  &   1.453  &   1.656  &   1.819  &   1.944  &   1.960  \\
     &  $\gP(Y_T<1)$                   &   0.129  &   0.148  &   0.167  &   0.192  &   0.220  &   0.260  &   0.281  \\
     &  $\E\left[1-Y_T|Y_T<1\right]$   &   0.027  &   0.108  &   0.174  &   0.250  &   0.319  &   0.402  &   0.441  \\
     &  $\E\left[1-Y_T\right]^+$       &   0.003  &   0.016  &   0.029  &   0.048  &   0.070  &   0.104  &   0.124  \\ \hline
  60 &  median$\left(Y_T\right)$       &   1.075  &   1.321  &   1.537  &   1.787  &   1.990  &   2.148  &   2.168  \\
     &  $\gP(Y_T<1)$                   &   0.112  &   0.131  &   0.150  &   0.175  &   0.203  &   0.245  &   0.267  \\
     &  $\E\left[1-Y_T|Y_T<1\right]$   &   0.028  &   0.113  &   0.181  &   0.259  &   0.330  &   0.415  &   0.454  \\
     &  $\E\left[1-Y_T\right]^+$       &   0.003  &   0.015  &   0.027  &   0.045  &   0.067  &   0.102  &   0.121  \\ \hline
\end{tabular}
\egroup
  \caption{Risk and reward statistics for the stochastic multiplier $Y_T$ in (\ref{eq:VTrateFactorStructure}) for optimal rate strategies on horizons from 10 to 60 years for a {\em moderate} market price of interest rate risk, cf.\ Table~\ref{tab:ratePar} of Section~\ref{sec:NumOptimalRate}.}
  \label{tab:rateStatisticsMod}
\end{table}

\begin{table}[ht]
  \centering
\rowcolors{2}{gray!25}{white}
\bgroup
\def\arraystretch{1.3}
\begin{tabular}{C{0.8cm}|L{3.2cm}C{1.1cm}C{1.1cm}C{1.1cm}C{1.1cm}C{1.1cm}C{1.1cm}C{1.1cm}} \hline
  \rowcolor{gray!50}
    $T$     &   \hphantom{median} $\nu$                &   -10    &   -2     & -1    &  -1/2  & -1/4   & -1/16 & 0 \\ \hline
  10 &  median$\left(Y_T\right)$       &   1.004  &   1.014  &   1.022  &   1.030  &   1.035  &   1.039  &   1.040  \\
     &  $\gP(Y_T<1)$                   &   0.393  &   0.401  &   0.408  &   0.417  &   0.426  &   0.438  &   0.445  \\
     &  $\E\left[1-Y_T|Y_T<1\right]$   &   0.009  &   0.039  &   0.063  &   0.094  &   0.123  &   0.160  &   0.178  \\
     &  $\E\left[1-Y_T\right]^+$       &   0.004  &   0.015  &   0.026  &   0.039  &   0.052  &   0.070  &   0.079  \\ \hline
  20 &  median$\left(Y_T\right)$       &   1.005  &   1.019  &   1.030  &   1.041  &   1.049  &   1.054  &   1.055  \\
     &  $\gP(Y_T<1)$                   &   0.375  &   0.384  &   0.393  &   0.403  &   0.414  &   0.428  &   0.435  \\
     &  $\E\left[1-Y_T|Y_T<1\right]$   &   0.011  &   0.044  &   0.072  &   0.107  &   0.140  &   0.182  &   0.202  \\
     &  $\E\left[1-Y_T\right]^+$       &   0.004  &   0.017  &   0.028  &   0.043  &   0.058  &   0.078  &   0.088  \\ \hline
  30 &  median$\left(Y_T\right)$       &   1.006  &   1.022  &   1.035  &   1.047  &   1.056  &   1.062  &   1.063  \\
     &  $\gP(Y_T<1)$                   &   0.366  &   0.376  &   0.385  &   0.397  &   0.408  &   0.423  &   0.431  \\
     &  $\E\left[1-Y_T|Y_T<1\right]$   &   0.011  &   0.047  &   0.077  &   0.113  &   0.147  &   0.192  &   0.213  \\
     &  $\E\left[1-Y_T\right]^+$       &   0.004  &   0.018  &   0.030  &   0.045  &   0.060  &   0.081  &   0.092  \\ \hline
  40 &  median$\left(Y_T\right)$       &   1.006  &   1.024  &   1.038  &   1.051  &   1.061  &   1.068  &   1.069  \\
     &  $\gP(Y_T<1)$                   &   0.361  &   0.372  &   0.381  &   0.392  &   0.404  &   0.420  &   0.428  \\
     &  $\E\left[1-Y_T|Y_T<1\right]$   &   0.012  &   0.048  &   0.079  &   0.116  &   0.152  &   0.198  &   0.219  \\
     &  $\E\left[1-Y_T\right]^+$       &   0.004  &   0.018  &   0.030  &   0.046  &   0.061  &   0.083  &   0.094  \\ \hline
  50 &  median$\left(Y_T\right)$       &   1.007  &   1.026  &   1.040  &   1.054  &   1.065  &   1.072  &   1.073  \\
     &  $\gP(Y_T<1)$                   &   0.357  &   0.368  &   0.377  &   0.389  &   0.401  &   0.417  &   0.425  \\
     &  $\E\left[1-Y_T|Y_T<1\right]$   &   0.012  &   0.050  &   0.081  &   0.119  &   0.156  &   0.202  &   0.225  \\
     &  $\E\left[1-Y_T\right]^+$       &   0.004  &   0.018  &   0.031  &   0.046  &   0.063  &   0.084  &   0.096  \\ \hline
  60 &  median$\left(Y_T\right)$       &   1.007  &   1.027  &   1.042  &   1.057  &   1.068  &   1.076  &   1.077  \\
     &  $\gP(Y_T<1)$                   &   0.353  &   0.364  &   0.374  &   0.386  &   0.398  &   0.415  &   0.423  \\
     &  $\E\left[1-Y_T|Y_T<1\right]$   &   0.012  &   0.051  &   0.083  &   0.122  &   0.159  &   0.206  &   0.229  \\
     &  $\E\left[1-Y_T\right]^+$       &   0.004  &   0.018  &   0.031  &   0.047  &   0.063  &   0.086  &   0.097  \\ \hline
\end{tabular}
\egroup
  \caption{Risk and reward statistics for the stochastic multiplier $Y_T$ in (\ref{eq:VTrateFactorStructure}) for optimal rate strategies on horizons from 10 to 60 years for a {\em low} market price of interest rate risk, cf.\ Table~\ref{tab:ratePar} of Section~\ref{sec:NumOptimalRate}.}
  \label{tab:rateStatisticsLow}
\end{table}
\clearpage

\subsection{Optimal equity strategies}

\begin{table}[ht]
  \centering
\rowcolors{2}{gray!25}{white}
\bgroup
\def\arraystretch{1.3}
\begin{tabular}{C{0.8cm}|L{3.2cm}C{1.1cm}C{1.1cm}C{1.1cm}C{1.1cm}C{1.1cm}C{1.1cm}C{1.1cm}} \hline
  \rowcolor{gray!50}
    $T$     &   \hphantom{median} $\nu$                &   -10    &   -2     & -1    &  -1/2  & -1/4   & -1/16 & 0 \\ \hline
  10 &  median$\left(Z_T\right)$       &   1.063  &   1.237  &   1.355  &   1.460  &   1.525  &   1.564  &   1.568  \\
     &  $\gP(Z_T<1)$                   &   0.130  &   0.157  &   0.180  &   0.207  &   0.233  &   0.265  &   0.280  \\
     &  $\E\left[1-Z_T|Z_T<1\right]$   &   0.027  &   0.101  &   0.156  &   0.213  &   0.262  &   0.315  &   0.338  \\
     &  $\E\left[1-Z_T\right]^+$       &   0.003  &   0.016  &   0.028  &   0.044  &   0.061  &   0.083  &   0.095  \\ \hline
  20 &  median$\left(Z_T\right)$       &   1.176  &   1.668  &   1.983  &   2.238  &   2.378  &   2.452  &   2.460  \\
     &  $\gP(Z_T<1)$                   &   0.032  &   0.054  &   0.074  &   0.100  &   0.124  &   0.153  &   0.166  \\
     &  $\E\left[1-Z_T|Z_T<1\right]$   &   0.033  &   0.120  &   0.178  &   0.234  &   0.279  &   0.324  &   0.343  \\
     &  $\E\left[1-Z_T\right]^+$       &   0.001  &   0.006  &   0.013  &   0.023  &   0.034  &   0.050  &   0.057  \\ \hline
  30 &  median$\left(Z_T\right)$       &   1.355  &   2.386  &   3.019  &   3.491  &   3.730  &   3.847  &   3.857  \\
     &  $\gP(Z_T<1)$                   &   0.005  &   0.015  &   0.027  &   0.043  &   0.059  &   0.080  &   0.089  \\
     &  $\E\left[1-Z_T|Z_T<1\right]$   &   0.037  &   0.126  &   0.182  &   0.234  &   0.272  &   0.311  &   0.327  \\
     &  $\E\left[1-Z_T\right]^+$       &   0.000  &   0.002  &   0.005  &   0.010  &   0.016  &   0.025  &   0.029  \\ \hline
  40 &  median$\left(Z_T\right)$       &   1.617  &   3.539  &   4.683  &   5.480  &   5.857  &   6.034  &   6.050  \\
     &  $\gP(Z_T<1)$                   &   0.001  &   0.004  &   0.009  &   0.016  &   0.025  &   0.037  &   0.043  \\
     &  $\E\left[1-Z_T|Z_T<1\right]$   &   0.038  &   0.127  &   0.180  &   0.226  &   0.260  &   0.293  &   0.307  \\
     &  $\E\left[1-Z_T\right]^+$       &   0.000  &   0.000  &   0.002  &   0.004  &   0.007  &   0.011  &   0.013  \\ \hline
  50 &  median$\left(Z_T\right)$       &   1.991  &   5.366  &   7.322  &   8.614  &   9.198  &   9.465  &   9.488  \\
     &  $\gP(Z_T<1)$                   &   0.000  &   0.001  &   0.002  &   0.006  &   0.010  &   0.016  &   0.019  \\
     &  $\E\left[1-Z_T|Z_T<1\right]$   &   0.039  &   0.126  &   0.174  &   0.216  &   0.246  &   0.276  &   0.288  \\
     &  $\E\left[1-Z_T\right]^+$       &   0.000  &   0.000  &   0.000  &   0.001  &   0.002  &   0.004  &   0.005  \\ \hline
  60 &  median$\left(Z_T\right)$       &   2.515  &   8.237  &   11.49  &   13.54  &   14.44  &   14.85  &   14.88  \\
     &  $\gP(Z_T<1)$                   &   0.000  &   0.000  &   0.001  &   0.002  &   0.003  &   0.006  &   0.007  \\
     &  $\E\left[1-Z_T|Z_T<1\right]$   &   0.040  &   0.123  &   0.168  &   0.206  &   0.233  &   0.260  &   0.271  \\
     &  $\E\left[1-Z_T\right]^+$       &   0.000  &   0.000  &   0.000  &   0.000  &   0.001  &   0.002  &   0.002  \\ \hline
\end{tabular}
\egroup
  \caption{Risk and reward statistics for the stochastic multiplier $Z_T$ in (\ref{eq:VTrateEquityStructure}) for optimal equity strategies on horizons from 10 to 60 years for a {\em moderate} degree of mean-reversion, cf.\ Table~\ref{tab:equityPar} of Section~\ref{sec:NumOptimalEquity}. The initial risk premium is equal to the stationary mean, $x_0=\bar{x}=0.045$.}
  \label{tab:equityStatisticsMod}
\end{table}

\begin{table}[ht]
  \centering
\rowcolors{2}{gray!25}{white}
\bgroup
\def\arraystretch{1.3}
\begin{tabular}{C{0.8cm}|L{3.2cm}C{1.1cm}C{1.1cm}C{1.1cm}C{1.1cm}C{1.1cm}C{1.1cm}C{1.1cm}} \hline
  \rowcolor{gray!50}
    $T$     &   \hphantom{median} $\nu$                &   -10    &   -2     & -1    &  -1/2  & -1/4   & -1/16 & 0 \\ \hline
  10 &  median$\left(Z_T\right)$       &   1.102  &   1.329  &   1.439  &   1.514  &   1.549  &   1.567  &   1.568  \\
     &  $\gP(Z_T<1)$                   &   0.075  &   0.110  &   0.136  &   0.165  &   0.188  &   0.213  &   0.224  \\
     &  $\E\left[1-Z_T|Z_T<1\right]$   &   0.029  &   0.102  &   0.146  &   0.188  &   0.219  &   0.251  &   0.264  \\
     &  $\E\left[1-Z_T\right]^+$       &   0.002  &   0.011  &   0.020  &   0.031  &   0.041  &   0.053  &   0.059  \\ \hline
  20 &  median$\left(Z_T\right)$       &   1.448  &   2.102  &   2.292  &   2.391  &   2.435  &   2.457  &   2.460  \\
     &  $\gP(Z_T<1)$                   &   0.002  &   0.011  &   0.020  &   0.032  &   0.044  &   0.060  &   0.068  \\
     &  $\E\left[1-Z_T|Z_T<1\right]$   &   0.035  &   0.100  &   0.131  &   0.158  &   0.179  &   0.203  &   0.214  \\
     &  $\E\left[1-Z_T\right]^+$       &   0.000  &   0.001  &   0.003  &   0.005  &   0.008  &   0.012  &   0.015  \\ \hline
  30 &  median$\left(Z_T\right)$       &   2.289  &   3.319  &   3.550  &   3.704  &   3.794  &   3.851  &   3.857  \\
     &  $\gP(Z_T<1)$                   &   0.000  &   0.000  &   0.001  &   0.002  &   0.006  &   0.015  &   0.021  \\
     &  $\E\left[1-Z_T|Z_T<1\right]$   &   0.033  &   0.076  &   0.097  &   0.122  &   0.148  &   0.183  &   0.201  \\
     &  $\E\left[1-Z_T\right]^+$       &   0.000  &   0.000  &   0.000  &   0.000  &   0.001  &   0.003  &   0.004  \\ \hline
  40 &  median$\left(Z_T\right)$       &   3.719  &   4.885  &   5.287  &   5.643  &   5.878  &   6.032  &   6.050  \\
     &  $\gP(Z_T<1)$                   &   0.000  &   0.000  &   0.000  &   0.000  &   0.002  &   0.007  &   0.012  \\
     &  $\E\left[1-Z_T|Z_T<1\right]$   &   0.025  &   0.054  &   0.077  &   0.111  &   0.147  &   0.196  &   0.219  \\
     &  $\E\left[1-Z_T\right]^+$       &   0.000  &   0.000  &   0.000  &   0.000  &   0.000  &   0.001  &   0.003  \\ \hline
  50 &  median$\left(Z_T\right)$       &   5.372  &   6.833  &   7.696  &   8.526  &   9.085  &   9.447  &   9.488  \\
     &  $\gP(Z_T<1)$                   &   0.000  &   0.000  &   0.000  &   0.000  &   0.001  &   0.005  &   0.009  \\
     &  $\E\left[1-Z_T|Z_T<1\right]$   &   0.017  &   0.043  &   0.074  &   0.119  &   0.163  &   0.220  &   0.246  \\
     &  $\E\left[1-Z_T\right]^+$       &   0.000  &   0.000  &   0.000  &   0.000  &   0.000  &   0.001  &   0.002  \\ \hline
  60 &  median$\left(Z_T\right)$       &   6.907  &   9.295  &   11.07  &   12.83  &   14.03  &   14.80  &   14.88  \\
     &  $\gP(Z_T<1)$                   &   0.000  &   0.000  &   0.000  &   0.000  &   0.001  &   0.004  &   0.008  \\
     &  $\E\left[1-Z_T|Z_T<1\right]$   &   0.011  &   0.042  &   0.081  &   0.134  &   0.184  &   0.245  &   0.273  \\
     &  $\E\left[1-Z_T\right]^+$       &   0.000  &   0.000  &   0.000  &   0.000  &   0.000  &   0.001  &   0.002  \\ \hline
\end{tabular}
\egroup
  \caption{Risk and reward statistics for the stochastic multiplier $Z_T$ in (\ref{eq:VTrateEquityStructure}) for optimal equity strategies on horizons from 10 to 60 years for a {\em high} degree of mean-reversion, cf.\ Table~\ref{tab:equityPar} of Section~\ref{sec:NumOptimalEquity}. The initial risk premium is equal to the stationary mean, $x_0=\bar{x}=0.045$.}
  \label{tab:equityStatisticsHigh}
\end{table}

\clearpage
\pagebreak
\addcontentsline{toc}{section}{References}
\bibliographystyle{plainnat}
\bibliography{references}
\end{document}